\newcommand{\R}{\ensuremath{{\rm \mathbb R}}}
\newcommand{\N}{\ensuremath{{\rm \mathbb N}}}
\newcommand{\dest}{\ensuremath{{d}}}
\DeclareMathOperator*{\argmin}{argmin}
\DeclareMathOperator*{\argmax}{argmax}
\newcommand{\SolveSPTG}{\mbox{\bf\ttfamily SolveSPTG}}
\newcommand{\SolvePTG}{\mbox{\bf\ttfamily SolvePTG}}
\newcommand{\NextEventPoint}{\mbox{\bf\ttfamily NextEventPoint}}
\newcommand{\ExtendedDijkstra}{\mbox{\bf\ttfamily ExtendedDijkstra}}
\newcommand{\cost}{\mbox{\rm cost}}
\newcommand{\Plays}{\mbox{\rm Plays}}
\newcommand{\countt}{\mbox{\rm count}}
\newcommand{\StrategyIteration}{\mbox{\bf\ttfamily StrategyIteration}}
\newcommand{\EventPoints}{L}
\newtheorem{theorem}{Theorem}[section]
\newtheorem{lemma}[theorem]{Lemma}
\newtheorem{remark}[theorem]{Remark}
\newtheorem{corollary}[theorem]{Corollary}
\newtheorem{definition}[theorem]{Definition}
\newtheorem{conjecture}[theorem]{Conjecture}
\newenvironment{proof}{{\bf \noindent Proof:\ }}{\hfill$\Box$\medskip}
\begin{document}

\title{A Faster Algorithm for Solving One-Clock \\ Priced Timed Games\thanks{
Work 
supported
by the Sino-Danish Center for the Theory of Interactive Computation,
funded by the Danish National Research Foundation and the National
Science Foundation of China (under the grant 61061130540). The authors acknowledge support from the Center for research in
the Foundations of Electronic Markets (CFEM), supported by the Danish
Strategic Research Council. Thomas
Dueholm Hansen is a recipient of the Google Europe Fellowship in
Game Theory, and this research is supported in part by this
Google Fellowship.}}
\author{
Thomas Dueholm Hansen\thanks{Department of Computer Science,
Aarhus University, Denmark. E-mail:
{\tt \{tdh, rij, pbmiltersen\}@cs.au.dk}.}
\and Rasmus Ibsen-Jensen$^\dagger$
\and Peter Bro Miltersen$^\dagger$
}
\date{}

\maketitle

\begin{abstract}
One-clock priced timed games is a class of two-player, zero-sum,
continuous-time games that was defined and thoroughly studied in
previous works. We show that one-clock priced timed games can be
solved in time $m 12^{n}n^{O(1)}$, where $n$ is the number of states and
$m$ is the number of actions. The best previously known time bound for
solving one-clock priced timed games was
$2^{O(n^2+m)}$, due to Rutkowski. 
For our improvement, we introduce and study a new algorithm for solving
one-clock priced timed games, based on the sweep-line technique from
computational geometry and the strategy iteration paradigm from the
algorithmic theory of Markov decision processes.
As a corollary, we also improve the analysis
of previous algorithms due to Bouyer, Cassez, Fleury, and Larsen; and
Alur, Bernadsky, and Madhusudan.


%

%
\end{abstract}

\section{Introduction}

Priced timed {\em automata} and priced timed {\em games} are classes
of one-player and two-player zero-sum real-time games played on finite
graphs that were defined and thoroughly studied in previous works 
\cite{AD94, BFHLPRV00, ATP01, TMM02, ABM04, BCFL04, BBR05,
  BBM06, BLMR06, JT07, Rutkowski2011}. Synthesizing (near-)optimal strategies for
priced timed games has many practical applications in embedded systems
design; we refer to the cited papers for references.

\smallskip\noindent{\bf Informal description of priced timed games.} Informally (for formal definitions, see the sections below), a priced
timed game is played by two players on a finite directed labeled multi-graph. The
vertices of the graph are called {\em states}, with some states
belonging to Player 1 (or the Minimizer) and the other states belonging to
Player 2 (or the Maximizer). We shall denote by $n$ the total number
of states of the game under consideration and $m$ the total number of
arcs (actions). Player 1
is trying to play the game to termination as cheaply as possible,
while Player 2 is trying to make Player 1 pay as dearly as possible
for playing. At any point in time, some particular state is the {\em
  current} state. The player controlling the state decides when to
leave the current state and which arc to follow when doing so. For
each arc, there is an associated {\em cost}. Each state has an
associated {\em rate} of expense per time unit associated with waiting
in the state. The above setup is further refined by the introduction
of a finite number of {\em clocks} that can informally be thought of
as ``stop watches''. In particular, some arcs may have associated a
{\em reset} event for a clock. If the corresponding transition is
taken, that clock is reset to 0.  Also, an arc may have an associated
clock and time interval. When the arm of the clock is in the interval,
the corresponding transition can be taken; otherwise it can not. With
three or more clocks, the problem of solving priced timed games is
known not to be computable \cite{BBM06}. In this paper, we focus on the computable case of solving
one-clock priced timed games. We shall refer to these as PTGs. We
shall furthermore single out an important, particularly clean, special
case of PTGs. We shall refer to this class as {\em simple priced timed
  games}, SPTGs.  In an SPTG, time runs from 0 to 1, the single clock
is never reset, and there are no restrictions on when transitions may
be taken. A slightly more general class of games was called ``[0,1]-PTGs without resets'' by Bouyer et al. \cite{BLMR06}. 

\smallskip\noindent{\bf Values and strategies.} As is the case in general for two-player zero-sum games, informally, a
priced timed game is said to have a {\em value} $v$ if Player 1 and
Player 2 are both able to guarantee (or approximate arbitrarily well)   a
total cost of $v$ when the game is played. The guarantees are obtained
when players commit to (near-)optimal {\em strategies} when playing
the game. Player 1, who is trying to minimize cost, may
(approximately) guarantee the value from above, while Player 2, who is
trying to maximize cost, may (approximately) guarantee the value from
below. Clearly, in general, the value of a one-clock priced timed game
will be a function $v(q,t)$ of the initial state $q$ and the initial
setting $t$ of the single clock. Bouyer {\em et al.}
\cite{BLMR06} showed that the value $v(q,t)$ exists\footnote{Players
  in general cannot guarantee the value exactly, but only
approximate it arbitrarily well -- one of the particular appealing
aspects of SPTGs is that they {\em do} have exactly optimal
strategies! This is in contrast to both the general case and
[0,1]-PTGs without resets.} and that for any state $q$, the value
function $t \rightarrow v(q,t)$ is a piecewise linear function of
$t$. By {\em solving} a game, we mean computing an explicit
description of all these functions (i.e., lists of their line segments). 
From such an object, near-optimal strategies can be synthesized.

\smallskip\noindent{\bf Example.} Figure~\ref{fig:examplesptg} shows an SPTG with $n = 5$ states. Circles
are controlled by Player 1 and squares are controlled by Player
2. States and actions have been annotated with rates and costs. If no
cost is given for an action it has cost zero. The figure also
includes graphs of the value functions. Actions are shown in
black and gray, and an optimal strategy profile is shown along the
$x$-axis of the value functions by using these colors -- more precisely, it is the optimal strategy found by our algorithm. Waiting is
shown as white.

If both players follow the indicated optimal strategies, then the play
that starts with state 3 as the current state at time 0, is as follows:
\begin{enumerate}
\item At state 3 at time 0, Player 1 waits until time $\frac{1}{3}$ and then changes the current state to state 2.
\item At state 2 at time $\frac{1}{3}$, Player 2 waits until time $\frac{2}{3}$ and then changes the current state to state 4.
\item At state 4 at time $\frac{2}{3}$, Player 1 does not wait, but immediately changes the current state to state 3.
\item At state 3 at time $\frac{2}{3}$, Player 1 waits until time 1 and then changes the current state to state 1.
\item At state 1 at time 1, Player 2 can not wait, and immediately changes the current state to state $\perp$, a special state indicating that play has terminated.
\end{enumerate}

Notice that the play waits in state 3 twice. This may seem like a
counter-intuitive property of a play where the players play
optimally. In fact, the game can be generalized to a family, such that
the game with $n$ states has a state that is visited $O(n)$ times in
some optimal play.

\begin{figure}
\center
\includegraphics{priced_IEEE.4}
\caption{Example of an SPTG, showing value
  functions and an optimal strategy profile.}
\label{fig:examplesptg}
\end{figure}

\subsection{Contributions}The contributions of this paper are the following.
\begin{enumerate}
\item{}
\label{cont-red}{\em A polynomial time Turing-reduction from the
problem of solving general PTGs to the problem of solving
SPTGs}. The best previous result along these lines was a
Turing-reduction from the general case to the case of ``[0,1]-PTGs
without resets'' by Bouyer {\em et al.} \cite{BLMR06}. Our reduction
is a polynomial time reduction reducing solving a general PTG to
solving at most $(n+1) (2m+1)$ SPTGs, while the previous reduction is
an exponential time reduction.
\item{}\label{cont-alg}
{\em A novel algorithm for solving SPTGs, based on very different techniques than previously used to solve PTGs}.  In particular, our algorithm is based on applications of a technique from computational geometry: the {\em sweep-line} technique of Shamos and
Hoey \cite{SH76}, applied to the linear arrangement resulting when the
graphs of all value functions are superimposed in a certain way. Also,
an extension of Dijkstra's algorithm due to Khachiyan {\em et al.}
\cite{KBBEGRZ08} is a component of the algorithm. We believe that an implementation of this algorithm and the reduction could provide an attractive alternative to the current state-of-the-art tools for solving PTGs or various special cases (e.g., such as those of UPPAAL, {\tt
  http://uppaal.org} or HyTech {\tt http://embedded.eecs.berkeley.edu/research/hytech/}), which all seem to be based on a value-iteration based algorithm independently devised by Bouyer, Cassez, Fleury, and
Larsen \cite{BCFL04}; and Alur, Bernadsky, and Madhusudan
\cite{ABM04}. We shall refer to that algorithm as the BCFL-ABM algorithm.
\item{}\label{cont-an}{\em A worst case analysis of our algorithm as well as an improved worst case analysis of the BCFL-ABM algorithm.}  
Interestingly, the analysis of
the algorithms is quite indirect: We analyze a different
algorithm for a subproblem (priced games, see section \ref{sec-priced}), namely the 
{\em strategy iteration} algorithm, also used to solve Markov
decision processes and various other classes of two-player zero-sum games
played on graphs, and relate the analysis of this algorithm to our
algorithm. To summarize the result of the analysis, it is convenient
to introduce the parameter $L = L(G)$ of an SPTG to be the total
number of distinct time coordinates of left endpoints of the linear segments of all value
functions of $G$.  Note that the parameter $L$ is very natural, as $L$ is
a lower bound on the size of the explicit description of these value
functions, i.e., the output of the algorithms under consideration. We show:
\begin{enumerate}
\item{}\label{cont-seg}
For an SPTG $G$, we have that $L(G) \leq \min \{12^n, \prod_{k \in S} (|A_k|+1)\}$, where $S$ is the set of states and $A_k$ the set of actions in state $k$. The best previous bound on $L(G)$ was $2^{O(n^2)}$, due to Rutkowski \cite{Rutkowski2011}. 
\item{}\label{cont-worst}The worst case time complexity of our new algorithm is $O((m+n\log n)L)$. In particular, the algorithm combined with the reduction solves general PTGs in time $m 12^{n}n^{O(1)}$. The best previous worst case bound for any algorithm solving PTGs was $2^{O(n^2+m)}$, due to Rutkowski \cite{Rutkowski2011}, who gave this bound for an alternative algorithm, due to him.
\item{}\label{cont-val}The worst case number of iterations of the BCFL-ABM algorithm is  $\min \{12^n, \prod_{k \in S} (|A_k|+1)\} m \cdot n^{O(1)}$ for general PTGs, significantly improving an analysis of Rutkowsi. (An "iteration" is a natural unit of time, specific to the algorithm -- each iteration may take considerable time, as entire graphs of value functions are manipulated during an iteration).  
\item{}\label{cont-reach}For the special case of PTGs with all rates being $1$ (i.e., all
states are equally expensive to wait in) and all transition costs
being $0$ (i.e., Player 1 wants to minimize the time used), our
algorithm combined with the reduction runs in time $O(nm(\min
(m,n^2)+n\log n))$. This special case is also known as {\em timed
  reachability games}, and it was studied by Jurdzinski and Trivedi
\cite{JT07} who gave an exponential
algorithm. Trivedi~\cite{Trivedi09} also observed that the region abstraction algorithm of Laroussinie, Markey, and Schnoebelen \cite{LMS04} can be used to
solve the problem in polynomial time. Our algorithm and reduction
provides an alternative proof of this result.
\item{}\label{cont-aut}For one-clock priced timed automata (the special case of priced timed
games, where all states belong to Player 1), our algorithm combined
with the reduction
runs in time  $O(m n^3(\min (m,n^2)+n\log n))$. This seems to be the best worst case bound known for solving these. It was previously shown to be in NLOGSPACE by Laroussinie, Markey, and Schnoebelen \cite{LMS04}.
\end{enumerate}
\end{enumerate}
The above bounds hold if we assume a unit-cost Real RAM
model of computation, which is a natural model of computation for the algorithms
considered (that previous analyses also seem to have implicitly assumed). The algorithms can also be analyzed in
Boolean models of computation (such as the log cost integer RAM), as a rational valued input yields a
rational valued output. Bounding the bit length of the numbers computed by straightforward inductive techniques, we find that this no more than squares the above worst case complexity bounds. 
The somewhat tedious analysis
establishing this is not included in this version of the paper.
   
\subsection{History of problem and related research}

Priced timed automata (or weighted timed automata) were first
introduced by Alur, Torre, and Pappas~\cite{ATP01} and Behrmann {\em et al.}~\cite{BFHLPRV00}. They showed that
priced timed automata (viewed as one-player games) can be solved in
exponential time. Even before the introduction of priced timed
automata, a special case was studied by Alur and
Dill~\cite{AD94}. They show this case to be PSPACE-hard even for
automata where all states have rate 1 and all actions cost 0. 
Bouyer, Brihaye, Bruyere, and Raskin~\cite{BBBR07} showed that the
problem of solving priced timed automata is in PSPACE. I.e., the
problem is PSPACE-complete when there is no limit on the number of
clocks. Solving one-clock priced timed automata was shown to be in NLOGSPACE by Laroussinie, Markey, and Schnoebelen \cite{LMS04}.

Bouyer, Cassez, Fleury, and Larsen \cite{BCFL04} and Alur, Bernadsky,
and Madhusudan \cite{ABM04} independently introduced the notion of
priced timed games and also both considered value iteration
algorithms for solving priced timed games. 
Even with only 3 clocks, the existence of optimal strategies becomes
undecidable for priced timed games, 
as shown by Bouyer, Brihaye and Markey~\cite{BBM06}. They improved a
similar result of Brihaye, Bruyere, and Raskin~\cite{BBR05}
for 5 clocks.
Hence, various
special cases have been studied.  For timed reachability games,
Jurdzinski and Trivedi \cite{JT07} showed that the problem of computing optimal
values is 
in EXP, and that the problem is EXP-complete for 2 or more clocks. 

For the case with only one clock the problem becomes computable, as
shown by Brihaye, Bruyre, and Raskin~\cite{BBR05}. Bouyer, Larsen,
Markey, and Rasmussen~\cite{BLMR06} gave an explicit triple exponential
time bound on the complexity of solving this problem. This was further
improved to $2^{O(n^2+m)}$ by Rutkowski~\cite{Rutkowski2011}.

\subsection{Organization of paper}
Our algorithm is most naturally presented in three stages, adding more complications to the model at each stage. First, in section \ref{sec-priced}, we show how the strategy iteration paradigm can be used to solve {\em priced games}, where the temporal aspects of the games are not present. In section \ref{sec-simple}, we show how the algorithm extends to simple priced timed games. Finally, in section \ref{ptg}, we show how solving the general case of one-clock priced-timed games can be reduced to the case of simple priced timed games in polynomial time.

In terms of the list of contributions above,
contribution \ref{cont-red}) is Lemma \ref{lem:ptg time}. The
algorithm of contribution \ref{cont-alg}) is $\SolveSPTG$ of Figure
\ref{fig:solve}. Contribution \ref{cont-seg}) is Theorem
\ref{thm:exp}, contribution \ref{cont-worst}) is Theorem
\ref{thm:time},
contribution \ref{cont-val}) is Theorem \ref{thm:bcfl-abm time}, contribution
\ref{cont-reach}) is Theorem \ref{thm:reach} and contribution
\ref{cont-aut}) is Theorem \ref{thm:aut}.
\section{Priced games}
\label{sec-priced}
In this section, we introduce {\em priced games}.
To accommodate lexicographic utilities which will be necessary for subsequent sections, we shall 
consider priced games with utilities in domains other than $\R$.
In this section, we fix any ordered Abelian group $(\Re,+,-,0,\leq)$ for the set of possible utilities. 
We let $\Re_{\geq 0}$ be the set of non-negative elements in $\Re$. In subsequent sections, we will either
have $\Re = {\R}$ or $\Re = {\R} \times{\R}$ with
lexicographic order. 
In the latter case, we write $(x,y)$ as $x + y\epsilon$, where we informally think of 
$\epsilon$ as an infinitesimal. In addition to utilities in the group $\Re$, we also allow the utility $\infty$ (modeling non-termination).

\smallskip\noindent{\bf Formal definition of priced games.} A {\em priced game} $G$ is given by a finite set of {\em states} $S=[n]=
\{1,\ldots,n\}$, a finite set of {\em actions} $A=[m]=\{1,\ldots,m\}$.
The set $S$ is partitioned into $S_1$ and $S_2$, with $S_i$ being the
set of states belonging to Player $i$. 
Player 1 is also referred to as the \emph{minimizer} and Player 2 is
referred to as the \emph{maximizer}.
The set $A$ is
partitioned into $(A_k)_{k \in S}$, with $A_k$ being the set of
actions available in state $k$. Furthermore, define $A^i = \bigcup_{k \in S_i}
A_k$. Each action $j \in A$ has an
associated non-negative {\em cost} $c_j \in \Re_{\geq 0} \cup \{\infty\}$ and an 
associated {\em destination} $\dest(j) \in S \cup \{\perp\}$, where $\perp$
is a special {\em terminal state}. Note that $G$ can be interpreted as
a directed weighted graph. In fact, a priced game can be viewed as a
single source shortest path problem from the point of view of Player
1, with the exception that an adversary, Player 2, controls some of
the decisions.

\smallskip\noindent{\bf Positional strategies.} A {\em positional strategy} for Player~$i$ is a map $\sigma_i$
of $S_i$ to $A$, with $\sigma_i(k) \in A_k$ for each $k \in S_i$.
A pair of strategies (or \emph{strategy profile}) $\sigma =
(\sigma_1,\sigma_2)$ defines a maximal \emph{path} $P_{k_0,\sigma} =
(k_0,k_1,\dots)$, from each $k_0 \in S \cup \{\perp\}$, possibly
ending at $\perp$, such that $\dest(\sigma(k_i))
= k_{i+1}$ for all $i \ge 0$. Note that $\sigma$ can be naturally
interpreted as a map from $S$ to $A$. Let $\ell(k,\sigma)$ be the
length of $P_{k,\sigma}$. The path $P_{k,\sigma}$ recursively defines a {\em
  payoff} $u(k,\sigma) \in \Re \cup \{\infty\}$, paid by Player 1 to Player 2, as follows:
\[
u(k,\sigma) = \begin{cases}
\infty & \text{if $\ell(k,\sigma) = \infty$}\\
0 & \text{if $k = \perp$}\\
c_{\sigma(k)} + u(\dest(\sigma(k)),\sigma) & \text{otherwise}
\end{cases}
\]
I.e., the payoff is the total cost of the path $P_{k,\sigma}$ from $k$
to the terminal state $\perp$, or $\infty$ if $P_{k,\sigma}$ does not
reach $\perp$.

\smallskip\noindent{\bf Values and optimal strategies.} The {\em lower value} $\underline{v}(k)$ of a state $k$ is defined by
$\underline{v}(k) = \max_{\sigma_2} \min_{\sigma_1} u(k,\sigma_1,
\sigma_2)$. A strategy $\sigma_2$ is called {\em optimal}, if for all
states $k$, we have
$\sigma_2 \in \argmax_{\sigma_2} \min_{\sigma_1} u(k,\sigma_1,
\sigma_2)$. Similarly, the {\em upper value} $\overline{v}(k)$ of a state
$k$ is defined by $\overline{v}(k) = \min_{\sigma_1} \max_{\sigma_2}
u(k, \sigma_1, \sigma_2)$ and a strategy $\sigma_1$ is called optimal
if for all $k$, $\sigma_1 \in \argmin_{\sigma_1} \max_{\sigma_2} u(k,
\sigma_1, \sigma_2)$. Khachiyan {\em et al.}~\cite{KBBEGRZ08}
observed that $\underline{v}(k) = \overline{v}(k)$, i.e., that priced
games have {\em values} $v(k) := \underline{v}(k) = \overline{v}(k)$. They
also showed how to find these values and optimal strategies
efficiently using a
variant of Dijkstra's algorithm. The strategies found were postional, hence implying that {\bf optimal positional strategies always exists}.
The $\ExtendedDijkstra$ algorithm is
shown in Figure~\ref{fig:dijkstra}, with $v$ being the vector of
values. Viewing a priced game as a
single source shortest path problem, it is not surprising that it can
be solved by a Dijkstra-like algorithm. Intuitively, if an arc to be taken by Player 2 would be optimal for Player 1, Player 2 will, if possible, do anything else and, informally, ``delete'' the arc.

\smallskip\noindent{\bf Example.} Figure~\ref{fig:examplepricedgame} shows an example of a priced
game. The round vertices are controlled by Player 1, the minimizer,
and the square vertices are controlled by Player 2, the
maximizer. Bold arrows indicate actions used by a strategy profile
$\sigma$, and dashed arrows indicate unused actions. Actions are labeled
by their cost, except if the cost is zero. Finally, the states have
been annotated by the values. Note that $\sigma$ is an optimal
strategy profile.

\begin{figure}
\center
\includegraphics{priced_IEEE.3}
\caption{Example of a priced game and a strategy profile $\sigma$.}
\label{fig:examplepricedgame}
\end{figure}

\begin{figure}
\begin{center}
\parbox{3.6in}{
\begin{function}[H]

$(v(\perp),v(1),\dots,v(n)) \gets (0,\infty,\dots,\infty)$\;
\While{$S \ne \emptyset$}{
$\displaystyle (k,j) \gets \argmin_{k\in S,j \in A_{k}} c_j + v(d(j))$\;
\If{$k \in S_1$ or $|A_k| = 1$}{
$v(k) \gets c_j + v(d(j))$\;
$\sigma(k) \gets j$\;
$S \gets S \setminus \{k\}$\;
}
\Else{
$A_k \gets A_k \setminus \{j\}$\;
}
}
\Return{$(v,\sigma)$}\;
\caption{\ExtendedDijkstra(\mbox{$G$})}
\end{function}
}
\end{center}
\caption{The $\ExtendedDijkstra$ algorithm of Khachiyan {\em et al.}~\cite{KBBEGRZ08} for
  solving priced games.}
\label{fig:dijkstra}
\end{figure}

\smallskip\noindent{\bf Nash equilibrium.} We say that $\sigma_1$ is a \emph{best response} to $\sigma_2$ if $\sigma_1
\in \argmin_{\sigma_1} u(k,\sigma_1,\sigma_2)$, for all $k \in
S$. Note that $\sigma_1$ is optimal for the game where Player 2 is
restricted to play according to $\sigma_2$.
Similarly, $\sigma_2$ is a \emph{best response} to $\sigma_1$ if
$\sigma_2 \in \argmax_{\sigma_2} u(k,\sigma_1,\sigma_2)$, for all $k
\in S$. A strategy profile $\sigma = (\sigma_1,\sigma_2)$ is a \emph{Nash
equilibrium} if $\sigma_1$ is a best response to $\sigma_2$, and
$\sigma_2$ is a best response to $\sigma_1$.
The following is a standard lemma that establishes the connection between Nash
equilibria and values of zero-sum games.
\begin{lemma}\label{lemma:pgnash}
If $\sigma = (\sigma_1,\sigma_2)$ is a Nash equilibrium, then $v(k) =
u(k,\sigma)$ for all $k \in S$.
\end{lemma}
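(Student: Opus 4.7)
The plan is to prove the two inequalities $u(k,\sigma)\le v(k)$ and $u(k,\sigma)\ge v(k)$ separately, each using one half of the Nash equilibrium condition together with the equality $\underline{v}(k)=\overline{v}(k)=v(k)$ established by Khachiyan \emph{et al.}

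First I would use that $\sigma_1$ is a best response to $\sigma_2$: by definition of best response, for every $k$ we have $u(k,\sigma_1,\sigma_2)=\min_{\sigma_1'}u(k,\sigma_1',\sigma_2)$. Fixing $\sigma_2$ on the outside and minimizing over $\sigma_1'$ only gives a value that is at most the maximum over $\sigma_2'$ of the same quantity, which is exactly $\underline{v}(k)$. Hence $u(k,\sigma)\le \underline{v}(k)=v(k)$. Symmetrically, using that $\sigma_2$ is a best response to $\sigma_1$, we get $u(k,\sigma)=\max_{\sigma_2'}u(k,\sigma_1,\sigma_2')\ge \min_{\sigma_1'}\max_{\sigma_2'}u(k,\sigma_1',\sigma_2')=\overline{v}(k)=v(k)$.

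Combining the two inequalities yields $u(k,\sigma)=v(k)$ for every state $k$, as required. There is really no main obstacle here; the only subtle point is being careful about a potential $\infty$ value (if $P_{k,\sigma}$ does not reach $\perp$), but the chain of inequalities extends naturally to the ordered set $\Re\cup\{\infty\}$, so the argument goes through unchanged in that case. All of this is just unpacking the definitions of ``best response'', ``lower/upper value'', and the min-max inequality, so the proof is essentially a one-line chain of (in)equalities in each direction.
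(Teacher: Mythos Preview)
Your argument is correct: the two chains of inequalities
\[
u(k,\sigma)=\min_{\sigma_1'}u(k,\sigma_1',\sigma_2)\le\max_{\sigma_2'}\min_{\sigma_1'}u(k,\sigma_1',\sigma_2')=\underline{v}(k)
\quad\text{and}\quad
u(k,\sigma)=\max_{\sigma_2'}u(k,\sigma_1,\sigma_2')\ge\min_{\sigma_1'}\max_{\sigma_2'}u(k,\sigma_1',\sigma_2')=\overline{v}(k)
\]
together with $\underline{v}(k)=\overline{v}(k)$ give the result immediately, and the presence of $\infty$ causes no trouble. This is the standard direct proof.

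The paper, however, argues differently. It proves the contrapositive: assuming one of the two strategies, say $\sigma_1$, is \emph{not optimal} (i.e.\ does not realise the upper value at some state $k$), it exhibits a profitable deviation for one of the players by a two-case analysis on whether $u(k,\sigma)\le v(k)$ or $u(k,\sigma)>v(k)$. Thus the paper actually establishes the slightly stronger intermediate fact that every Nash equilibrium consists of optimal strategies, from which $u(k,\sigma)=v(k)$ then follows. Your approach is shorter and purely inequality-chasing; the paper's approach is more operational (it explicitly names the deviating player and the improving strategy) and yields the additional ``Nash $\Rightarrow$ optimal'' conclusion along the way. Either route is perfectly adequate for the lemma as stated.
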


\begin{proof}
Assume that either $\sigma_1$ or $\sigma_2$ is not optimal. We
will show that $(\sigma_1,\sigma_2)$ is not a Nash equilibrium for
play starting in some state of the game. Assume, without loss of
generality, that $\sigma_1$ does not guarantee Player 1 the payoff
$v(k)$ for play starting at $k$. There are two cases.
\begin{itemize}\item{}Case 1: $u(k, \sigma_1, \sigma_2) \leq v(k)$. In
  this case, Player 2 can deviate from $\sigma_2$ to play a best
  response to $\sigma_1$ at state $k$. Since $\sigma_1$ does, by
  assumption, not guarantee Player 1 $v(k)$, this will yield a larger
  payoff than $v(k)$, i.e., the deviation improves payoff for Player 2
  and $(\sigma_1, \sigma_2)$ is therefore not a Nash equilibrium for
  play starting at $k$. 
\item{}Case 2: $u(k, \sigma_1, \sigma_2) > v(k)$. In this case, Player
  1 can deviate to play an optimal strategy $\sigma_1^*$. By
  definition of optimal, this improves his payoff to $v(k)$ and
  $(\sigma_1, \sigma_2)$ is therefore not a Nash equilibrium for play
  starting at $k$. 
\end{itemize}

\end{proof}

\smallskip\noindent{\bf Strategy iteration algorithm.} We shall present a different algorithm for solving priced games,
following the general {\em strategy iteration} pattern~\cite{Howard60}. This algorithm
will be extended to priced timed games in the next sections. The algorithm presented in the next section can also use the variant of Dijkstra previously described, but our analysis of the running time will use this strategy iteration algorithm in non-trivial ways. Let
$\sigma$ be a strategy profile. For each state $k \in S$, we define
the \emph{valuation} $\nu(k,\sigma) = (u(k,\sigma), \ell(k,\sigma))$. I.e.,
the valuation of a state $k$ for strategy profile $\sigma$ is the
payoff for $k$ combined with the length of the path $P_{k,\sigma}$. If
$\nu(k,\sigma) = (\infty,\infty)$ we write $\nu(k,\sigma) =
\infty$. We say that an action $j \in A_k$ from state $k$ is an
\emph{improving switch for Player 1} if:
\[
(c_j+u(\dest(j),\sigma), 1 + \ell(\dest(j),\sigma)) < \nu(k,\sigma)
\]
Where we order pairs lexicographically, with the first component being
most significant. 
I.e., an improving switch for Player 1 either produces a path from $k$
of smaller cost or with the same cost and smaller length. If a path of
smaller cost is produced we say that $j$ is a \emph{strongly improving switch}.
Similarly, $j \in A_k$ is an improving switch for
Player 2 if:
\[
(c_j+u(\dest(j),\sigma), 1 + \ell(\dest(j),\sigma)) > \nu(k,\sigma)
\]
and $j$ is a strongly improving switch for Player 2 if
$c_j+u(\dest(j),\sigma) > u(k,\sigma)$.

The inclusion of the length in the definition of an improving switch is crucial since it should be an improving switch for Player~2 to go from not using a self-loop of cost 0 to using a self-loop of cost 0 (if Player~2 always uses a self-loop in state $k$, then $\perp$ is never reached and thus Player~1 pays $\infty$ to Player~2. That is the best payoff for Player~2). If the length is not included using a self-loop of cost 0 is never an improving switch for either player.

\begin{lemma}\label{lemma:term}
Let $\sigma = (\sigma_1,\sigma_2)$ be a strategy profile such that for both
players $i$ there are no improving switches in $A^i$. Then $\sigma_1$
and $\sigma_2$ are optimal.
\end{lemma}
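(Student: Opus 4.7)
The plan is to show that $\sigma = (\sigma_1,\sigma_2)$ is a Nash equilibrium and then invoke Lemma~\ref{lemma:pgnash}. Once Nash is established, the lemma gives $u(k,\sigma) = v(k)$ for all $k$; since $\sigma_2$ is a best response to $\sigma_1$, we have $\max_{\tau_2} u(k,\sigma_1,\tau_2) = u(k,\sigma) = v(k) = \overline{v}(k)$, so $\sigma_1$ attains the outer $\min$ in the definition of $\overline{v}$ and is therefore optimal; the symmetric argument handles $\sigma_2$.

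To establish Nash, the first step is to recast the hypothesis as Bellman-style inequalities on $u(\cdot,\sigma)$. Extracting the first component of the lex ``no improving switch'' condition yields: for every $k \in S_1$ and $j \in A_k$, $u(k,\sigma) \leq c_j + u(\dest(j),\sigma)$; for every $k \in S_2$ and $j \in A_k$, $u(k,\sigma) \geq c_j + u(\dest(j),\sigma)$; and equality holds in both cases when $j = \sigma(k)$, straight from the recursive definition of $u$.

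For Player~1's best-response property I would fix an arbitrary $\tau_1$ and trace the path $P_{k_0,(\tau_1,\sigma_2)} = (k_0,k_1,\ldots)$ with actions $a_0,a_1,\ldots$. At an $S_1$-state the Bellman $\leq$-inequality applies to $a_i = \tau_1(k_i)$; at an $S_2$-state we have $a_i = \sigma(k_i)$ and the recursion gives equality. In either case $u(k_i,\sigma) \leq c_{a_i} + u(k_{i+1},\sigma)$. Telescoping: if the path reaches $\perp$ in finitely many steps with all costs finite, summing gives $u(k_0,\sigma) \leq u(k_0,\tau_1,\sigma_2)$; otherwise $u(k_0,\tau_1,\sigma_2) = \infty$ and the inequality is trivial.

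The main obstacle is the symmetric argument for Player~2 when $P_{k_0,(\sigma_1,\tau_2)}$ fails to reach $\perp$, because first-component reasoning alone cannot rule out $u(k_0,\sigma) < \infty = u(k_0,\sigma_1,\tau_2)$. This is precisely where the length component of the lex condition must be invoked. The path eventually enters a cycle, and summing the reversed Bellman inequality around this cycle forces every cycle cost to be $0$ and $u(\cdot,\sigma)$ to be constant along the cycle. The first components of the relevant lex inequalities are therefore equal, so the second components must satisfy $\ell(k_i,\sigma) \geq 1 + \ell(k_{i+1},\sigma)$ for every edge on the cycle. If all $\ell$-values on the cycle were finite, telescoping around the cycle would yield $\ell(k_{i^*},\sigma) \geq p + \ell(k_{i^*},\sigma)$ for some period $p \geq 1$, a contradiction; hence every $\ell$ on the cycle equals $\infty$, which forces every $u$ on the cycle to equal $\infty$, and a final first-component telescoping along the initial segment of the path concludes $u(k_0,\sigma) = \infty = u(k_0,\sigma_1,\tau_2)$, completing the argument.
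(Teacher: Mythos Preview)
Your proof is correct and shares the paper's overall plan: show that $(\sigma_1,\sigma_2)$ is a Nash equilibrium and invoke Lemma~\ref{lemma:pgnash}, with the length component of the lexicographic ``no improving switch'' condition handling the problematic infinite-path case for Player~2. The execution differs, however. The paper argues by contradiction: it fixes a best-response deviation $\sigma_i'$, assumes some state where the payoff strictly changes, and then locates a concrete index along the deviating path at which the deviating action is an improving switch for Player~$i$, contradicting the hypothesis; for the infinite Player-2 path it shows directly that the full valuation $\nu(\cdot,\sigma)$ strictly decreases along the cycle, which is impossible. You instead verify the best-response inequalities directly by telescoping the Bellman bounds $u(k_i,\sigma)\lessgtr c_{a_i}+u(k_{i+1},\sigma)$ extracted from the first component of ``no improving switch'', and in the cycle case you first force all cycle costs to $0$ and all $u$-values equal before passing to the second component. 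Your route is a touch more hands-on but entirely self-contained and arguably cleaner on the Player-1 side; the paper's route has the minor advantage of exhibiting the offending improving switch explicitly.
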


\begin{proof}
By Lemma~\ref{lemma:pgnash} it is enough to show that $(\sigma_1,\sigma_2)$ is a Nash
equilibrium for play starting in each state of the game.


Let $\sigma_1'$ be a best response to $\sigma_2$,
and let $\sigma' = (\sigma_1',\sigma_2)$.
Assume, for the sake of contradiction, that there exists a $k_0 \in S$
such that $u(k_0,\sigma') < u(k_0,\sigma)$. Let
$k_i$ be the $i$'th state on the path $P_{k_0,\sigma'}$. I.e.,
$k_{i+1} = \dest(\sigma'(k_i))$.

Either $P_{k_0,\sigma'}$ leads to the terminal state $\perp$, or
$P_{k_0,\sigma'}$ is an infinite path ending in a cycle. The second
case is impossible since that would imply $u(k_0,\sigma') = \infty <
u(k_0,\sigma)$. 

Since $\sigma_1'$ is a best response to $\sigma_2$, we have
$u(k_{i},\sigma') \le u(k_{i},\sigma)$ for all $i$. Also, since $u(\perp,\sigma')
= u(\perp,\sigma) = 0$ and $u(k_0,\sigma') < u(k_0,\sigma)$, when
$P_{k_0,\sigma'}$ leads to the terminal state, there must
exist an index $i$ such that $u(k_{i+1},\sigma')
= u(k_{i+1},\sigma)$ and $u(k_{i},\sigma') < u(k_{i},\sigma)$. Thus,
$\sigma'(k_{i})$ is an improving switch for Player 1, and since
$\sigma(k_{i}) \ne \sigma'(k_{i})$ we have $k_{i} \in S_1$; a
contradiction.

The argument for Player 2 is similar. Let $\sigma_2'$ be a best
response to $\sigma_1$, and $\sigma' =
(\sigma_1,\sigma_2')$. Assume that $u(k_0,\sigma') > u(k_0,\sigma)$ for
some $k_0 \in S$, and let $k_i$ be the $i$'th state along the path
$P_{k_0,\sigma'}$. For the case when $P_{k_0,\sigma'}$ leads to the
terminal state, the argument is the same except with $<$ and $>$
interchanged. 

When $P_{k_0,\sigma'}$ is an infinite path ending in a cycle we must
have $u(k_0,\sigma') = \infty > u(k_0,\sigma)$. I.e., $u(k_i,\sigma)$
is finite for all $i$. Recall that $c_j \ge 0$
for all $j \in A$. For all $k_i \in S_1$, $\sigma(k_i) =
\sigma'(k_i)$, and, hence:
\[
\nu(k_i,\sigma) =
(c_{\sigma(k_i)} +
u(k_{i+1},\sigma),1+\ell(k_{i+1},\sigma)) >
\nu(k_{i+1},\sigma).
\]
On the other hand, for all $k_i \in S_2$, $\sigma'(k_i)$ is not an
improving switch for Player 2, and, hence:
\[
\nu(k_i,\sigma) \ge
(c_{\sigma'(k_i)} +
u(k_{i+1},\sigma),1+\ell(k_{i+1},\sigma)) >
\nu(k_{i+1},\sigma).
\]
Thus, $P_{k_0,\sigma'}$ leads to a cycle, for which the valuations for
$\sigma$ decrease with each step; a contradiction.

\end{proof}

\smallskip\noindent{\bf Improving sets.} Let $B \subseteq A$ be a set of actions such that $|B \cap A_k| \le 1$
for all $k \in S$, and, for $B \cap A_k \ne \emptyset$, let $j(k,B)$ be
the unique action in $B \cap A_k$. Let $\sigma$ be a strategy
profile, and let $\sigma[B]$ be defined as:
\[
\sigma[B](k) := \begin{cases}
j(k,B) & \text{if $B \cap A_k \ne \emptyset$}\\
\sigma(k) & \text{otherwise.}
\end{cases}
\]
If $B = \{j\}$ we also write $\sigma[j]$.
If $j \in A$ is not an improving switch for one player, we say that $j$
is \emph{weakly improving} for the other player.
We say that $B \subseteq A$ is an \emph{improving set for Player $i$}
if there exists an improving switch $j \in B$ for Player $i$, and for all
$j \in B$, $j$ is weakly improving for Player i.

\begin{lemma}\label{lemma:imp}
Let $\sigma = (\sigma_1,\sigma_2)$ be a strategy profile, and let $B
\subseteq A$ be an improving set for Player 1. Then $\nu(k,\sigma[B])
\le \nu(k,\sigma)$ for all $k \in S$, with strict inequality if
$\sigma[B](k_0)$ is an improving switch for Player 1 w.r.t. $\sigma$.
Similarly, if $B$ is an improving set for Player 2, then $\nu(k,\sigma[B])
\ge \nu(k,\sigma)$ for all $k \in S$, with strict inequality if
$\sigma[B](k_0)$ is an improving switch for Player 1
w.r.t. $\sigma$.
\end{lemma}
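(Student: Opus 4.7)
The plan is to fix $\sigma' := \sigma[B]$ and, for each initial state $k_0$, analyze the path $P_{k_0,\sigma'} = (k_0, k_1, k_2, \ldots)$ induced by $\sigma'$, splitting into the case where the path terminates at $\perp$ and the case where it is infinite and hence eventually enters a cycle. The argument for Player~1 is given in detail; the Player~2 half is obtained by reversing inequalities and swapping the two players' roles symmetrically.

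The core fact I would establish first is a one-step inequality: for every state $k$,
\[
(c_{\sigma'(k)} + u(\dest(\sigma'(k)),\sigma),\; 1 + \ell(\dest(\sigma'(k)),\sigma)) \;\leq\; \nu(k,\sigma),
\]
with strict inequality whenever $\sigma'(k)$ is an improving switch for Player~1 with respect to $\sigma$. This is immediate from the definitions: either $\sigma'(k) = \sigma(k)$, giving equality, or $\sigma'(k) = j(k,B) \in B$, in which case the hypothesis that $B$ is an improving set for Player~1 makes $j(k,B)$ weakly improving for Player~1, i.e.\ by definition not improving for Player~2, which is precisely the displayed inequality.

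In the terminating case I would use reverse induction along $P_{k_0,\sigma'}$, with base case $\nu(\perp,\sigma') = (0,0) = \nu(\perp,\sigma)$. For the step at $k_i$, combining the recursive identity $\nu(k_i,\sigma') = (c_{\sigma'(k_i)} + u(k_{i+1},\sigma'),\; 1 + \ell(k_{i+1},\sigma'))$ with the inductive hypothesis $\nu(k_{i+1},\sigma') \leq \nu(k_{i+1},\sigma)$ and the one-step inequality above yields $\nu(k_i,\sigma') \leq \nu(k_i,\sigma)$. A strict drop introduced at some $k_j$ where $\sigma'(k_j)$ is actually improving is preserved under this composition, giving the strictness clause.

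The main obstacle is the infinite-path case, where one has $\nu(k_0,\sigma') = \infty$ and must still conclude $\nu(k_0,\sigma) = \infty$ to obtain the (non-strict) inequality. Let $k_r,\ldots,k_{r+c-1},k_r$ be the cycle eventually reached. Summing the first coordinate of the one-step inequality around the cycle, under the assumption that every $u(k_i,\sigma)$ on the cycle is finite, the nonnegativity of costs forces each $c_{\sigma'(k_i)}$ on the cycle to be zero and each $u(k_i,\sigma)$ to be equal. Looking then at the second coordinate gives $\ell(k_r,\sigma) \geq c + \ell(k_r,\sigma)$, which is a contradiction unless $\ell(k_r,\sigma) = \infty$, but an infinite $\ell$ forces $u(k_r,\sigma) = \infty$, contradicting finiteness. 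Hence $\nu(k_i,\sigma) = \infty$ somewhere on the cycle, and iterating the one-step inequality backward along the prefix $k_0,\ldots,k_r$ propagates this $\infty$ up to $\nu(k_0,\sigma)$. This is precisely where the inclusion of the length component in the valuation (as emphasized right before the lemma) does essential work: without it, zero-cost cycles could not be ruled out.
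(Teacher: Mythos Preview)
Your proof is correct and follows the same structure as the paper's: isolate the one-step weakly-improving inequality, rule out cycles for Player~1 using nonnegativity of costs together with the length component, and then do backward induction along the $\sigma[B]$-path. The only cosmetic difference is that the paper rules out cycles directly by observing $\nu(k_i,\sigma) > \nu(k_{i+1},\sigma)$ along the $\sigma[B]$-path (since $(c+u,\,1+\ell) > (u,\ell)$ lexicographically whenever $c \ge 0$) rather than via your contradiction-by-summation, and for Player~2 the infinite-path case is handled trivially since $\nu(k_0,\sigma[B]) = \infty$ already gives the desired inequality.
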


\begin{proof}
First consider the case where $B$ is an improving set for Player
1. Let $k_0 \in S$. We must show that $\nu(k_0,\sigma[B]) \le
\nu(k_0,\sigma)$ with strict inequality if $\sigma[B](k_0)$ is an
improving switch for Player 1 w.r.t. $\sigma$. This is clearly true if
$\nu(k_0,\sigma) = \infty$. Thus, assume that $\nu(k_0,\sigma) <
\infty$.

Let $k_i$ be the $i$'th state on the path
$P_{k_0,\sigma[B]}$. Since $\sigma[B](k_i)$ is weakly improving for
Player 1 we have, for all $i$:
\begin{equation}\label{eqt1}
(c_{\sigma[B](k_i)} +
u(k_{i+1},\sigma),1+\ell(k_{i+1},\sigma)) \le \nu(k_i,\sigma)
\end{equation}
with strict inequality exactly when $\sigma[B](k_i)$ is an improving
switch for Player 1 w.r.t. $\sigma$.

From (\ref{eqt1}), and the fact that $c_j \ge 0$ for all $j \in A$, we
get that:
\begin{align*}
\nu(k_i,\sigma) &\ge
(c_{\sigma[B](k_i)} +
u(k_{i+1},\sigma),1+\ell(k_{i+1}),\sigma))\\
&>(u(k_{i+1},\sigma),\ell(k_{i+1},\sigma))\\
&=\nu(k_{i+1},\sigma).
\end{align*}
Hence, $P_{k_0,\sigma[B]}$ does not lead to a cycle, since the
valuations in $\sigma$ can not strictly decrease along the entire cycle.

We next show, using backwards induction on $i$, that
$\nu(k_i,\sigma[B])\leq \nu(k_i,\sigma)$. For the base case, $k_i =
\perp$, the statement is clearly true.
Otherwise, for $i<\ell(k_0,\sigma[B])$, we get from (\ref{eqt1})
and the induction hypothesis that:
\begin{align*}\nu(k_i,\sigma) & \geq(c_{\sigma[B](k_i)} +
u(k_{i+1},\sigma),1+\ell(k_{i+1},\sigma))\\ 
& \geq (c_{\sigma[B](k_i)} +
u(k_{i+1},\sigma[B]),1+\ell(k_{i+1},\sigma[B]))\\ 
& =\nu(k_i,\sigma[B]).
\end{align*}
Note that if $j \in A_{k_i}\cap B$ is an improving switch for
Player 1 then the first inequality is strict.

The proof for the second case, where $B$ is an improving set for
Player 2, is similar. Let $k_0\in S$. We show that $\nu(k_0,\sigma[B]) \ge
\nu(k_0,\sigma)$ with strict inequality if $\sigma[B](k_0)$ is an
improving switch for Player 2 w.r.t. $\sigma$. Now, this is
clearly true if $\nu(k_0,\sigma[B]) = \infty$. If $\nu(k_0,\sigma[B])
< \infty$, it immediately follows that $P_{k_0,\sigma[B]}$ is
of finite length. The rest of the proof is identical, but with $<$ and
$>$ interchanged.

\end{proof}

The following corollary is an important consequence of Lemma
\ref{lemma:imp}.

\begin{corollary}\label{cor:imp}
Let $\sigma = (\sigma_1,\sigma_2)$ be a strategy profile for which
Player 2 has no improving switches. Let $B \subseteq A^1$ be an improving set
for Player 1, and let $\sigma' = (\sigma_1[B],\sigma_2')$ where
$\sigma_2'$ is any strategy for Player 2. Then
$\nu(k,\sigma') \le \nu(k,\sigma)$ for all states $k$, with strict
inequality if $\sigma'(k)$ is an improving switch for Player 1.
\end{corollary}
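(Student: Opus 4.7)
The plan is to reduce the corollary to a single application of Lemma \ref{lemma:imp} by packaging the change in Player 2's strategy together with $B$ into one combined improving set for Player 1. Concretely, define $D = \{\sigma_2'(k) : k \in S_2\} \subseteq A^2$. Since $B \subseteq A^1$ and $D \subseteq A^2$ are disjoint, and each contains at most one action per state, the combined set $B \cup D$ satisfies $|(B \cup D) \cap A_k| \le 1$ for every $k \in S$. By construction $\sigma[B \cup D]$ agrees with $\sigma_1[B]$ on $S_1$ and with $\sigma_2'$ on $S_2$, so $\sigma[B \cup D] = \sigma'$.

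Next I would verify that $B \cup D$ is an improving set for Player 1 with respect to $\sigma$. Every action in $B$ is weakly improving for Player 1 because $B$ is itself such a set. Every action in $D$ lies in $A^2$; by hypothesis Player 2 has no improving switches w.r.t.\ $\sigma$, so no action available to Player 2 is an improving switch for Player 2, which (by the paper's definition of weakly improving) means every action in $D$ is weakly improving for Player 1. Moreover, $B$ contains at least one improving switch for Player 1, and therefore so does $B \cup D$. Applying Lemma \ref{lemma:imp} to $\sigma$ with improving set $B \cup D$ then yields $\nu(k, \sigma') = \nu(k, \sigma[B \cup D]) \le \nu(k, \sigma)$ for every $k \in S$, with strict inequality at any $k$ for which $\sigma'(k) = \sigma[B \cup D](k)$ is an improving switch for Player 1 w.r.t.\ $\sigma$.

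The only subtle point, and the one step where some cleverness is required, is the observation that an arbitrary Player 2 strategy $\sigma_2'$ may be absorbed into a single improving set for Player 1. A naive attempt to invoke Lemma \ref{lemma:imp} first with $B$ alone and then handle Player 2's deviation in a second step would fail, because after Player 1 switches strategy Player 2's actions are no longer guaranteed to be weakly improving for Player 1 with respect to the new profile. The hypothesis that Player 2 has no improving switches w.r.t.\ the \emph{original} $\sigma$ is precisely what allows any choice of $\sigma_2'$ to be merged with $B$ and analyzed through Lemma \ref{lemma:imp} in a single shot, with no further work needed.
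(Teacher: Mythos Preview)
Your proof is correct and is essentially identical to the paper's own argument: the paper likewise observes that, since Player~2 has no improving switches with respect to $\sigma$, every Player~2 action is weakly improving for Player~1, forms the combined set $B \cup \sigma_2'$ (your $B \cup D$), and applies Lemma~\ref{lemma:imp} once. Your extra paragraph explaining why a two-step application of the lemma would fail is a nice clarification, but the core reasoning matches the paper exactly.
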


\begin{proof}
Since Player 2 has no improving switches w.r.t. $\sigma$, every action
$j \in A^2$ emanating from a state controlled by Player 2 is weakly
improving for Player 1 w.r.t. $\sigma$. It follows that $B \cup
\sigma_2'$ is an improving set for Player 1, and we then know from
Lemma \ref{lemma:imp} that $\nu(k,\sigma') \le \nu(k,\sigma)$ for all
states $k$, with strict inequality if $\sigma'(k)$ is an improving
switch for Player 1.

\end{proof}

Lemma~\ref{lemma:term} and Corollary \ref{cor:imp} allow us to define the
$\StrategyIteration$ algorithm as shown in
Figure~\ref{fig:strategyiteration}. $u(\sigma)$ is the vector of
payoffs for $\sigma$.
The algorithm is a local search
algorithm, and Lemma~\ref{lemma:term} ensures that a local optimum is
also a global optimum. Player 1 repeatedly performs improving switches
while Player 2 always plays a best response to the current strategy of
Player 1. Corollary~\ref{cor:imp} is used to prove termination of the
algorithm: it ensures that each strategy is only encountered
once, and the number of strategies is finite.

\begin{figure}
\begin{center}
\parbox{4in}{
\begin{function}[H]

\While{$\exists$ improving set $B_1 \subseteq A^1$ for Player 1 w.r.t. $\sigma$}{
$\sigma \gets \sigma[B_1]$\;
\While{$\exists$ improving set $B_2 \subseteq A^2$ for Player 2 w.r.t. $\sigma$}{
  $\sigma \gets \sigma[B_2]$\;
}
}
\Return{$(u(\sigma),\sigma)$}\;
\caption{\StrategyIteration(\mbox{$G,\sigma$})}
\end{function}
}
\end{center}
\caption{The $\StrategyIteration$ algorithm for solving priced games.}
\label{fig:strategyiteration}
\end{figure}

\begin{theorem}\label{thm:strategyiteration}
The $\StrategyIteration$ algorithm correctly computes an optimal
strategy profile $\sigma^*$ such that neither player has an improving
switch w.r.t. $\sigma^*$.
\end{theorem}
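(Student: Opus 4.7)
The plan is to prove correctness and termination separately, with the correctness being essentially immediate from earlier lemmas and the termination being the genuine obstacle.

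For correctness, suppose the outer loop of $\StrategyIteration$ terminates and returns $\sigma^* = (\sigma_1^*, \sigma_2^*)$. The exit condition of the outer loop means that Player~1 has no improving switch in $A^1$ w.r.t.\ $\sigma^*$, while the exit condition of the innermost loop (which must have fired just before the outer loop exited, or otherwise we can observe it still holds since Player~1 has not moved since) means Player~2 has no improving switch in $A^2$ w.r.t.\ $\sigma^*$. Then Lemma \ref{lemma:term} directly gives that $\sigma_1^*$ and $\sigma_2^*$ are optimal, and moreover $\sigma^*$ has the stated property that neither player has an improving switch.

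For termination, I would argue each loop terminates in turn. For the inner loop, a single application of Lemma \ref{lemma:imp} (to Player~2) shows that each improving-set update by Player~2 strictly increases $\nu(k,\sigma)$ (lexicographically) at the state where the switch is actually improving, while never decreasing it anywhere. Since Player~1's strategy is held fixed inside the inner loop, the valuations are completely determined by Player~2's strategy, of which there are only finitely many; hence no strategy $\sigma_2$ can be revisited, and the inner loop terminates. For the outer loop I would compare the strategy profile $\sigma$ at the start of one outer iteration with the strategy profile $\sigma'$ at the start of the next outer iteration. Between these two moments, Player~1 has performed exactly one improving-set update $\sigma_1 \mapsto \sigma_1[B_1]$, and Player~2 has (by the just-completed inner loop) then driven herself to a best response, producing some new $\sigma_2'$. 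Crucially, at the moment Player~1 switched, Player~2 had no improving switches (since the previous inner loop had terminated), so the hypothesis of Corollary \ref{cor:imp} is satisfied. Applying Corollary \ref{cor:imp} with this $\sigma_2'$ gives $\nu(k,\sigma') \le \nu(k,\sigma)$ for every state $k$, with strict inequality at any state $k^*$ where $\sigma_1[B_1](k^*)$ is an improving switch for Player~1 -- and such a state exists by the very definition of an improving set. Thus $\nu(\cdot,\sigma') \ne \nu(\cdot,\sigma)$ and the sequence of valuation vectors produced at the starts of successive outer iterations is strictly monotone (in the componentwise $\le$ order, with at least one strict decrease per step). In particular, no strategy profile can repeat across outer iterations; since the total number of strategy profiles is finite, the outer loop must terminate.

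The main obstacle is the outer-loop termination argument, because between successive snapshots of $\sigma$ taken at the top of the outer loop, \emph{both} players move, and Player~2's improving switches go in the opposite direction from Player~1's. The key insight needed is that Corollary \ref{cor:imp} was tailored exactly for this: it allows arbitrary behavior by Player~2 after Player~1's switch, as long as Player~2 had no improving switch \emph{before} Player~1 moved, which is precisely the invariant maintained at the top of the outer loop. Everything else (finiteness of strategy profiles, extracting monotonicity from the lexicographic order) is bookkeeping.
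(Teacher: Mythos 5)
Your proof is correct and follows essentially the same route as the paper: Lemma \ref{lemma:term} for correctness at termination, Lemma \ref{lemma:imp} for inner-loop termination, and Corollary \ref{cor:imp} for outer-loop termination via strict monotonicity of the valuations. The only nit is that your invariant ``Player~2 had no improving switches at the moment Player~1 switched'' fails for the very first outer iteration (there is no previous inner loop); the paper sidesteps this by asserting monotonicity only ``after the first iteration,'' which is all that termination requires.
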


\begin{proof}
It immediately follows from Lemma~\ref{lemma:term} that if the
$\StrategyIteration$ algorithm terminates, it correctly computes an
optimal strategy profile. Indeed, in order to escape both while-loops
neither player $i$ can have an improving switch in $A^i$
w.r.t. $\sigma$.

Let $\sigma = (\sigma_1, \sigma_2)$ be the current strategy profile at
the beginning of the outer while-loop, and let $\sigma[B_1] =
(\sigma_1', \sigma_2)$. From Lemma~\ref{lemma:imp} we know that with
each iteration of the inner while-loop the valuations are
non-decreasing, with at least one state strictly increasing its
valuation. Since there are only finitely many strategies, it follows
that the inner while-loop always terminates. Let the resulting strategy
profile be $\sigma' = (\sigma_1', \sigma_2')$. Observe that Player 2
has no improving switches in $A^2$ w.r.t. $\sigma'$. 

After the first iteration Player 2 has no improving switches
w.r.t. $\sigma$, and it follows 
from Corollary \ref{cor:imp} that the valuations are non-increasing
and strictly decreasing for at least one state from $\sigma$ to $\sigma'$.
Again, since there
are only finitely many strategies the outer while-loop is guaranteed
to terminate.

\end{proof}

\section{Simple priced timed games}
\label{sec-simple}
A \emph{simple priced timed game} (SPTG) $G$ is given by a priced game
$G' = (S_1,S_2,(A_k)_{k \in S}, (c_j)_{j \in A}, d)$, where $S = S_1
\cup S_2$ and $A = \bigcup_{k \in S} A_k$, and for each state $i \in
S$, an associated \emph{rate} $r_i \in \R_{\geq
  0}$. We assume that $A_k \ne \emptyset$ for all $k \in S$.

\smallskip\noindent{\bf Playing an SPTG.} A SPTG $G$ is played as follows. A pebble is placed on some starting
state $k_0$ and the clock is set to its starting time
$x_0$. The pebble is then moved from state to state by the
players. The current configuration of the game is described
by a state and a time, forming a pair $(k, x) \in S \times [0,1]$.

Assume that after $t$ steps the pebble is on state $k_t \in S_i$,
controlled by Player $i$, at time $x_t$, corresponding to the
configuration $(k_t,x_t)$. Player $i$ now chooses the next action
$j_{t} \in A_{k_t}$. Furthermore, the player also
chooses a delay $\delta_{t} \ge 0$ such that $x_{t+1} = x_t +
\delta_{t} \le 1$. 
The pebble is moved to $d(j_{t}) = k_{t+1}$. The
next configuration is then $(k_{t+1},x_{t+1})$. We write 
\[
(k_{t},x_t)
\xrightarrow{j_{t},\delta_{t}} (k_{t+1},x_{t+1}).
\]
The game ends if $k_{t+1} =\ \perp$.

\smallskip\noindent{\bf Plays and outcomes.} A \emph{play} of the game is a sequence of steps starting from some
configuration $(k_0,x_0)$.
Let 
\[
\rho = (k_0,x_0) \xrightarrow{j_0, \delta_0} (k_1,x_1)
\xrightarrow{j_1, \delta_1} \dots \xrightarrow{j_{t-1}, \delta_{t-1}}
(k_{t},x_{t})
\]
be a finite play such that $k_{t} =\ 
\perp$. The outcome of the game, paid by Player 1 to Player 2, is then
given by:
\[
\cost(\rho) = \sum_{\ell=0}^{t-1} (\delta_\ell r_{k_\ell} + c_{j_\ell}).
\]
I.e., for each unit of time spent waiting at a state $k$ 
Player 1 pays the rate $r_k$ to Player 2. Furthermore, every time an
action $j$ is used, Player 1 pays the cost $c_j$ to Player 2. 
If $\rho$
is an infinite play the outcome is $\infty$, and we write $\cost(\rho) =
\infty$.

\smallskip\noindent{\bf Positional strategies.} A (positional) strategy for Player $i$ is a map $\pi_i: S_i \times [0,1]
\to A \cup \{\lambda\}$, where $\lambda$ is a special delay action.
For every $k \in S_i$ and $x \in [0,1)$, if $\pi_i(k,x) = \lambda$
then we require that there exists a $\delta > 0$ such that for all $0
\le \epsilon < \delta$, $\pi_i(k,x+\epsilon) = \lambda$.
Let $\delta_{\pi_i}(k,x) = \inf \{x'-x \mid x \le x' \le 1, \pi_i(k,x') \ne
\lambda\}$ be the delay before the pebble is moved when starting in
state $k$ at time $x$ for some strategy $\pi_i$. More general types of strategies could be considered, but see Remark~\ref{rem:common positional strategies}.

\smallskip\noindent{\bf Playing according to a strategy and strategy profiles.} Player $i$ is said to play according to $\pi_i$ if, when the pebble is
in state $k \in S_i$ at time $x \in [0,1]$, he waits until time $x' =
x + \delta_{\pi_i}(k,x)$ and then moves according to $\pi_i(k,x')$. 
A \emph{strategy profile} $\pi = (\pi_1,\pi_2)$
is a pair of strategies, one for each player. Let $\Pi_i$ be the set
of strategies for Player $i$, and let $\Pi$ be the set of all strategy
profiles. A strategy profile $\pi$ is again interpreted as a map
$\pi: S \times [0,1] \to A \cup \{\lambda\}$. Furthermore, we use
$\pi(x)$ to refer to the decisions at a fixed time. I.e., $\pi(x): S
\to A \cup \{\lambda\}$ is the map defined by $(\pi(x))(k) =
\pi(k,x)$.

\smallskip\noindent{\bf Value functions and optimal strategies.} Let $\rho_{k,x}^{\pi}$ be the play starting from
configuration $(k,x)$ where the players play according to $\pi$.
Define the \emph{value function} for a strategy
profile $\pi = (\pi_1,\pi_2)$ and state $k$ as:
$
v_k^{\pi_1,\pi_2}(x) = \cost(\rho_{k,x}^{\pi})
$.
For fixed strategies $\pi_1$ and $\pi_2$ for Player 1 and 2, define
the \emph{best response value functions} for Player 2 and 1,
respectively, for a state $k$ as:
\begin{align*}
v_k^{\pi_1}(x) &= \sup_{\pi_2 \in \Pi_2} ~ v_k^{\pi_1, \pi_2}(x) \\
v_k^{\pi_2}(x) &= \inf_{\pi_1 \in \Pi_1} ~ v_k^{\pi_1, \pi_2}(x)
\end{align*}
We again define \emph{lower} and \emph{upper value functions}: 
\begin{align*}
\underline{v}_k(x) &= \sup_{\pi_2 \in \Pi_2} ~ v_k^{\pi_2}(x) = \sup_{\pi_2 \in \Pi_2} ~ \inf_{\pi_1 \in \Pi_1}
~ v_k^{\pi_1, \pi_2}(x).\\
\overline{v}_k(x) &= \inf_{\pi_1 \in \Pi_1} ~ v_k^{\pi_1}(x) = \inf_{\pi_1 \in \Pi_1} ~ \sup_{\pi_2 \in \Pi_2} ~
v_k^{\pi_1, \pi_2}(x).
\end{align*}
Note that $\inf$ and $\sup$ are used because there are infinitely many
strategies. Bouyer \emph{et al.}~\cite{BLMR06} showed that
$\underline{v}_k(x) =
\overline{v}_k(x)$. In fact, this was shown for the more general class
of \emph{priced timed games} (PTGs) studied in Section~\ref{ptg}. Thus, every
SPTG has a \emph{value function} $v_k(x) :=
\underline{v}_k(x) = \overline{v}_k(x)$ for each state $k$.

\begin{remark}{\em \label{rem:common positional strategies}
Let us remark that positional strategies are commonly defined as maps from states
and times to delays and actions. For instance, positional strategies {\em (CDPS)} are commonly defined as $\tau_i: S_i \times [0,1]
\to [0,1] \times A$. This is more general than
our definition of strategies, since $\tau_i(k,x) = (\delta,a)$
with $\delta > 0$ does not imply that for all $x' \in (x,x+\delta]$ we
have $\tau_i(k,x') = (x+\delta-x',a)$, whereas this implication holds for the strategies we
use. We choose to use the specialized definition of strategies because
it offers a better intuition for understanding the proposed
algorithm. It is easy to see that the players can
not achieve better values by using the more general strategies.
Indeed, let $\tau_i$ be some strategy where $\tau_i(k,x) =
(\delta,a)$ and $\tau_i(k,x') = (\delta',a')$, such that $[x,x+\delta] \cap [x',x'+\delta'] \ne \emptyset$. Then
one of the following two modifications will not make $\tau_i$ achive worse values: $\tau_i(k,x) = (x'+\delta'-x,a')$ or $\tau_i(k,x') = (x+\delta-x',a)$.
Even more general strategies can be considered that depends on the history of the play so far in arbitary ways, but as shown by Bouyer \emph{et al.}~\cite{BLMR06} for any $\epsilon>0$ there exists a CDPS $\pi_1$ for Player~$1$, such that against any strategy $\pi_2$ for Player~2 $v_k^{\pi_1, \pi_2}(x)\geq v_k(x)-\epsilon$ (and similar if Player~2 must play a CDPS). Thus the same is the case for our more specialised positional strategies. That we only consider positional strategies is thus not a restriction.
}
\end{remark}

\smallskip\noindent{\bf Strategies optimal from some time.} A strategy $\pi_i \in \Pi_i$ is \emph{optimal from time $x$} for
Player $i$ if:
\[
\forall k \in S,~ x' \in [x,1]: \quad
v_k^{\pi_i}(x') = v_k(x').
\]
Strategies are called \emph{optimal} if they are optimal from
time 0. Similarly, a strategy $\pi_i$ is a \emph{best response} to
another strategy $\pi_{-i}$ from time $x$ if:
\[
\forall k \in S,~ x' \in [x,1]: \quad
v_k^{\pi_i,\pi_{-i}}(x') = v_k^{\pi_{-i}}(x').
\]

%

\smallskip\noindent{\bf Nash equilibrium
from some time.} A strategy profile $(\pi_1,\pi_2)$ is called a \emph{Nash equilibrium
from time $x$} if $\pi_1$ is a best response to $\pi_2$ from time $x$,
and $\pi_2$ is a best response to $\pi_1$ from time $x$. As in the case of
Lemma~\ref{lemma:pgnash} for priced games, any equilibrium payoff of an SPTG is the value of the game. 
The exact statement is shown in
Lemma~\ref{lemma:nash}. Since the argument is standard, and similar to
the proof of Lemma~\ref{lemma:pgnash}, it has been omitted. Just note
that instead of considering best responses, which we have not yet
showed exist for SPTGs, it suffices to use some better strategy.

\begin{lemma}\label{lemma:nash}
If there exists a strategy profile $(\pi_1,\pi_2)$ that is a Nash
equilibrium from time $x$, then $v_k(x') = v_k^{\pi_1,\pi_2}(x')$ for
all $k\in S$ and $x' \in [x,1]$.
\end{lemma}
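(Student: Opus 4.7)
The plan is to give a direct argument that bypasses the explicit construction of deviating strategies used in the proof of Lemma~\ref{lemma:pgnash}, relying instead on two monotonicity inequalities together with the existence of values established by Bouyer \emph{et al.}~\cite{BLMR06}.

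First I would establish that, for every Player~1 strategy $\pi_1'$, every $k \in S$, and every $x' \in [0,1]$,
\[
v_k^{\pi_1'}(x') \;=\; \sup_{\pi_2'} v_k^{\pi_1',\pi_2'}(x') \;\ge\; \inf_{\pi_1''}\sup_{\pi_2'} v_k^{\pi_1'',\pi_2'}(x') \;=\; \overline{v}_k(x') \;=\; v_k(x'),
\]
the last equality being the existence of the value. By an entirely symmetric computation, $v_k^{\pi_2'}(x') \le v_k(x')$ for every Player~2 strategy $\pi_2'$. Next I would apply the Nash equilibrium hypothesis: the definition of best response from time $x$ gives, for all $k \in S$ and $x' \in [x,1]$, both $v_k^{\pi_1,\pi_2}(x') = v_k^{\pi_2}(x')$ (since $\pi_1$ is a best response to $\pi_2$) and $v_k^{\pi_1,\pi_2}(x') = v_k^{\pi_1}(x')$ (since $\pi_2$ is a best response to $\pi_1$). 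Chaining these identities with the two monotonicity inequalities yields
\[
v_k(x') \;\le\; v_k^{\pi_1}(x') \;=\; v_k^{\pi_1,\pi_2}(x') \;=\; v_k^{\pi_2}(x') \;\le\; v_k(x'),
\]
so every inequality is an equality and in particular $v_k^{\pi_1,\pi_2}(x') = v_k(x')$, which is the claim.

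This is arguably cleaner than the contradiction used for Lemma~\ref{lemma:pgnash}, because no deviating strategy is produced explicitly; the whole role of ``better strategies'' alluded to in the lemma statement is discharged by the $\inf/\sup$ characterisations of $\overline{v}_k$ and $\underline{v}_k$. If one prefers to mirror the priced-game proof more literally, the same step can be run by contradiction: if $v_{k_0}^{\pi_1,\pi_2}(x_0) > v_{k_0}(x_0)$ for some $(k_0,x_0)$, the definition of $\overline{v}_{k_0}(x_0)$ as an infimum supplies a \emph{better} (not necessarily best) Player~1 strategy $\pi_1'$ with $v_{k_0}^{\pi_1',\pi_2}(x_0) \le v_{k_0}^{\pi_1'}(x_0) < v_{k_0}^{\pi_1,\pi_2}(x_0)$, contradicting that $\pi_1$ is a best response to $\pi_2$; the reverse strict inequality is handled symmetrically using $\underline{v}_{k_0}$. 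The only conceptual subtlety is the one the paper flags---best responses are not known a priori to exist in SPTGs---and it is dissolved by substituting ``better'' for ``best'' via the extremum characterisations of the values, so I do not expect any real obstacle.
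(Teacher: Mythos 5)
Your proof is correct and is essentially the standard argument the paper alludes to when it omits the proof of Lemma~\ref{lemma:nash}: the direct chain $v_k(x') \le v_k^{\pi_1}(x') = v_k^{\pi_1,\pi_2}(x') = v_k^{\pi_2}(x') \le v_k(x')$ is just the inequality-form packaging of the contradiction argument used for Lemma~\ref{lemma:pgnash}, and your second paragraph is precisely the ``use a better strategy rather than a best response'' variant that the paper explicitly recommends to sidestep the unresolved existence of best responses in SPTGs. No gaps.
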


The existence of optimal strategies and best replies is
non-trivial. We are, however, later going to prove the following
theorem, which, in particular, implies that $\inf$ and $\sup$ can be
replaced by $\min$ and $\max$ in the definitions of value functions.
(That value functions are piecewise linear also holds for general PTGs and was
first established by Bouyer {\em et al.} \cite{BLMR06} who furthermore showed
that the value functions are not continuous in general and that there are PTGs for which no optimal strategy exists, thus showing that we can not in general replace $\inf$ and $\sup$ by $\min$ and $\max$.)
\begin{theorem}\label{thm:opt}
For any SPTG there exists an optimal strategy profile. Also, 
the value functions are continuous piecewise linear functions.
\end{theorem}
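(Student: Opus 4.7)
My plan is to prove the theorem by a backward sweep-line induction on time, starting from $x = 1$ and extending optimal strategies down to $x = 0$. The base case is $x = 1$: since any positive delay $\delta > 0$ would violate $x_{t+1} \leq 1$, the SPTG degenerates to the underlying priced game $G'$, and Theorem~\ref{thm:strategyiteration} yields an optimal positional strategy profile together with the boundary values $v_k(1)$.

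For the inductive step, I maintain the invariant that on some interval $[t, 1]$ the value functions $v_k$ are continuous and piecewise linear and a positional strategy profile $\pi$ is known that is optimal from time $t$. To extend below $t$, I introduce an auxiliary priced game $H(t)$ on the state set $S$ whose action set at state $k$ consists of (a) original immediate actions with modified costs $c_j + v_{d(j)}(t)$, and (b) a ``wait'' option that takes the payoff $v_k(t)$ of deferring to the already-computed strategy at time $t$. By Theorem~\ref{thm:strategyiteration}, $H(t)$ has optimal positional strategies determining, for each state $k$, whether to act immediately at time $t^-$ or wait. For $x$ slightly below $t$, following this choice yields $v_k(x) = v_k(t) + (t-x) r_k$ if $k$ waits, or $c_{j_k} + v_{d(j_k)}(x)$ if $k$ acts immediately — both linear in $x$. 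Continuity at $t$ holds because the two expressions agree at $x = t$ by construction of $H(t)$.

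I continue extending backward until some state becomes indifferent between its current choice and another action (an \emph{event point}), at which time I perform a single strategy iteration step on $H(x)$ and resume the sweep. The main obstacle is two-fold. First, I must show the number of event points is finite, so that the sweep terminates at $x = 0$ in finitely many pieces; this follows because each event triggers an improving switch in the induced priced game, and by Corollary~\ref{cor:imp} the induced strategy profiles cannot recur. (The explicit $12^n$ bound of Theorem~\ref{thm:exp} is a more delicate combinatorial refinement deferred until later.) Second, I must verify global optimality of the resulting profile $(\pi_1, \pi_2)$. For this I appeal to Lemma~\ref{lemma:nash}: by construction, at every $x \in [0,1]$ neither player has a profitable deviation in the induced priced game $H(x)$, so $(\pi_1, \pi_2)$ is a Nash equilibrium from every time $x$, and hence $v_k^{\pi_1,\pi_2}(x) = v_k(x)$, establishing both the existence of the value and the optimality of $\pi$.

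A subtle point that deserves care is the boundary behavior at event points: the strategy profile may be discontinuous at finitely many event points, but the value functions themselves are continuous because the two competing expressions are tied at precisely those points. Piecewise linearity then follows directly from the linearity on each interval between consecutive event points.
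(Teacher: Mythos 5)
Your overall architecture --- backward sweep from $x=1$, an auxiliary priced game at each time, event points where the linear pieces change --- is the same as the paper's, but your auxiliary game $H(t)$ is missing the one ingredient that makes the induction go through. You give the wait action the cost $v_k(t)$ and the immediate actions the cost $c_j+v_{d(j)}(t)$, so $H(t)$ encodes only the values \emph{at} time $t$ and nothing about how they change for $x<t$. That is not enough to determine the optimal choice on an interval $[x',t)$: at $t$ several actions are typically tied in value (indeed, every event point is by definition a time where such a tie is created), and among tied actions a player must select according to the \emph{slope} of the resulting value, which is the rate of the state in which the ensuing play eventually waits. For example, if a Player~1 state has two actions with equal cost-plus-continuation-value at $t$, one leading to a play that waits in a rate-$0$ state and one to a play that waits in a rate-$10$ state, then $H(t)$ regards them as interchangeable, yet only the first is optimal for $x<t$. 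Your event-point mechanism cannot repair a wrong tie-break after the fact, because detecting the error would require the values $v_k(x)$ for $x<t$, which are exactly what you are trying to compute. This is why the paper works with $G^{x,\epsilon}$ (Definition~\ref{def:Gxe}), giving the waiting action at $k$ the cost $v_k(x)+\epsilon r_k$ over the lexicographically ordered group $\R\times\R$: the infinitesimal component forces the priced-game solver to optimize the slope $b^x(k)=r(k,\sigma^x)$ as a tie-breaker, and Lemmas~\ref{lemma:interval} and~\ref{lemma:induction} then show that the resulting profile really is optimal on all of $[x',x)$.

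A second, smaller gap is your termination argument. Corollary~\ref{cor:imp} gives non-recurrence of strategy profiles within a single run of strategy iteration on a \emph{fixed} priced game; across event points the game changes (the wait costs involve $v_k(x)$, which varies with $x$), so valuations at different times are not comparable and the corollary does not by itself rule out a profile recurring later in the sweep, hence does not rule out infinitely many event points. The paper needs the time-independent potential matrix $P(\sigma)$ of Lemma~\ref{lemma:potential}, which decreases lexicographically with every single improving switch performed anywhere in the sweep; finiteness of the number of event points (and the quantitative bound of Theorem~\ref{thm:exp}) follows from that, not from Corollary~\ref{cor:imp}.
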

Our proof will be algorithmic. Specifically, the algorithm
$\SolveSPTG$ computes a value function of the desired kind.
Furthermore, the proof of correctness of $\SolveSPTG$ 
(the proof of Theorem \ref{thm:time}) also
yields the existence of exactly optimal strategies.

We refer to the non-differentiable points of the value functions of $G$
as \emph{event points} of $G$. The number of distinct event points of $G$
is an important parameter in the
complexity of our algorithm for solving SPTGs. We denote by
$\EventPoints(G)$ the total number of event points, excluding $x =
1$. 

\subsection{Solving SPTGs\label{sec:solve sptgs}}

In order to solve an SPTG we make use of a technique similar to the
\emph{sweep-line} technique from computational geometry of Shamos and Hoey
\cite{SH76}. Informally, we construct the value functions by
moving a sweep-line backwards from time 1 to time 0, and at each time
computing the current values based on the later values. The approach
is also similar to a technique known in game theory
as {\em backward induction}. The parameter of the induction, the time,
is  a continuous parameter, however. The BCFL-ABM
algorithm also applies backward
induction, but there the parameter of induction is the number of
transitions taken, i.e., a discrete parameter, leading to a {\em
value iteration} algorithm. 

\smallskip\noindent{\bf Informal description of the algorithm.} If $\pi$
is a strategy profile that is optimal from time $x$ in an SPTG $G$, we 
use $\pi$ to construct a new strategy profile $\pi'$ that is optimal
from time $x' < x$. More precisely, for $x'$ sufficiently
close to $x$, we show that there exists a fixed optimal action  (where
"waiting" is viewed as an action)  for all states
for both players for every point in time in the interval $[x',x)$.
The new strategy profile $\pi'$ is then obtained from $\pi$ by using these
actions. Starting from time $x'$, once the players wait in some state
$k$, they wait at least until time $x$ because they use the same
actions throughout the interval. This allows us to model the situation
with a priced game where every state $k$ is given an additional action
$\lambda_k$ corresponding to waiting for $y = x-x'$ units of time.
Thus, the value of a state in the priced game is the same as the value
of the corresponding state in $G$ if the game starts at time $x' = x-y$,
and if the first time a player waits he is forced to wait until time
$x$. 
The formal development of the algorithm follows. The following
definition is a formalization of the priced game described above.

\begin{definition}\label{def:Gxe}
For a given SPTG $G = (S_1,S_2,(A_k)_{k \in S}, c, d, r)$, a time
$x \in (0,1]$, and $y \ge 0$, let the priced game $G^{x,y} =
(S_1,S_2,(A_k')_{k \in S}, c^{x,y}, d')$ be defined by:
\begin{alignat*}{2}
\forall k \in S&:&\quad A_k' &= A_k \cup
\{\lambda_k\}\\
\forall j \in A_k'&:&\quad
c^{x,y}_j &= \begin{cases}
v_k(x) + y r_k & \text{if $j = \lambda_k$}\\
c_j & \text{otherwise}\\
\end{cases}\\
\forall j \in A_k'&:&\quad
d'(j) &= \begin{cases}
\perp & \text{if $j = \lambda_k$}\\
d(j) & \text{otherwise}\\
\end{cases}
\end{alignat*}
We refer to actions $\lambda_k$, for $k \in S$, as \emph{waiting actions}.
\end{definition}

We sometimes write $u(k,\sigma,G^{x,y})$ instead of
$u(k,\sigma)$ to clarify which priced game $G^{x,y}$ we
consider.

\smallskip\noindent{\bf The rates obtained in $G^{x,y}$.} Let $x
\in (0,1]$ and $y \ge 0$. Let $\sigma$ be a strategy profile for
$G^{x,y}$, and let $k_0$ be a state. Consider the (maximal) path
$P_{k_0,\sigma} = (k_0,k_1,\dots)$ that starts at $k_0$ and uses
actions of $\sigma$. When $P_{k_0,\sigma}$ is finite and the last
action of $P_{k_0,\sigma}$ is a waiting action $\lambda_k$ for some
$k$, it is useful to introduce notation for refering to the rate of
the second last state of $P_{k_0,\sigma}$. (Note that the last state
of $P_{k_0,\sigma}$ is the terminal state $\perp$.) For this purpose
we define:
\[
r(k_0,\sigma) ~=~ \begin{cases}
r_{k_{t-1}} & \text{if $P_{k_0,\sigma} = (k_0,k_1,\dots,k_t)$ and
  $\sigma(k_{t-1}) = \lambda_{k_{t-1}}$} \\
0 & \text{otherwise}
\end{cases}
\]
Note that for finite paths that move to $\perp$ without using waiting
actions we may interpret the rate of $\perp$ as being 0.
We again sometimes write $r(k,\sigma,G^{x,y})$ instead of
$r(k,\sigma)$ to clarify which priced game $G^{x,y}$ we
consider.
Note that the only differences between $G^{x,y}$ and $G^{x,y'}$ for $y
\ne y'$ are the costs of the waiting actions. Also note that
$r(k_0,\sigma)$ does not depend on $y$. In particular, we always have:
\begin{equation}\label{eq:G^{x,y}}
u(k_0,\sigma,G^{x,y}) = u(k_0,\sigma,G^{x,0}) + y r(k_0,\sigma) ~.
\end{equation}

\smallskip\noindent{\bf The game $G^x$.} We will often let $y$ be the infinitesimal $\epsilon$, in which case
we simply denote $G^{x,\epsilon}$ by $G^x$ and $c^{x,\epsilon}$ by $c^x$.
Since $\epsilon$ is an infinitesimal, the payoffs of a
strategy profile $\sigma$ for $G^x$ have two components.
From (\ref{eq:G^{x,y}}) we (informally) know that 
$u(k_0,\sigma,G^{x}) = u(k_0,\sigma,G^{x,0}) + \epsilon
r(k_0,\sigma)$. There are no infinitesimals in $G^{x,0}$, and, hence, the
second component of the payoff $u(k_0,\sigma,G^{x})$ is exactly
$r(k_0,\sigma)$.
For every $x \in (0,1]$ we let $\sigma^x = (\sigma_1^x,\sigma_2^x)$ be
a strategy profile for which neither player has an improving
switch. I.e., by Lemma~\ref{lemma:term} $\sigma^x$ is an optimal
strategy profile for $G^x$. The existence of $\sigma^x$ is guaranteed
by Theorem \ref{thm:strategyiteration}.
To shorten notation we let $a^x(k) = u(k,\sigma^x,G^{x,0})$ and
$b^x(k) = r(k,\sigma^x,G^{x})$.

\begin{lemma}\label{lemma:zero}
The strategy profile $\sigma^x$ is optimal for $G^{x,0}$ and $a^x(k) =
v_k(x)$.
\end{lemma}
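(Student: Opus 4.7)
The plan is to split the statement into two sub-claims: (i) $\sigma^x$ is optimal for $G^{x,0}$, and (ii) $a^x(k)=v_k(x)$. Both will be reduced to Lemma~\ref{lemma:pgnash} by exhibiting appropriate Nash equilibria of the priced game $G^{x,0}$.

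For (i), I would show $\sigma^x$ is itself a Nash equilibrium in $G^{x,0}$. By construction $\sigma^x$ has no improving switches in $G^x$, so the proof of Lemma~\ref{lemma:term} already establishes that $\sigma^x$ is a Nash equilibrium in $G^x$. Identity~\eqref{eq:G^{x,y}} at $y=\epsilon$ shows that the lexicographic utility in $G^x$ is exactly the pair $(u(\cdot,G^{x,0}),r(\cdot))$. Since every lexicographic best-response inequality collapses to a weak inequality in the first coordinate, the best-response inequalities of $\sigma^x$ in $G^x$ immediately pass to best-response inequalities in $G^{x,0}$, so $\sigma^x$ is a Nash equilibrium in $G^{x,0}$ as well. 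Lemma~\ref{lemma:pgnash} then gives $a^x(k)=u(k,\sigma^x,G^{x,0})=v(k,G^{x,0})$, which is the statement that $\sigma^x$ is optimal for $G^{x,0}$.

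For (ii), I would exhibit a second Nash equilibrium $\sigma^*$ of $G^{x,0}$ whose payoff at $k$ is $v_k(x)$; Lemma~\ref{lemma:pgnash} then forces $v(k,G^{x,0})=v_k(x)$, which combined with (i) gives $a^x(k)=v_k(x)$. The natural candidate is the ``wait everywhere'' profile $\sigma^*(k):=\lambda_k$ for every $k\in S$: since $\lambda_k$ leads directly to $\perp$ at cost $c^{x,0}_{\lambda_k}=v_k(x)$, one has $u(k,\sigma^*,G^{x,0})=v_k(x)$ by inspection. Verifying that $\sigma^*$ is Nash reduces to the Bellman-type inequalities
\begin{align*}
v_k(x)&\leq c_j+v_{d(j)}(x)\quad\text{for }k\in S_1,\ j\in A_k,\\
v_k(x)&\geq c_j+v_{d(j)}(x)\quad\text{for }k\in S_2,\ j\in A_k,
\end{align*}
with the convention $v_\perp(x):=0$: assuming these, a short induction along the path $P_{k,(\sigma_1,\sigma^*_2)}$, splitting into the three cases of terminating at $\perp$ via a $\lambda$-action, terminating via a regular action whose destination is $\perp$, or cycling forever (which trivially yields payoff $\infty\geq v_k(x)$), shows $u(k,\sigma_1,\sigma^*_2,G^{x,0})\geq v_k(x)$ for every Player~1 deviation $\sigma_1$, and symmetrically for Player~2 deviations.

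The hard part will be justifying the Bellman-type inequalities directly from the inf--sup definition of $v_k(x)$ supplied by Bouyer \emph{et al.}~\cite{BLMR06}, without appealing to Theorem~\ref{thm:opt}, which is only proved later. The argument I have in mind is to let the controlling player commit, as a positional strategy, to playing $j$ with delay $0$ at $(k,x)$ and to follow an $\epsilon'$-optimal continuation from $(d(j),x)$ thereafter for any $\epsilon'>0$; taking $\epsilon'\to 0$ yields the inequalities. The mildly fiddly point is to ensure that the continuation can be spliced in at time $x$ so as to respect the left-continuity condition imposed on $\lambda$ in the positional strategy formalism of Remark~\ref{rem:common positional strategies}.
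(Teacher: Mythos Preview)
Your proposal is correct and, for part~(i), matches the paper's argument essentially verbatim: the paper also observes that any improvement in $G^{x,0}$ would dominate the infinitesimal second coordinate and hence contradict optimality of $\sigma^x$ in $G^x$.

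For part~(ii) the paper is considerably terser than you. Its entire justification is the single clause ``the value of a state $k$ in $G^{x,0}$ must be consistent with $v_k(x)$'', with no further detail. What you do---exhibit the all-$\lambda$ profile as a second Nash equilibrium of $G^{x,0}$, reduce that to the Bellman inequalities $v_k(x)\le c_j+v_{d(j)}(x)$ for $k\in S_1$ and the reverse for $k\in S_2$, and then derive those inequalities from the inf--sup definition by a splicing-plus-$\epsilon'$ argument---is precisely the content hiding behind the paper's word ``consistent''. So this is not a different route but a fleshed-out version of the same idea; the benefit of your write-up is that it makes explicit why no circularity with Theorem~\ref{thm:opt} arises, a point the paper leaves to the reader. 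Your caution about the left-continuity condition on $\lambda$ when splicing is well placed but ultimately harmless: you only alter the strategy at the single configuration $(k,x)$ by setting it to $j\neq\lambda$, and the forward condition on $\lambda$ at earlier times $x''<x$ can always be witnessed by an interval $[x'',x''+\delta)$ with $x''+\delta<x$.
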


\begin{proof}
To prove the first part of the lemma we observe that if a better value
can be achieved in $G^{x,0}$ then, regardless of the infinitesimal
component of the payoff achieved by
$\sigma^x$ in $G^x$, it will also be better for $G^x$. Furthermore, the value
of a state $k$ in $G^{x,0}$ must be consistent with $v_k(x)$. It
follows that $a^x(k) = u(k,\sigma^x,G^{x,0}) = v_k(x)$.
\end{proof}



Note that the only difference between $G^x$ and $G^{x'}$,
for $x \ne x'$, is the costs of the waiting actions $\lambda_k$. Hence,
we may interpret a strategy profile $\sigma$ for $G^x$ as a
strategy profile for $G^{x'}$.
Also note that $G^{x}$ is identical to the priced game $G'$
defining $G$, except that for each state $k$ there is an additional
action $\lambda_k$ corresponding to waiting in that state in the SPTG.
Slightly abusing notation, we will interpret actions chosen by $\sigma$ as also being actions
$\pi(x)$ for $G$, and the actions of $\pi(x)$ as forming a strategy profile for $G^x$.

The following lemma establishes the connection between the SPTG $G$
and the priced game $G^x$, for some $x \in (0,1]$.

\begin{lemma}\label{lemma:connection}
Let $\pi$ be a strategy profile for $G$ that is optimal from time $x$,
and let $x' < x$. If $\pi(x'') = \sigma^x$ for
all $x'' \in [x',x)$, then $v_k^\pi(x') = v_k(x) +
  (x-x')b^x(k)$
for all $k \in S$.
\end{lemma}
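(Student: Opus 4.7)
The proof plan is to unfold the play from $(k_0, x')$ under $\pi$ and show it matches the path $P_{k_0, \sigma^x}$ in the priced game $G^x$, then use Lemma~\ref{lemma:zero} to identify the resulting cost with $v_{k_0}(x) + (x-x')b^x(k_0)$. The key observation is that since $\pi(x'') = \sigma^x$ on the entire interval $[x',x)$, the players never switch actions before reaching $x$; in particular, as soon as a player first chooses to wait, the delay lasts at least until time $x$. This is precisely what the waiting action $\lambda_k$ of cost $v_k(x) + (x-x')r_k$ models when we take $y = x-x'$ in Definition~\ref{def:Gxe}.

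Fix a state $k_0$ and consider the maximal path $P_{k_0,\sigma^x} = (k_0, k_1, \dots)$ in $G^x$. I would handle three cases. \textbf{Case 1:} The path is infinite. Then neither waiting actions nor $\perp$ are ever reached, so the corresponding play in $G$ from $(k_0, x')$ also never terminates (no waiting ever occurs), giving $v_{k_0}^\pi(x') = \infty$. By Lemma~\ref{lemma:zero}, $v_{k_0}(x) = a^x(k_0) = u(k_0,\sigma^x,G^{x,0}) = \infty$, so the claimed equality holds (using the convention $\infty$ dominates). \textbf{Case 2:} The path is finite, $P_{k_0,\sigma^x} = (k_0, \dots, k_t)$ with $k_t = \perp$, and none of the used actions is a waiting action. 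Then the corresponding play in $G$ starting at $(k_0,x')$ consists of instantaneous transitions $k_0 \to k_1 \to \dots \to \perp$ with $\cost = \sum_{\ell=0}^{t-1} c_{\sigma^x(k_\ell)} = u(k_0,\sigma^x,G^{x,0}) = a^x(k_0)$. Here $b^x(k_0) = r(k_0,\sigma^x,G^x) = 0$, so $v_{k_0}^\pi(x') = a^x(k_0) = v_{k_0}(x) + (x-x') \cdot 0$, as required.

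\textbf{Case 3} (the main case): the path is finite and its last non-terminal action is a waiting action, i.e., $\sigma^x(k_{t-1}) = \lambda_{k_{t-1}}$. Under $\pi$, starting from $(k_0, x')$, the actions $\sigma^x(k_0), \dots, \sigma^x(k_{t-2})$ are taken instantaneously (they are not $\lambda$-actions and $\pi(x'') = \sigma^x$ throughout $[x',x)$), bringing the pebble to $(k_{t-1}, x')$. At $k_{t-1}$, $\pi$ continues to wait on the whole interval $[x',x)$ (because $\pi(x'')(k_{t-1}) = \sigma^x(k_{t-1}) = \lambda_{k_{t-1}}$ for all such $x''$), so the delay lasts until time $x$, contributing $(x-x')r_{k_{t-1}}$ to the cost. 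From $(k_{t-1}, x)$ onwards, $\pi$ is optimal from time $x$, hence the residual cost is $v_{k_{t-1}}^\pi(x) = v_{k_{t-1}}(x)$. Summing,
\[
v_{k_0}^\pi(x') = \sum_{\ell=0}^{t-2} c_{\sigma^x(k_\ell)} + (x-x')\, r_{k_{t-1}} + v_{k_{t-1}}(x).
\]
On the other hand, $u(k_0,\sigma^x,G^{x,0}) = \sum_{\ell=0}^{t-2} c_{\sigma^x(k_\ell)} + c^{x,0}_{\lambda_{k_{t-1}}} = \sum_{\ell=0}^{t-2} c_{\sigma^x(k_\ell)} + v_{k_{t-1}}(x)$, which by Lemma~\ref{lemma:zero} equals $a^x(k_0) = v_{k_0}(x)$. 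Combined with $b^x(k_0) = r(k_0,\sigma^x,G^x) = r_{k_{t-1}}$, the displayed expression rearranges exactly to $v_{k_0}(x) + (x-x')b^x(k_0)$.

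The main obstacle is purely a bookkeeping one: correctly identifying how the priced-game path $P_{k_0,\sigma^x}$ in $G^x$ — which is a discrete sequence of actions with $\lambda_{k_{t-1}}$ collapsing the waiting period — unfolds as an actual play in $G$ where the waiting incurs a real cost of $(x-x')r_{k_{t-1}}$ and hands off to the continuation $\pi$ at time $x$. Once the three cases are separated, the algebra is routine and relies only on Lemma~\ref{lemma:zero} and the definition of $b^x$.
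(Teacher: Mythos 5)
Your proof is correct and follows essentially the same route as the paper: unfold the play from $(k_0,x')$ under $\pi$, observe that all transitions before the first waiting action are instantaneous and the wait then lasts until time $x$, split the cost at time $x$, and identify the pieces with $a^x(k_0)$ and $b^x(k_0)$ via Lemma~\ref{lemma:zero}. The only difference is that you make the three cases (infinite path, termination without waiting, termination via a waiting action) explicit, whereas the paper compresses them into a single displayed computation.
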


\begin{proof}
Let $\rho_{k,x'}^\pi = (k_0,x_0) \xrightarrow{j_0, \delta_0} (k_1,x_1)
\xrightarrow{j_1, \delta_1} \dots$, and let $t$ be the maximum index
such that $x_t < x$. Since $\pi(x'') = \sigma^x$ for all
$x'' \in [x',x)$, we have $\delta_\ell = 0$ for all $\ell < t$ and
$\delta_t \ge x-x'$. By splitting the cost of $\rho_{k,x'}^\pi$
into cost accumulated before and after time $x$, we get:
\begin{align*}
v_k^\pi(x') &= \cost(\rho_{k,x'}^\pi) \\
&= \left((x-x') r_{k_t} + \sum_{\ell = 0}^{t-1} c_{j_\ell}\right) + v_{k_t}(x) \\
&= u(k,\sigma^x,G^{x,0}) + (x-x')r(k,\sigma^x,G^{x}) \\
&= a^x(k) + (x-x')b^x(k) \\
&= v_k(x) + (x-x')b^x(k) \;.
\end{align*}
The last equality follows from Lemma~\ref{lemma:zero}.
\end{proof}

\smallskip\noindent{\bf The function $\NextEventPoint(G^x)$.} 
For every action $j \in A$ and time $x \in (0,1]$, define the
function:
\begin{align*}
f_{j,x}(x'') ~:\!&=~ c_j + u(\dest(j),\sigma^x,G^{x,x-x''}) \\
~&=~ c_j + a^x(\dest(j)) + (x - x'') b^x(\dest(j))~.
\end{align*}
I.e., $f_{j,x}(x'')$, for $j \in A_k$, is the payoff obtained in
$G^{x,x-x''}$ by starting at state $k$, using action $j$, and then
repeatedly using actions of $\sigma^x$. In particular, we have
$u(k,\sigma^x,G^{x,x-x''}) = f_{\sigma^x(k),x}(x'')$, and $j \in A_k$,
for $k \in S_1$, is a strongly improving switch for Player 1 w.r.t. $\sigma^x$
if and only if $f_{j,x}(x'') < f_{\sigma^x(k),x}(x'')$. A similar observation can be
made for Player 2. Note that $f_{j,x}(x'')$ defines a line in the
plane.

From the definition of $\sigma^x$ we know that $\sigma^x$ is optimal
for $G^{x,x-x''}$ when $x''$, with $x'' < x$, is sufficiently close to
$x$. In the following we will be interested in making $x''$ as small
as possible
while maintaining that $\sigma^x$ is optimal for $G^{x,x-x''}$. Recall
that $\sigma^x$ is optimal for $G^{x,x-x''}$ if neither
player has an improving switch w.r.t. $\sigma^x$. 
Also, there are no improving switches
when $y = x-x'' > 0$ is sufficiently small. Recall that the valuation
$\nu(k,\sigma^x,G^{x,y}) =
(u(k,\sigma^x,G^{x,y}),\ell(k,\sigma^x,G^{x,y}))$ consists of two
components. Although the payoffs $u(k,\sigma^x,G^{x,y})$ change for
different $y$, the path lengths $\ell(k,\sigma^x,G^{x,y})$ remain the 
same. Hence, there are no improving switches w.r.t. $\sigma^x$
for $G^{x,x-x''}$ if and only if:
\begin{align}
&\forall k \in S_1, j \in A_k: \quad (f_{\sigma^x(k),x}(x''),\ell(k,\sigma^x,G^{x,0})) ~\ge~
(f_{j,x}(x''), 1+\ell(d(j),\sigma^x,G^{x,0})) \label{ineq:im1}\\
&\forall k \in S_2, j \in A_k: \quad (f_{\sigma^x(k),x}(x''),\ell(k,\sigma^x,G^{x,0})) ~\le~
(f_{j,x}(x''), 1+\ell(d(j),\sigma^x,G^{x,0})) \label{ineq:im2}
\end{align}
where the pairs are compared lexicographically.

Let $x'$ be the first intersection before $x$ of two lines
$f_{\sigma^x(k),x}(x'')$ and $f_{j,x}(x'')$, for $k \in S$ and $j \in
A_k \setminus \sigma^x$. Then (\ref{ineq:im1}) and (\ref{ineq:im2})
are satisfied for all $x'' \in (x',x)$, since (\ref{ineq:im1}) and
(\ref{ineq:im2}) are satisfied for $x''$ sufficiently close to $x$,
and the relations (inequalities) between $f_{\sigma^x(k),x}(x'')$ and
$f_{j,x}(x'')$ remain the same for all $k \in S$ and $j \in A_k$. In
particular, neither player has a strongly improving switch
w.r.t. $\sigma^x$ for $G^{x,x-x''}$ when $x'' \in [x' x)$. On
the other hand, either (\ref{ineq:im1}) or (\ref{ineq:im2}) is no longer
satisfied when $x'' < x'$. Hence, $\sigma^x$ is optimal for
$G^{x,x-x''}$ when $x' < x'' < x$, but not when $x'' < x'$. Note also
that if $\sigma^x$ is optimal for $G^{x,x-x''}$ for all $x'' \in
(x',x)$, then $\sigma^x$ is also optimal for $G^{x,x-x'}$. Indeed,
assume for the sake of contradiction that $\sigma^x$ is not optimal
for $G^{x,x-x'}$, then when $x''$ is sufficiently close to $x'$ it
must also be possible for one of the players to improve his value in 
$G^{x,x-x''}$ by switching to an optimal strategy for $G^{x,x-x'}$; a
contradiction. Hence, we have shown that $x'' = x'$ is the smallest
time for which $\sigma^x$ remains optimal for $G^{x,x-x''}$.
Note that although $\sigma^x$ is optimal for
$G^{x,x-x'}$ it is still possible that some player has an improving
switch that improves the path lengths without changing the payoffs.

We define $\NextEventPoint(G^x)$ to be the time
$x'$ of the first intersection before $x$ of two lines
$f_{\sigma^x(k),x}(x'')$ and $f_{j,x}(x'')$, for $k \in S$ and $j \in
A_k \setminus \sigma^x$.
We will later see that $x'$ is the next event
point preceding $x$, which justifies the name.
We next derive a concise definition of $\NextEventPoint(G^x)$.
Note that $x'$ must satisfy $f_{j,x}(x') =
f_{\pi(k),x}(x')$ and $f_{j,x}(x) \ne f_{\pi(k),x}(x)$ for some 
$k \in S$ and $j \in A_k$. I.e., if  
$f_{j,x}(x'') = f_{\sigma^x(k),x}(x'')$ for all $x''$, then $j$ can
never be an improving switch w.r.t. $\sigma^x$, and we may ignore the
action $j$ when looking for $x'$. Since we are working with lines it
is enough to check two different points to see whether two lines are
the same. It follows that
$\NextEventPoint(G^x)$ can be defined as:
\[
\max~\{0\}  \cup \{x' \in [0,x) \mid \exists k \in S, j \in
  A_k: ~
f_{j,x}(x') = f_{\pi(k),x}(x') \wedge f_{j,x}(x) \ne f_{\pi(k),x}(x) \}.
\]
Note that $\NextEventPoint(G^x)$ is well-defined, since there is only
one function $f_{j,x}$ for each action $j \in A$. 

The following lemma follows as a consequence of the discussion above.

\begin{lemma}\label{lemma:interval}
Let $x' = \NextEventPoint(G^x)$, then $\sigma^x$ is optimal for
$G^{x,y}$, for all $y \in (0,x-x']$. Furthermore, neither player has a
strongly improving switch w.r.t. $\sigma^x$ for $G^{x,y}$.
\end{lemma}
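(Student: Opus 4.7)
\medskip

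\noindent\textbf{Proof plan.} The plan is to translate the optimality question into a geometric statement about linear functions and then read off the conclusion from the definition of $\NextEventPoint$. Setting $x'' = x - y$, by Lemma~\ref{lemma:term} it suffices to verify the pair of lexicographic inequalities (\ref{ineq:im1}) and (\ref{ineq:im2}) at $x''$. Since the path lengths $\ell(k,\sigma^x,G^{x,y})$ do not depend on $y$, only the payoff coordinates vary, and these are precisely the values $f_{\sigma^x(k),x}(x'')$ and $f_{j,x}(x'')$. Hence the question reduces to comparing the two linear functions $f_{j,x}$ and $f_{\sigma^x(k),x}$ over $x'' \in (x',x)$, with path lengths breaking ties.

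For each pair $(k,j)$ with $j \in A_k$ I would then split cases. Either $f_{j,x} \equiv f_{\sigma^x(k),x}$, in which case both coordinates of the lexicographic comparison are independent of $x''$ and the required inequality holds by the choice of $\sigma^x$ as an optimal profile of $G^x$; or the two lines are distinct, and by the definition $\NextEventPoint(G^x) = x'$ they cannot meet inside $(x',x)$, so the strict first-coordinate inequality that holds for $x''$ infinitesimally below $x$ (again by optimality of $\sigma^x$ in $G^x$) persists throughout $(x',x)$. This immediately delivers optimality on the open interval $y \in (0, x-x')$ and rules out strongly improving switches there, since such a switch would require a strict reversal of the first-coordinate inequality.

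The more delicate step is the endpoint $y = x - x'$, where two lines may meet. For optimality at this endpoint I would invoke continuity: $u(k,\sigma^x,G^{x,y}) = a^x(k) + y\,b^x(k)$ is linear in $y$, and the value of $G^{x,y}$ is a pointwise min-max over the finitely many strategy profiles of payoffs each linear in $y$, hence continuous in $y$; since the two agree on $(0,x-x')$, they must agree at $y = x - x'$ as well. For the absence of strongly improving switches at $x''=x'$, I would argue that a strict first-coordinate inequality $f_{j,x}(x') < f_{\sigma^x(k),x}(x')$ (or the reverse for Player~2) would, by linearity, propagate to a right-neighborhood of $x'$, contradicting what was already shown on $(x',x)$. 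The main obstacle is precisely this boundary case: at $x''=x'$ two lines meet, so new ``weakly'' improving switches based on path-length tie-breaking may appear, and the proof must use that optimality requires only the correct first-coordinate payoff while the ``no strongly improving switch'' claim is strictly weaker than ``no improving switch'' and needs only first-coordinate strictness.
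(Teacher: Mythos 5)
Your proposal is correct and follows essentially the same route as the paper: reduce optimality of $\sigma^x$ in $G^{x,y}$ to the lexicographic line inequalities (\ref{ineq:im1})--(\ref{ineq:im2}), observe that since no pair of distinct lines $f_{j,x}$, $f_{\sigma^x(k),x}$ crosses in $(x',x)$ the inequalities valid near $x$ persist throughout the interval, and handle the endpoint $y = x-x'$ by a continuity argument while noting that only payoff ties (never strict reversals) can occur there, so no strongly improving switch arises. The paper presents this as the discussion preceding the lemma rather than as a separate proof, but the content is the same.
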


We are now ready to state the main technical lemma used to prove
the correctness of our algorithm.

\begin{lemma}\label{lemma:induction}
Let $x' = \NextEventPoint(G^x)$, and let $\pi = (\pi_1,\pi_2)$ be a
strategy profile that is optimal from time $x$. 
Then the strategy profile $\pi' = (\pi'_1,\pi'_2)$, defined by:
\[
\pi'(k,x'') = \begin{cases}
\sigma^x(k) & \text{if $x'' \in [x',x)$}\\
\pi(k,x'') & \text{otherwise}
\end{cases}
\]
is optimal from time $x'$, and $v_k(x'') = v_k(x) + b^x(k) (x-x'')$,
for $x'' \in [x',x)$ and $k\in S$.
\end{lemma}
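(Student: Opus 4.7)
The plan is to reduce both parts of the lemma to the Nash-equilibrium property of $\pi'$ from time $x'$, and then invoke Lemmas~\ref{lemma:connection} and~\ref{lemma:nash}. Since $\pi'$ agrees with $\pi$ on $[x,1]$, $\pi'$ is itself optimal from time $x$, so Lemma~\ref{lemma:connection} (with $x''$ playing the role of $x'$ in that lemma, for any $x'' \in [x',x)$) immediately yields $v_k^{\pi'}(x'') = v_k(x) + (x-x'')b^x(k)$ for every $k \in S$. If I can further establish that $\pi'$ is a Nash equilibrium from time $x'$, Lemma~\ref{lemma:nash} will promote this play-value to the game value $v_k(x'') = v_k^{\pi'}(x'')$, simultaneously giving the explicit value formula and the optimality of $\pi'$ from time $x'$ in a single step.

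Deviations at times in $[x,1]$ are harmless because $\pi$ is already in equilibrium from $x$, so the real work is to rule out profitable deviations during $[x',x)$. Consider a Player~2 deviation $\pi_2^*$ against $\pi_1'$; by symmetry I need only show $v_k^{\pi_1',\pi_2^*}(x'') \leq v_k(x) + (x-x'')b^x(k)$ for $x'' \in [x',x)$. The key tool is Lemma~\ref{lemma:interval}: $\sigma^x$ is optimal in the priced game $G^{x,x-x''}$, whose value at $k$ equals $v_k(x)+(x-x'')b^x(k)$ by Lemma~\ref{lemma:zero} and identity~(\ref{eq:G^{x,y}}). I would interpret the play under $(\pi_1',\pi_2^*)$ restricted to $[x'',x]$ as inducing a strategy pair in $G^{x,x-x''}$: Player~1's behaviour $\pi_1'|_{[x',x)} = \sigma^x_1$ already matches $\sigma^x_1$ there, and each immediate move of Player~2 in the SPTG corresponds to the same action in $G^{x,x-x''}$, whereas choosing to wait the entire remaining time at some state $k_i$ corresponds to the waiting action $\lambda_{k_i}$.

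The main obstacle is handling partial waits, where Player~2 waits at a state $k_i$ for some $\delta \in (0, x-x_i)$ before acting with $j$; such behaviour has no direct counterpart in $G^{x,x-x''}$. The remedy is the affine dependence on $\delta$ of the partial-wait cost $\delta r_{k_i} + c_j$ plus the inductive bound from $(d(j),x_i+\delta)$: it is dominated by the larger of the two extremes. At $\delta=0$ the bound is $f_{j,x}(x_i)$, which is at most $v_{k_i}(x)+(x-x_i)b^x(k_i)$ by optimality of $\sigma^x$ in $G^{x,x-x_i}$ (no strongly improving switch for Player~2). At $\delta=x-x_i$ the play reaches $(k_i,x)$ where $\pi_1$ is optimal from $x$, so $c_j+v_{d(j)}(x)\leq v_{k_i}(x)$, and the full-wait cost $(x-x_i)r_{k_i}+v_{k_i}(x)$ is then bounded by $v_{k_i}(x)+(x-x_i)b^x(k_i)$ via the inequality $r_{k_i}\leq b^x(k_i)$ at Player~2 states, which itself comes from comparing $\sigma^x_2(k_i)$ to the deviation $\lambda_{k_i}$ in $G^{x,x-x_i}$. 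A symmetric argument for Player~1 deviations, using the reverse inequality $r_{k_i}\geq b^x(k_i)$ at Player~1 states (and the analogous reasoning at Player~1 states where $\sigma^x_1(k_i)=\lambda$ forces the play to wait to time $x$), completes the verification of the Nash property and hence the lemma.
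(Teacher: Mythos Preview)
Your approach is correct and rests on the same pillars as the paper's proof (Lemmas~\ref{lemma:connection}, \ref{lemma:nash}, and \ref{lemma:interval}), but the route to the Nash-equilibrium verification is genuinely different. The paper argues by contradiction: assuming a profitable deviation for one player, it traces the resulting play, locates the first index $\ell$ at which the deviation ``catches up'' with $\pi'$, and shows that the transition at step $\ell-1$ forces the waiting action $\lambda_{k_{\ell-1}}$ to be a strongly improving switch in $G^{x,y}$ for $y=x_\ell-x_{\ell-1}\in(0,x-x']$, contradicting Lemma~\ref{lemma:interval}. You instead bound the cost of the deviating play directly, by backward induction along the play, using at each step that no action is strongly improving in $G^{x,x-x_i}$ and that $r_{k_i}\le b^x(k_i)$ (resp.\ $\ge$) at maximizer (resp.\ minimizer) states; partial waits are handled by affinity in the wait time. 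Your argument is a bit more elementary and avoids the ``first catch-up'' bookkeeping; the paper's argument has the virtue of localising the obstruction to a single improving switch.

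A few points you should make explicit to close the argument: the infinite-value case (which the paper dispatches separately, noting $v_k(x'')=\infty$ for one $x''$ iff for all $x''$); the case $\delta\ge x-x_i$, where Player~2's wait carries past time $x$ and the bound follows directly from optimality of $\pi$ from $x$ together with $r_{k_i}\le b^x(k_i)$; and the well-foundedness of your backward induction for Player~2 deviations, which needs the play to be finite---this holds because any zero-delay cycle against $\sigma_1^x$ would give Player~2 an infinite payoff in the priced game $G^{x,0}$, impossible when $v_k(x)<\infty$. Finally, your justification ``$\pi_1$ is optimal from $x$, so $c_j+v_{d(j)}(x)\le v_{k_i}(x)$'' is really the statement that $j$ is not strongly improving for Player~2 in $G^{x,0}$ (Lemma~\ref{lemma:zero} plus Lemma~\ref{lemma:interval} at $y=0$); you may want to phrase it that way.
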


\begin{proof}
Let us first note that for any strategy profile $\pi''$, the outcome
$v_{k_0}^{\pi''}(x_0)$, for some starting configuration $(k_0,x_0)$,
only depends on the choices made by $\pi''$ in the interval
$[x_0,1]$. Hence, since $\pi'$ is the same as $\pi$ in the interval
$[x,1]$, $\pi'$ is also optimal from time $x$.

Let us also note that $v_k(x'') = \infty$ for some $k \in S$ and $x''
\in [0,1]$ if and only if $v_k(x'') = \infty$ for all $x''
\in [0,1]$. Indeed, the value
is infinite exactly when the play has infinite length, and this property is
independent of time. Hence, costs and rates are of no
importance. $v_k(x'')$ is, thus, correctly set to $\infty$ if $v_k(x)
= \infty$. 
For the remainder of the proof we focus on the case where $v_k(x) <
\infty$. It follows immediately from Lemma~\ref{lemma:connection} that
the value function has the correct form in
the interval $[x',x)$. I.e., $v_k^{\pi'}(x'') = v_k(x) + b^x(k) (x-x'')$,
for $x'' \in [x',x)$.

We next show that $\pi'$ is optimal from time $x'$. From
Lemma~\ref{lemma:nash} we know that it suffices to show that  
$\pi' = (\pi'_1,\pi'_2)$ is a Nash equilibrium from time $x'$. 
Assume for the sake of contradiction that there exists a strategy $\pi_1''$, a
state $k_0$, and a time $x_0 \in [x',x)$, such that $v_{k_0}^{\pi_1'',\pi_2'}(x_0)
< v_{k_0}^{\pi_1',\pi_2'}(x_0)$. Consider the finite play 
\[
\rho_{k_0,x_0}^{\pi_1'',\pi_2'} = (k_0,x_0) \xrightarrow{j_0, \delta_0} (k_1,x_1)
\xrightarrow{j_1, \delta_1} \dots \xrightarrow{j_{t-1}, \delta_{t-1}}
(k_t,x_t)~,
\] 
and assume for
simplicity that, if $x_t \ge x$, a configuration appears at time $x$. Let $\ell > 0$ be
the minimum index such that
$v_{k_{\ell'}}^{\pi_1'',\pi_2'}(x_{\ell'}) \ge v_{k_{\ell'}}^{\pi_1',\pi_2'}(x_{\ell'})$ for all
$\ell' \ge \ell$.
Note that since the play is finite, meaning that $k_t$ is the terminal
state, we have
$v_{k_{t}}^{\pi_1'',\pi_2'}(x_{t})=v_{k_{t}}^{\pi_1',\pi_2'}(x_{t}) = 0$. Hence, $\ell$ is well-defined. Also, since
$(\pi_1',\pi_2')$ is optimal from time $x$, and $x$ appears in a
configuration of the play if $x_t \ge x$, we must have $x_\ell \le x$.

Observe that:
\begin{align*}
v_{k_{\ell-1}}^{\pi_1',\pi_2'}(x_{\ell-1}) &~>~
v_{k_{\ell-1}}^{\pi_1'',\pi_2'}(x_{\ell-1})\\
&~=~ (x_{\ell}-x_{\ell-1}) r_{k_{\ell-1}} + c_{j_{\ell-1}} +
v_{k_{\ell}}^{\pi_1'',\pi_2'}(x_\ell)\\
&~\ge~ (x_{\ell}-x_{\ell-1}) r_{k_{\ell-1}} + c_{j_{\ell-1}} +
v_{k_{\ell}}^{\pi_1',\pi_2'}(x_\ell) ~.
\end{align*}
Continuing this line of thought we next prove that $c_{j_{\ell-1}} +
v_{k_{\ell}}^{\pi_1',\pi_2'}(x_\ell) \ge
v_{k_{\ell-1}}^{\pi_1',\pi_2'}(x_\ell)$, which shows that:
\begin{equation}
v_{k_{\ell-1}}^{\pi_1',\pi_2'}(x_{\ell-1}) ~>~ (x_{\ell}-x_{\ell-1}) r_{k_{\ell-1}} +
v_{k_{\ell-1}}^{\pi_1',\pi_2'}(x_\ell)~.\label{ineq:small}
\end{equation}
We consider two cases: $(i)$ $k_{\ell-1}
\in S_2$, and $(ii)$ $k_{\ell-1} \in S_1$. $(i)$ if $k_{\ell-1}
\in S_2$ then $v_{k_{\ell-1}}^{\pi_1',\pi_2'}(x_\ell) = c_{j_{\ell-1}} +
v_{k_{\ell}}^{\pi_1',\pi_2'}(x_\ell)$ because
$\rho_{k_0,x_0}^{\pi_1'',\pi_2'}$ is also defined by $\pi_2'$. 
The second case $(ii)$ where $k_{\ell-1}
\in S_1$ is slightly more involved.
Assume for the sake of contradiction that
$c_{j_{\ell-1}} + v_{k_{\ell}}^{\pi_1',\pi_2'}(x_\ell) <
v_{k_{\ell-1}}^{\pi_1',\pi_2'}(x_\ell)$. Then $j_{\ell-1}$
is a strongly improving switch w.r.t. $\sigma^x$ for $G^{x,0}$ for
Player 1, and
therefore also for $G^{x}$. This, however, contradicts that neither
player has an improving switch w.r.t. $\sigma^x$ for $G^{x}$.

As argued above we have that $v_k^{\pi'}(x'') = v_k(x) + b^x(k) (x-x'')$,
for all states $k$ and $x'' \in [x',x)$. From (\ref{ineq:small}) we
  therefore get:
\[
v_{k_{\ell-1}}^{\pi_1',\pi_2'}(x_{\ell-1}) ~>~ (x_{\ell}-x_{\ell-1}) r_{k_{\ell-1}} +
v_{k_{\ell-1}}^{\pi_1',\pi_2'}(x_\ell) \quad \Rightarrow
\]
\[
v_{k_{\ell-1}}(x) + b^{x}(k_{\ell-1}) (x-x_{\ell-1}) ~>~ 
(x_{\ell}-x_{\ell-1}) r_{k_{\ell-1}} +
v_{k_{\ell-1}}(x) + b^{x}(k_{\ell-1}) (x-x_{\ell}) \quad
\Rightarrow
\]
\begin{equation}\label{ineq:contradiction}
v_{k_{\ell-1}}(x) + b^{x}(k_{\ell-1}) (x_{\ell}-x_{\ell-1}) ~>~ 
(x_{\ell}-x_{\ell-1}) r_{k_{\ell-1}} +
v_{k_{\ell-1}}(x)
\end{equation}
Let $y = x_{\ell}-x_{\ell-1}$.
From Lemma~\ref{lemma:zero} we know that 
$v_k(x) = a^x(k) = u(k,\sigma^x,G^{x,0})$ for all $k$. It then follows from 
(\ref{eq:G^{x,y}}) that the left-hand-side of
(\ref{ineq:contradiction}) is equal to
$u(k_{\ell-1},\sigma^x,G^{x,y})$. Observe also that the
right-hand-side of (\ref{ineq:contradiction}) is equal to 
$c_{\lambda_{k_{\ell-1}}}^{x,y}$. Hence, (\ref{ineq:contradiction})
shows that $\lambda_{k_{\ell-1}}$ is a strongly improving switch
w.r.t. $\sigma^x$ in $G^{x,y}$. 
Since $x_{\ell-1},x_{\ell} \in [x',x]$ and $x_{\ell-1} < x_{\ell}$
we have $y \in (0,x-x']$. It then follows from
Lemma~\ref{lemma:interval} that neither player has a strongly
improving switch w.r.t. $\sigma^x$ for $G^{x,y}$, and we get a
contradiction. Thus, $\pi'$ is a Nash equilibrium from time
$x'$.

The case for Player 2 is analogous.

Let us note that the same proof would not immediately work for the
more general strategies described in Remark~\ref{rem:common positional strategies}.
\end{proof}

\smallskip\noindent{\bf Proof of correctness of the algorithm.} Lemma~\ref{lemma:induction} allows us to compute optimal strategies by
backward induction once the values $v_k(1)$ at time 1 are
known for all states $k \in S$. Finding $v_k(1)$ and corresponding
optimal strategies from time 1 is, fortunately, not difficult. Indeed, when
$x = 1$ time does not increase further, and we simply solve the priced
game $G'$ that defines $G$. The resulting algorithm is shown in
Figure~\ref{fig:solve}. Note that the choice of first using the
$\ExtendedDijkstra$ algorithm  of Khachiyan {\em et al.}
\cite{KBBEGRZ08} and then the $\StrategyIteration$
algorithm is to facilitate the analysis in Section~\ref{sec:bound}. In
fact, any algorithm for solving priced games could be used.
By observing that $\SolveSPTG$ repeatedly applies
Lemma~\ref{lemma:induction} to construct optimal strategies by
backward induction we get the following theorem.

\begin{theorem}\label{thm:correct}
If $\SolveSPTG$ terminates, it correctly computes the value
function and optimal strategies for both players.
\end{theorem}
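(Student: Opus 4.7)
The plan is to prove the theorem by induction on the iterations of $\SolveSPTG$, which proceeds by backward induction from $x = 1$ down to $x = 0$. The invariant I would maintain is: after each iteration, the algorithm has produced a strategy profile $\pi$ that is optimal from the current event point $x$, together with the value functions $v_k$ on $[x, 1]$ for all $k \in S$.

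For the base case at $x = 1$: since the clock cannot advance beyond $1$, no waiting action can actually be taken from time $1$, so playing the SPTG from time $1$ reduces to playing the underlying priced game $G'$ that defines $G$. The algorithm calls $\ExtendedDijkstra$ (and, if desired, $\StrategyIteration$) on this priced game; by the correctness of these procedures (e.g.\ Theorem~\ref{thm:strategyiteration}) we obtain a positional optimal strategy profile together with the values $v_k(1)$ for every $k \in S$, which establishes the invariant at $x = 1$.

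For the inductive step: assume the invariant holds at some time $x > 0$. The algorithm computes a strategy profile $\sigma^x$ for the priced game $G^x$ in which neither player has an improving switch (which is guaranteed to exist by Theorem~\ref{thm:strategyiteration}), then sets $x' = \NextEventPoint(G^x)$ and extends $\pi$ to the interval $[x', x)$ by playing $\sigma^x$ there. This is exactly the hypothesis of Lemma~\ref{lemma:induction}, so the extended profile $\pi'$ is optimal from time $x'$ and, for $x'' \in [x', x)$ and every $k \in S$,
\[
v_k(x'') \;=\; v_k(x) + b^x(k)\,(x - x'')\,,
\]
matching the linear segments that the algorithm records. Hence the invariant is preserved at the new event point $x'$.

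Since we are assuming the algorithm terminates, the induction reaches $x' = 0$, at which point we have optimal strategies and the value function on all of $[0, 1]$. The only subtlety, and the main thing to check carefully, is that the invariant is really what Lemma~\ref{lemma:induction} delivers: the lemma concatenates a strategy that is optimal from $x$ with the constant-in-action strategy $\sigma^x$ on $[x', x)$, and the resulting profile remains a valid positional strategy in the sense of Section~\ref{sec-simple} because $\sigma^x$ prescribes the same action throughout $[x', x)$, so the ``waiting'' constraint on positional strategies is satisfied. Everything else is a direct invocation of Theorem~\ref{thm:strategyiteration} and Lemma~\ref{lemma:induction}.
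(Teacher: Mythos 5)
Your proposal is correct and follows essentially the same route as the paper, which proves Theorem~\ref{thm:correct} by observing that the base case at $x=1$ reduces to solving the priced game $G'$ and that each iteration is an application of Lemma~\ref{lemma:induction} to the improving-switch-free profile $\sigma^x$ returned by $\StrategyIteration$. Your added check that the concatenated profile remains a valid positional strategy (constant action on $[x',x)$) is a reasonable extra detail but not a departure from the paper's argument.
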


Note that $\SolveSPTG$ resembles the sweep-line algorithm of
Shamos and Hoey~\cite{SH76} for the
line segment intersection problem. At every time
$x$ we have $n$ ordered sets of line segments with an intersection
within one set at the next event point $x' =
\NextEventPoint(G^x)$. When handling the event point, the order of the
line segments is updated, and we move on to the next event point.

\begin{figure}
\begin{center}
\parbox{4.7in}{
\begin{function}[H]

$(v(1),(\pi_1(1),\pi_2(1))) \gets \ExtendedDijkstra(G')$\;
$x \gets 1$\;
\While{$x > 0$}{
$(a^x(k) + \epsilon b^x(k),(\sigma_1,\sigma_2)) \gets
  \StrategyIteration(G^{x},(\pi_1(x),\pi_2(x)))$\;
$x' \gets \NextEventPoint(G^x)$\;
\ForAll{$k \in S$ and $x'' \in [x',x)$}{
$v_k(x'') \gets v_k(x) + b^x(k) (x-x'')$\;
$\pi_1(k,x'') \gets \sigma_1(k)$\;
$\pi_2(k,x'') \gets \sigma_2(k)$\;
}
$x \gets x'$\;
}

\Return{$(v,(\pi_1,\pi_2))$}\;
\caption{\SolveSPTG(\mbox{$G$})}
\end{function}
}
\end{center}
\caption{Algorithm for solving a simple priced timed game $G = (G',(r_k)_{k \in S})$.}
\label{fig:solve}
\end{figure}

\subsection{Bounding the number of event points}\label{sec:bound}

Let $G$ be an SPTG. Recall that the only difference between $G^x$ and
$G^{x'}$, for $x \ne x'$, are the costs of actions $\lambda_k$, for
$k\in S$, if $v_k(x) \ne v_k(x')$. The actions
available from each state are therefore the same, and a strategy profile
$\sigma$ for $G^x$ can, thus, also be interpreted as a strategy
profile for $G^{x'}$. To bound the number of event points we assign a
potential to each strategy profile $\sigma$, such that the potential
strictly decreases when one of the players performs a single improving
switch. Furthermore, the potential is defined independently of the
values $v_k(x)$. It then follows that the number of single improving
switches performed by the $\SolveSPTG$ algorithm is at most the total
number of strategy profiles for $G^x$. We further improve this bound
to show that the number of event points is at most exponential in the
number of states. This improves the previous bound by
Rutkowski~\cite{Rutkowski2011}.

\smallskip\noindent{\bf The function $\countt(\sigma,i,\ell,r)$.} Let $n$ be the number of states of $G$, let $N$ be the number of
distinct rates, including rate 0 for the terminal state $\perp$.
Assume that the distinct rates are
ordered such that $r_1 < r_2 < \dots < r_N$. 
Recall that $r(k,\sigma)$ is the rate of the waiting state reached from
$k$ in $\sigma$. Let
\[
\countt(\sigma,i,\ell,r) = |\{k \in S_i \mid \ell(k,\sigma) = \ell ~\wedge~
r(k,\sigma) = r\}|
\]
be the number of states controlled by Player $i$
at distance $\ell$ from $\perp$ in $\sigma$ that reach a waiting state
with rate $r$.

\smallskip\noindent{\bf The potential matrix.} For every strategy profile $\sigma$ for the
priced games $G^x$, for $x \in (0,1]$, define the potential $P(\sigma)
\in \N^{n \times N}$ as an integer matrix as follows. 
\[
P(\sigma)_{\ell, r} = \countt(\sigma,2,\ell,r) - \countt(\sigma,1,\ell,r)
\]
I.e., rows correspond to lengths, columns correspond to rates, and
entries count the number of corresponding Player 2 controlled states
minus the number of corresponding Player 1 controlled states. 

\smallskip\noindent{\bf Ordering the potential matrices.} We define a lexicographic ordering of potential matrices where,
firstly, entries corresponding to lower rates are of higher
importance. Secondly, entries corresponding to shorter lengths are
more important. Formally, we write $P(\sigma) \prec P(\sigma')$ if and
only if there exists $\ell$ and $r$ such that:
\begin{itemize}
\item
$P(\sigma)_{\ell',r'} = P(\sigma')_{\ell',r'}$ for all $r' < r$ and $1
  \le \ell' \le n$.
\item
$P(\sigma)_{\ell',r} = P(\sigma')_{\ell',r}$ for all $\ell' < \ell$.
\item
$P(\sigma)_{\ell,r} < P(\sigma')_{\ell,r}$.
\end{itemize}

\begin{figure}
\[
P(\sigma^{(1)}) = \begin{bmatrix}
  \vspace{0.1cm}
  0 & 0 & 0 & 0\\
  \vspace{0.1cm}
  -1 & 0 & 0 & 0\\
  \vspace{0.1cm}
  -1 & 0 & 0 & 0\\
  \vspace{0.1cm}
  1 & 0 & 0 & 0\\
  0 & 0 & 0 & 0
\end{bmatrix}
\quad
P(\sigma^{(2)}) = \begin{bmatrix}
  \vspace{0.1cm}
  -1 & 0 & -1 & 1\\
  \vspace{0.1cm}
  0 & 0 & -1 & 0\\
  \vspace{0.1cm}
  0 & 0 & 1 & 0\\
  \vspace{0.1cm}
  0 & 0 & 0 & 0\\
  0 & 0 & 0 & 0
\end{bmatrix}
\]

\smallskip

\[
P(\sigma^{(3)}) = \begin{bmatrix}
  \vspace{0.1cm}
  -1 & 1 & 0 & 1\\
  \vspace{0.1cm}
  -1 & -1 & 0 & 0\\
  \vspace{0.1cm}
  0 & 0 & 0 & 0\\
  \vspace{0.1cm}
  0 & 0 & 0 & 0\\
  0 & 0 & 0 & 0
\end{bmatrix}
\quad
P(\sigma^{(4)}) = \begin{bmatrix}
  \vspace{0.1cm}
  -1 & 0 & -1 & 1\\
  \vspace{0.1cm}
  -1 & 0 & 0 & 1\\
  \vspace{0.1cm}
  0 & 0 & 0 & 0\\
  \vspace{0.1cm}
  0 & 0 & 0 & 0\\
  0 & 0 & 0 & 0
\end{bmatrix}
\]
\caption{Example of potential matrices of the strategy profiles from
  Figure~\ref{fig:examplesptg}.
  }
\label{fig:potentials}
\end{figure}

\smallskip\noindent{\bf Example.} Figure~\ref{fig:potentials} shows an example of the potential matrices
of the strategy profiles shown in Figure~\ref{fig:examplesptg}. We use
the following notation:
\begin{itemize}
\item
$\sigma^{(1)}$ is the strategy profile used at 
time $x=1$,
\item
$\sigma^{(2)}$ is the strategy profile used at
time $x\in [2/3,1)$,
\item
$\sigma^{(3)}$ is the strategy profile used at
time $x \in [1/3,2/3)$,
\item
and $\sigma^{(4)}$ is the strategy profile used at
time $x \in [0,1/3)$. 
\end{itemize}
$\sigma^{(1)}$ is also shown in Figure~\ref{fig:examplepricedgame}.
Observe that $P(\sigma^{(1)})_{1,1} = 0$ because states 1 and 5 are
controlled by Player 2 and 1, respectively, and both
move directly to $\perp$, which has rate 0 (thus the entry is 0 because we add $1$ because of state $1$ and subtract $1$ because of state 5).
Also note that the potentials do indeed decrease for the four matrices. At
each event point the strategies are updated for multiple states, however.

\begin{lemma}\label{lemma:potential}
Let $\sigma$ be a strategy profile that is optimal for $G^{x,0}$, for
some $x \in (0,1]$. Let $j \in A^i$ be an improving switch for Player
$i$ w.r.t. $\sigma$ in the priced game $G^x$. Then $P(\sigma[j]) \prec
P(\sigma)$.
\end{lemma}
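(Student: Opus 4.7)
The plan is to pinpoint the first cell of $P(\cdot)$, in the matrix lex order used to compare potentials, at which $P(\sigma[j])$ and $P(\sigma)$ disagree, and to show that the change at that cell is strictly negative.

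Since $\sigma$ is optimal for $G^{x,0}$, Lemma~\ref{lemma:pgnash} makes it a Nash equilibrium there, so neither player has a switch that strictly changes the non-$\epsilon$ component of the payoff. Yet $j$ is improving in $G^x=G^{x,\epsilon}$, whose payoffs are the lex pairs $a^x+\epsilon b^x$. The strict improvement cannot come from $a^x$, so $c_j+a^x(\dest(j))=a^x(k)$, and the gain is concentrated in the $\epsilon$-coefficient or in the length.

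Let $k\in S_i$ be such that $j\in A_k$, and let $K=\{k'\in S:P_{k',\sigma}\text{ visits }k\}$. Only the valuations of states in $K$ change when $\sigma$ is replaced by $\sigma[j]$, and for every $k'\in K$ I record three uniform facts: the old rate $r(k',\sigma)=r(k,\sigma)=:r_{\text{old}}$; the new rate $r(k',\sigma[j])=:r_{\text{new}}$ equals $r_k$ if $j=\lambda_k$ and $r(\dest(j),\sigma)$ otherwise (note $\dest(j)\notin K$, else $\sigma[j]$ would create a cycle and strictly raise the non-$\epsilon$ payoff from $k$, contradicting the Nash property); and the lengths satisfy $\ell(k',\sigma)=\ell_{k',k}+\ell(k,\sigma)$ and $\ell(k',\sigma[j])=\ell_{k',k}+\ell_j$, where $\ell_j:=1+\ell(\dest(j),\sigma)$ and $\ell_{k',k}$ is the $\sigma$-distance from $k'$ to $k$. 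In particular, $k$ uniquely attains the minimum over $K$ of both $\ell(\cdot,\sigma)$ and $\ell(\cdot,\sigma[j])$.

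Combining the non-$\epsilon$ equality with the definition of improving switch in $G^x$ forces $(r_{\text{new}},\ell_j)$ to be lex-less than $(r_{\text{old}},\ell(k,\sigma))$ when $i=1$, and lex-greater when $i=2$. Splitting into the resulting four subcases, I identify the first cell of $P(\cdot)$ in the rate-major, length-minor order that is affected: in the Player~1 subcases it is $(\ell_j,r_{\text{new}})$, to which $k$ is added and (by the uniqueness above) no other $K$-state contributes; in the Player~2 subcases it is $(\ell(k,\sigma),r_{\text{old}})$, from which $k$ is removed and no other $K$-state contributes. Using $P_{\ell,r}=\countt(\sigma,2,\ell,r)-\countt(\sigma,1,\ell,r)$, adding $k\in S_1$ decreases $P$ by $1$ and removing $k\in S_2$ also decreases $P$ by $1$, while all lex-earlier cells are unchanged. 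Hence $P(\sigma[j])\prec P(\sigma)$. The main obstacle is verifying that $k$ is the sole $K$-contributor at the identified first differing cell, which is exactly where the $K$-minimality of $k$ in the relevant length parameter is crucial; the rest—checking that cells with smaller rate, or equal rate and smaller length, are unaffected, and unifying the edge cases $j=\lambda_k$ and $\dest(j)=\perp$ through the definition of $r_{\text{new}}$—is routine bookkeeping.
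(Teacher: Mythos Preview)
Your proof is correct and follows essentially the same approach as the paper's: both identify the critical cell as $(\ell(k,\sigma[j]),r(k,\sigma[j]))$ for $i=1$ and $(\ell(k,\sigma),r(k,\sigma))$ for $i=2$, and argue this is the lex-first differing entry because every affected state passes through $k$ and hence has the same rate but strictly larger length. Your write-up is more explicit than the paper's about the set $K$ of affected states and the uniqueness of $k$ as the minimal-length element of $K$; one small imprecision is that for $i=1$ the reason $\dest(j)\notin K$ is not the Nash property but the hypothesis that $j$ is improving for Player~1 (a cycle would make $u(k,\sigma[j])=\infty$, so $j$ could not be improving), though your conclusion stands.
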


\begin{proof}
Consider the game $G^x$. Recall that for every strategy profile
$\sigma'$ and state $k \in S$, we let $u(k,\sigma',G^x) = u(k,\sigma',G^{x,0}) + \epsilon
r(k,\sigma')$, where $\epsilon$ is an infinitesimal. Since $\sigma$ is optimal
for $G^{x,0}$ we must have $u(k,\sigma,G^{x,0}) = u(k,\sigma[j],G^{x,0})$ for all $k \in
S$. Indeed, otherwise $j$ would be a strongly improving switch
w.r.t. $\sigma$ in $G^{x,0}$, implying that $\sigma$ is not optimal
for $G^{x,0}$. 

Let $k$ be the state from which the action $j$ originates. It then
follows that $u(k,\sigma) \ne \infty$ and $u(k,\sigma[j]) \ne \infty$.
I.e., it is not possible for exactly one of the payoffs to be infinite,
and if both payoffs are infinite then $j$ would not be an improving
switch.

Assume that $i = 1$. Since $j \in A_k$ is an improving switch for Player 1 we
have $\nu(k,\sigma[j]) < \nu(k,\sigma)$. It is, thus, either the case
that $r(k,\sigma[j]) < r(k,\sigma)$, or that $r(k,\sigma[j]) =
r(k,\sigma)$ and $\ell(k,\sigma[j]) <
\ell(k,\sigma)$. In both cases the most significant
entry $\ell,r$ for which $P(\sigma)_{\ell,r} \ne
P(\sigma[j])_{\ell,r}$ is $\ell = \ell(k,\sigma[j])$ and $r =
r(k,\sigma[j])$. Indeed, all states with new valuations in $\sigma[j]$
move through state $k$ and, thus, have same rates but larger lengths.
Since $i = 1$ we have $P(\sigma)_{\ell,r} < P(\sigma[j])_{\ell,r}$
and, thus, $P(\sigma[j]) \prec P(\sigma)$.

The case for $i = 2$ is similar. $j \in A_k$ is an improving switch
for Player 2, implying that either $r(k,\sigma[j]) > r(k,\sigma)$, or
$r(k,\sigma[j]) = r(k,\sigma)$ and
$\ell(k,\sigma[j]) > \ell(k,\sigma)$. The most significant
entry $\ell,r$ for which $P(\sigma)_{\ell,r} \ne
P(\sigma[j])_{\ell,r}$ is then $\ell = \ell(k,\sigma)$ and $r =
r(k,\sigma)$. Since $i = 2$ we again have $P(\sigma)_{\ell,r} <
P(\sigma[j])_{\ell,r}$ and subsequently $P(\sigma[j]) \prec
P(\sigma)$.

\end{proof}

\begin{theorem}\label{thm:exp}
The total number of event points for any SPTG $G$ with $n$ states is
$\EventPoints(G) \le \min \{12^n, \prod_{k \in S}
(|A_k|+1)\}$. Furthermore, if there is only one player then $\EventPoints(G) = O(n^2)$.
\end{theorem}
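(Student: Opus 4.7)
The plan is to interpret the sequence of strategy profiles produced by \SolveSPTG, one per processed time, as a strictly lex-decreasing chain of potential matrices, and then to upper bound the length of any such chain in two ways.

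Let $1 = x^{(0)} > x^{(1)} > \cdots > x^{(L)} \ge 0$ be the times processed by \SolveSPTG, with $\sigma^{(i)}$ the profile returned at iteration $i$. By Lemma~\ref{lemma:zero} each $\sigma^{(i)}$ is optimal for $G^{x^{(i)},0}$, and I would first show it is also optimal for $G^{x^{(i+1)},0}$: splitting on whether the path $P_{k,\sigma^{(i)}}$ ends with a waiting action $\lambda_{k^*}$, and using Lemma~\ref{lemma:connection} together with the identity $b^{x^{(i)}}(k)=r_{k^*}$ in the waiting case (or $b^{x^{(i)}}(k)=0$ otherwise), a direct calculation gives $u(k,\sigma^{(i)},G^{x^{(i+1)},0}) = v_k(x^{(i+1)})$, which is the optimal value in $G^{x^{(i+1)},0}$ by Lemma~\ref{lemma:zero} applied at $x^{(i+1)}$. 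Consequently, when \StrategyIteration runs on $G^{x^{(i+1)}}$ starting from $\sigma^{(i)}$, no switch it performs can be strongly improving, so Lemma~\ref{lemma:potential} applies individually to each switch and the potential strictly decreases at every switch. Moreover at every true event point at least one switch must occur, because otherwise $\sigma^{(i+1)}=\sigma^{(i)}$ forces $b^{x^{(i+1)}}(k)=b^{x^{(i)}}(k)$ for all $k$, so the slopes of the value functions are unchanged at $x^{(i+1)}$ and $x^{(i+1)}$ is not an event point at all. Hence $P(\sigma^{(0)}) \succ P(\sigma^{(1)}) \succ \cdots \succ P(\sigma^{(L)})$ is strictly $\prec$-decreasing.

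The bound $L \le \prod_{k \in S}(|A_k|+1)$ is then immediate, since each potential is determined by a strategy profile of $G^x$ and there are exactly $\prod_{k \in S}(|A_k|+1)$ such profiles (one action per state, including the waiting action $\lambda_k$). The bound $L \le 12^n$ requires a finer combinatorial analysis of lex-decreasing chains of matrices of the restricted form $P(\sigma) = \sum_{k \in S_2} e_{\ell(k,\sigma),r(k,\sigma)} - \sum_{k \in S_1} e_{\ell(k,\sigma),r(k,\sigma)}$ in $\Z^{n \times N}$: each state contributes $\pm 1$ to exactly one cell, the column-major lex ordering prioritises smaller rates, and Lemma~\ref{lemma:potential} sharply constrains which cells can change in a single switch, so a careful state-by-state accounting yields the clean base-$12$ exponent.

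For the single-player case all entries of $P(\sigma)$ have a single sign and a switch simply migrates one state's contribution to a lex-smaller cell; a counting analogous to the classical polynomial bound for Howard's algorithm on MDPs then gives $L = O(n^2)$. The main obstacle is the $12^n$ bound: brute counting over strategies and generic integer lex-chain counting yield only bounds exponential in $n \log n$, and obtaining $12^n$ requires carefully exploiting both the unit-vector decomposition of $P(\sigma)$ and the structure of improving-switch transitions allowed by Lemma~\ref{lemma:potential}.
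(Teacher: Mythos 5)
Your setup is the same as the paper's: the sequence of profiles produced by \SolveSPTG{} (with single switches) forms a strictly $\prec$-decreasing chain of potential matrices via Lemma~\ref{lemma:potential}, each event point forces at least one switch, and the $\prod_{k\in S}(|A_k|+1)$ bound follows by counting strategy profiles of $G^x$. That part is fine. But the two remaining claims are not actually proved, and for the $12^n$ bound you have in fact misidentified what needs to be counted. You write that one must analyze ``lex-decreasing chains'' and that generic chain counting only gives $n^{O(n)}$ --- but no chain analysis is needed at all: since the chain is \emph{strictly} decreasing in a total order, its length is at most the number of \emph{distinct potential matrices} that can arise, and that set can be enumerated directly. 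The paper's count exploits the structural constraint that if $S_{\ell,r}\neq\emptyset$ then $S_{\ell',r}\neq\emptyset$ for all $\ell'<\ell$ (path lengths to $\perp$ within a fixed rate class form a prefix of rows). One first chooses the nonempty set of columns ($\le 2^{n+1}$ ways), then assigns the $n$ states one at a time in column-major order, each assignment having at most $6$ options ($2$ for which player owns the state, times $3$ for whether to stay in the current cell, advance a row, or jump to the next used column), with a factor $3$ saved on the last state; this gives $\le 2^{n+1}\cdot 6^n/3 \le 12^n$ matrices. Without this explicit enumeration your ``careful state-by-state accounting'' is a placeholder, and you yourself flag it as the main obstacle.

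The single-player bound has a similar gap. Appealing to ``the classical polynomial bound for Howard's algorithm on MDPs'' does not work: no such general polynomial bound is known, and in any case lex-decrease of a single-signed potential over an $n\times N$ grid does not by itself yield $O(n^2)$ (the number of such matrices is far larger). The argument the paper uses is different and more elementary: with one player the rates $r(k,\sigma^x)=b^x(k)$ change \emph{monotonically} as $x$ decreases (non-increasing if Player~1 owns everything, non-decreasing for Player~2), and at every event point at least one state must change its rate (otherwise all slopes $b^x(k)$ are unchanged and the point is not an event point). Each state can therefore change rate at most $N\le n+1$ times, giving at most $nN=O(n^2)$ event points. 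You need this monotonicity observation; the migration-to-a-lex-smaller-cell picture alone does not deliver the quadratic bound.
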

\begin{proof}
Consider the variant of the $\SolveSPTG$ algorithm where
$\StrategyIteration$ only performs single
improving switches for both players. I.e., when solving $G^x$, for
some $x \in (0,1]$, Player 1 performs one improving switch, then
Player 2 repeatedly performs single improving switches as long as
possible, and then the process is repeated. The resulting optimal
strategy profile $\sigma^x$ is then used as the starting point for
solving the next priced game $G^{x'}$, for $x' =
\NextEventPoint(G^x)$.

Once the initial strategy profile $\sigma = (\pi_1(1),\pi_2(1))$ is
found, any strategy profile $\sigma'$ that is subsequently produced by
the $\StrategyIteration$ algorithm at some time $x$ is optimal for the
priced game $G^{x,0}$. I.e., $\sigma^x$ is optimal for all $G^{x''}$
with $x'' \in (x',x]$, where $x' = \NextEventPoint(G^x)$. In
particular, the payoffs resulting from $\sigma^x$ and $\sigma^{x'}$ in
$G^{x'}$ only differ by some second order term. Hence, we can apply
Lemma~\ref{lemma:potential} to the strategy profiles, and conclude
that the potential decreases with every improving switch. From this we
immediately get that the total number of strategy profiles in $G^x$, $\prod_{k \in S} (|A_k|+1)$, is an upper
bound on $\EventPoints(G)$.

We next show that $\EventPoints(G) \le 12^n$. A matrix $P \in \N^{n
  \times N}$ corresponding to
a legal potential can always be constructed in the following
way. Let each entry $(\ell,r)$ be associated with a set $S_{\ell,r}$
of corresponding states. I.e., $S_{\ell,r}$ contains the states for
which it takes $\ell$ moves to reach $\perp$ in the priced game, and
the rate encountered is the $r$'th smallest rate of the game.
Pick a non-empty subset of the columns $C \subseteq
\{1,\dots,N\}$. This will be the columns, such that in column $r$, there is an $\ell$ such that $\countt(\sigma,2,\ell,r)\neq 0$ or $\countt(\sigma,1,\ell,r)\neq 0$. This can be done in at most $2^N-1 \le 2^{n+1}$ ways. Next,
assign states to the sets of the entries. If $S_{\ell,r} \ne
\emptyset$, then we must also have $S_{\ell',r} \ne \emptyset$ for all
$\ell' < \ell$, by definition. This allows us to assign states to sets in an ordered
way. Let $(\ell,r)$ be the current entry starting
from $\ell = 1$ and $r = \min ~ C$. The current entry will be lexicographic increasing in $(r,\ell)$. Repeatedly add a state from either
$S_1$ or $S_2$ to
$S_{\ell,r}$ and update the current entry in one of the following three
ways: 
\begin{itemize}
\item Do nothing: More states will be assigned to $S_{\ell,r}$.
\item Move to the next row: No more states will be assigned to
 $S_{\ell,r}$, but some will be assigned to $S_{\ell,r+1}$.
\item Move to the beginning of the next column of $C$:
No more states will be assigned to $S_{\ell,r'}$ for any $r'$.
\end{itemize}
There are $n$ (one for each state in the game) such
iterations, and in each iteration there are at most six possible
options. Hence, the states can be added in at most $6^n$ ways.
Furthermore, we do not need to update the current entry after the last
state has been added, which saves us a factor of 3. The
total number of possible matrices $P$ is, thus, at most $12^n$.

When there is only one player $i$ the argument becomes much
simpler. Observe that the rates change monotonically when going back
in time: if $i = 1$ the rates decrease, and if $i = 2$ the rates
increase. Furthermore, at every event point at least one state changes
rate. Hence, there can be at most $nN \le n(n+1)$ event points.

\end{proof}

\begin{theorem}\label{thm:time}
$\SolveSPTG$ solves any SPTG $G$ in time $O(m \cdot \min \{12^n,
\prod_{k \in S} (|A_k|+1)\})$ in the unit cost
model, where $n$ is the number of states and $m$
is the number of actions. Alternatively, the variant of $\SolveSPTG$
that uses the $\ExtendedDijkstra$ algorithm instead of
$\StrategyIteration$ solves $G$ in
time $O(\EventPoints(G) (m + n \log n))$.
\end{theorem}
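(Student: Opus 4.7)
The plan is to derive correctness directly from Theorem~\ref{thm:correct}, and then give separate time analyses for the two variants of \SolveSPTG. Both bounds will ultimately rely on the bound $\EventPoints(G) \leq \min\{12^n, \prod_{k\in S}(|A_k|+1)\}$ established in Theorem~\ref{thm:exp}. The outer while-loop makes exactly $\EventPoints(G)+1$ iterations (one initialization at $x = 1$ plus one iteration per event point), and per iteration we need to (i) solve a priced game on $G^x$, (ii) compute $\NextEventPoint(G^x)$, and (iii) update the stored value functions and strategies.

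For the second bound, I would argue per iteration. The call to $\ExtendedDijkstra(G^x)$ runs in $O(m + n\log n)$ by the analysis of Khachiyan \emph{et al.}, since $G^x$ has the same number of actions as $G'$ plus $n$ waiting actions. The function $\NextEventPoint(G^x)$ is computed by scanning all $m$ actions and, for each, solving a single linear equation of the form $f_{j,x}(x'') = f_{\sigma^x(k),x}(x'')$, costing $O(m)$ in total. The per-state updates in the \textbf{foreach} loop are stored symbolically as $v_k(x) + b^x(k)(x-x'')$ and $\sigma_i(k)$, so they take $O(n)$ per event point. Summing over all $\EventPoints(G)+1$ iterations gives $O(\EventPoints(G)(m+n\log n))$.

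For the first bound I would exploit the potential argument globally. Consider the variant of $\SolveSPTG$ where $\StrategyIteration$ performs only single improving switches (as in the proof of Theorem~\ref{thm:exp}). The crucial point, already noted there, is that the strategy profile $\sigma^x$ returned at each event point is optimal for $G^{x,0}$, so by Lemma~\ref{lemma:zero} it stays valid when reinterpreted as a starting strategy for the next priced game $G^{x'}$ (only the costs of waiting actions differ between $G^x$ and $G^{x'}$). Hence Lemma~\ref{lemma:potential} applies uniformly across every single switch performed during the entire run, not just within one call: every switch strictly decreases $P(\sigma)$ in the $\prec$-order. Since the total number of distinct potential matrices is at most $\min\{12^n, \prod_{k\in S}(|A_k|+1)\}$, so too is the total number of single improving switches across all calls to $\StrategyIteration$. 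Each such switch requires $O(m)$ time (scan the $m$ actions to find an improving switch of either player, apply it, and recompute the affected valuations by walking the induced path). Combined with the $O(m)$ cost per $\NextEventPoint$ call and $\EventPoints(G) \leq \min\{12^n, \prod_{k\in S}(|A_k|+1)\}$ event points, the total running time is $O(m\cdot \min\{12^n, \prod_{k\in S}(|A_k|+1)\})$.

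The main obstacle is the global accounting in the last paragraph: Lemma~\ref{lemma:potential} is stated for a single priced game, whereas we must apply it across a sequence of different games $G^x, G^{x'}, \ldots$. The justification is that the potential $P(\sigma)$ depends only on lengths and on which waiting state each $k$ reaches, quantities that are invariant under changing only the waiting-action costs, together with the fact that $\sigma^x$ is optimal for $G^{x,0}$ so that the hypothesis of Lemma~\ref{lemma:potential} is preserved when switching games. Once this is in place the theorem follows by adding the work across all event points.
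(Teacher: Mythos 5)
Your proposal is correct and follows essentially the same route as the paper: correctness via Theorem~\ref{thm:correct}, the second bound by charging $O(m+n\log n)$ per event point for $\ExtendedDijkstra$, and the first bound by observing (as in the proof of Theorem~\ref{thm:exp}) that the total number of single improving switches across all calls is bounded by $\min\{12^n,\prod_{k\in S}(|A_k|+1)\}$ with $O(m)$ work per switch and per $\NextEventPoint$ computation. The only difference is presentational: the paper delegates the cross-game validity of the potential argument to the proof of Theorem~\ref{thm:exp} and cites it, whereas you restate that justification explicitly.
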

\begin{proof}
The correctness of $\SolveSPTG$ follows from
Theorems~\ref{thm:correct} and \ref{thm:exp}.

For the first bound we get from the proof of Theorem~\ref{thm:exp}
that, in fact, not only the number of event points, but also the
number of single improving switches is bounded by $\min \{12^n,
\prod_{k \in S} (|A_k|+1)\}$. Valuations for a strategy profile $\sigma$
can be computed in time $O(n)$, and then the next event point can be
computed in time $O(m)$. I.e., for each $k \in S$ we find the next
event point at time $x$ among the intersections of $f_{\sigma(k),x}$
and $f_{j,x}$, for $j \in A_k$.

For the second bound we are using the $\ExtendedDijkstra$ algorithm of 
Khachiyan {\em et al.}~\cite{KBBEGRZ08} instead of $\StrategyIteration$
in the inner while-loop. The $\ExtendedDijkstra$
algorithm has the same complexity as Dijkstra's algorithm\footnote{To
  get this bound for the Extended 
Dijkstra's algorithm, 
actions of the maximizer should not be inserted into the priortiy queue. Instead,
a choice of action for the maximizer for a state is fixed when the values of all
possible successors of that state are known.}.
Fredman and Tarjan~\cite{FT87} showed that, using
Fibonacci heaps, Dijkstra's algorithm can solve the single source
shortest path problem for a graph with $n$ vertices and $m$ edges in
time $O(m + n \log n)$.

\end{proof}

Theorem~\ref{thm:opt} follows as a corollary of
Theorem~\ref{thm:time}, since $\SolveSPTG$ is always guaranteed to
compute optimal strategies, and the resulting value functions
are continuous piecewise linear functions.

\section{Priced timed games\label{ptg}}
\emph{One-clock priced timed games} (PTGs) extend SPTGs in two ways. First, actions are associated with time intervals
during which they are available, and second, certain actions will
cause the time to be reset to zero. Also, we do not require the time
to run from zero to one.

\smallskip\noindent{\bf Formal definition of PTGs.} Formally, a PTG $G$ can be described by a
tuple \[G=(S_1,S_2,(A_k)_{k \in S}, (c_j)_{j\in A}, d,(r_k)_{k\in
  S},(I_j)_{j\in A},R),\] where $S = S_1 \cup S_2$ and $A=\bigcup_{k \in S}
A_k$. The complete description 
of the individual components of $G$ is as follows. Note that only the
last two components are new compared to priced games and SPTGs.
\begin{itemize}
\item
$S_i$, for $i \in \{1,2\}$, is a set of states controlled by Player $i$.
\item
$A_k$, for $k \in S$, is a set of actions available from
  state $k$.
\item
$c_j \in \R_{\ge 0} \cup \{\infty\}$, for $j \in A$, is the cost of action $j \in A$.
\item
$d: A \to S \cup \{\perp\}$ is a mapping from actions to destinations
  with $\perp$ being the terminal state.
\item
$r_k \in \R_{\ge 0}$, for $k \in S$, is the rate for waiting at state $k$.
\item 
$I_j$, for $j \in A$, is the \emph{existence interval} (a real interval)
of action $j$ during which it is available.

\item
$R \subseteq A$ is the set of \emph{reset actions}.
\end{itemize}

\smallskip\noindent{\bf The class $(n,m,r,d)$-PTG.} To simplify the statements of many of the remaining lemmas we let $(n,m,r,d)$-PTG be the class of all PTGs consisting of $n$ states, $r$ of which are the destination of some reset action, $m$ actions and $d$ distinct endpoints of existence intervals.

\smallskip\noindent{\bf End points of intervals.} We let $e(I)$ be the set of endpoints of interval $I$, and define
$M=\max_{\cup_{j\in A}} e(I_j)$. I.e., after time $M$ no actions are
available and the game must end.
Note that PTGs are often defined with existence intervals for both
states and actions. For convenience, we decided to omit this feature
since it is not difficult to translate between the two version.

\smallskip\noindent{\bf Playing PTGs.} PTGs are played like SPTGs with the exception that using a reset
transition resets the time to zero and that the actions must be available when used. We, thus, again operate with
configurations $(k, x) \in S \times [0,M]$ corresponding to a pebble
being placed on state $k$ at time $x$. The player controlling state
$k$ chooses an action $j \in A_k$ and a delay $\delta \ge 0$, such
that $j$ is available at time $x+\delta$. I.e., $x+\delta \in I_j$.
We assume for simplicity that such an action is always available. The
pebble is then moved to state $d(j)$, the time is incremented to
$x+\delta$ if $j \not\in R$ and reset to zero otherwise, and the play
continues. The game ends when the terminal state $\perp$ is
reached.

\smallskip\noindent{\bf Play and outcomes.} We again let a play be a sequence of legal steps starting from some
configuration $(k_0,x_0)$:
\[
\rho = (k_0,x_0) \xrightarrow{j_0, \delta_0} (k_1,x_1)
\xrightarrow{j_1, \delta_1} \dots 
\]
where, for all $\ell \ge 0$, $x_\ell + \delta_\ell \in I_{j_\ell}$,
and if $j_{\ell} \in R$ then $x_{\ell+1} = 0$. The outcome of infinite
plays and finite plays ending at the terminal state $\perp$ are defined
analogously to SPTGs.

\smallskip\noindent{\bf Positional strategies.} Let $\Plays(i)$ be the set of finite plays ending at a state controlled by
Player $i$. Note that $\rho \in \Plays(i)$ specifies the current state
and time, as well as the history leading to this configuration. A (positional) strategy for Player $i$ is again defined as
a map $\pi_i: S_i \times [0,M] \to A \cup \{\lambda\}$ from
configurations of the game to choices of actions. Again, for every $k \in S_i$ and $x \in [0,1)$, if $\pi_i(k,x) = \lambda$
then we require that there exists a $\delta > 0$ such that for all $0
\le \epsilon < \delta$, $\pi_i(k,x+\epsilon) = \lambda$.
Let $\delta_{\pi_i}(k,x) = \inf \{x'-x \mid x \le x' \le 1, \pi_i(k,x') \ne
\lambda\}$ be the delay before the pebble is moved when starting in
state $k$ at time $x$ for some strategy $\pi_i$.  Previous works have defined such strategies in other ways, see Remark \ref{rem:common positional strategies}.

\smallskip\noindent{\bf History-dependent strategies.} A
\emph{history-dependent strategy} for Player $i$ is a map $\tau_i:
\Plays(i) \to (A,\R_{\ge 0})$ that maps every play $\rho$ ending in a
state $k \in S_i$ to an action $j \in A_k$ and a delay $t$. We will only use history-dependent strategies in the proof of one lemma (Lemma \ref{lem:remove resets}). Note that
history-dependent strategies generalize positional strategies. We denote the set of history-dependent
strategies for Player $i$ by $T_i(G)$, where $G$ is omitted if it is
clear from the context. Similarly, the set of positional
strategies for Player $i$ is denoted by $\Pi_i(G)$.

\smallskip\noindent{\bf Values.} Let $\rho_{k,x}^{\tau_1,\tau_2}$ be the play generated when, starting
from $(k,x)$, the players play according to $\tau_1$ and $\tau_2$. The
corresponding value function is again defined as:
\[
v_{k}^{\tau_1,\tau_2}(x) = \cost(\rho_{k,x}^{\tau_1,\tau_2}).
\]
Best response, lower and upper value functions are again defined as:
\begin{align*}
v_k^{\tau_1}(x) &= \sup_{\tau_2 \in T_2} ~ v_k^{\tau_1, \tau_2}(x) \\
v_k^{\tau_2}(x) &= \inf_{\tau_1 \in T_1} ~ v_k^{\tau_1, \tau_2}(x) \\
\underline{v}_k(x) &= \sup_{\tau_2 \in T_2} ~ v_k^{\tau_2}(x) = \sup_{\tau_2 \in T_2} ~ \inf_{\tau_1 \in T_1}
~ v_k^{\tau_1, \tau_2}(x)\\
\overline{v}_k(x) &= \inf_{\tau_1 \in T_1} ~ v_k^{\tau_1}(x) = \inf_{\tau_1 \in T_1} ~ \sup_{\tau_2 \in T_2} ~
v_k^{\tau_1, \tau_2}(x)
\end{align*}


Bouyer \emph{et al.}~\cite{BLMR06} proved the following fundamental
theorem.

\begin{theorem}[Bouyer \emph{et al.}~\cite{BLMR06}]\label{thm:optimalvalues}
For every PTG $G$, there exist value functions
$v_k(x) := \underline{v}_k(x) = \overline{v}_k(x)$. Moreover, a player
can get arbitrarily close to the values even when restricted to
playing positional strategies:
\begin{align*}
\underline{v}_k(x) &= \sup_{\pi_2 \in \Pi_2} ~ \inf_{\tau_1 \in T_1}
~ v_k^{\tau_1, \pi_2}(x)=\sup_{\pi_2 \in \Pi_2} ~ \inf_{\pi_1 \in \Pi_1}
~ v_k^{\pi_1, \pi_2}(x)\\
\overline{v}_k(x) &= \inf_{\pi_1 \in \Pi_1} ~ \sup_{\tau_2 \in T_2} ~
v_k^{\pi_1, \tau_2}(x)= \inf_{\pi_1 \in \Pi_1} ~ \sup_{\pi_2 \in \Pi_2} ~
v_k^{\pi_1, \pi_2}(x)
\end{align*}
\end{theorem}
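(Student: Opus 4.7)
The plan is to prove the theorem in two stages: establishing determinacy under arbitrary strategies, then showing positional strategies approximate the values arbitrarily well.

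For determinacy, I would appeal to Martin's Borel determinacy theorem. The PTG can be viewed as a perfect-information two-player zero-sum game whose plays live in an appropriately topologized subspace of $(S \times [0,M] \times A \times \R_{\ge 0})^\omega$. The outcome $\cost(\rho)$ is lower semi-continuous: it is a finite sum on plays terminating at $\perp$ and $\infty$ on infinite plays, both Borel conditions. Consequently the game is determined when both players use history-dependent strategies, yielding
\[
\sup_{\tau_2 \in T_2} \inf_{\tau_1 \in T_1} v_k^{\tau_1,\tau_2}(x) ~=~ \inf_{\tau_1 \in T_1} \sup_{\tau_2 \in T_2} v_k^{\tau_1,\tau_2}(x) ~=:~ v_k(x).
\]

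For the positional approximation, I would reduce the PTG to a fixed-point over SPTGs. The key observation is that between consecutive reset events the clock advances monotonically, so the game resembles an SPTG with ``boundary values'' given by the values of destination states at clock time $0$. Given a candidate value assignment $V: S \to \R \cup \{\infty\}$ representing the values at clock $0$, build the auxiliary SPTG $G_V$ obtained by replacing every reset action $j \in R$ with a transition to $\perp$ of cost $c_j + V(\dest(j))$, and by rescaling $[0,M]$ to $[0,1]$ with existence intervals encoded via auxiliary dummy states that force Player~$i$'s actions to be unavailable outside $I_j$. By Theorem \ref{thm:opt}, each $G_V$ admits an optimal positional strategy profile with continuous piecewise linear value functions. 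Let $T(V)$ be the operator sending $V$ to the resulting values at clock $0$; the PTG value function satisfies $V^\ast = T(V^\ast)$.

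To close the argument, I would show $T$ is monotone on the pointwise-ordered lattice of value assignments, so by Knaster--Tarski a least fixed point $V^\ast$ exists; the determinacy argument then identifies $V^\ast$ with the unique values consistent with $v_k(\cdot)$. Gluing the positional SPTG-optimal profiles from $G_{V^\ast}$ across the reset partition produces a positional strategy for the PTG indexed by state and clock. The main obstacle will be that $T$ may fail to stabilize in finitely many iterations when plays admit arbitrarily many resets; in that case one approximates $V^\ast$ via truncated games bounding the number of resets and invokes upward continuity of $T$ under monotone limits, which gives positional $\epsilon$-optimality for every $\epsilon>0$ but in general not exact optimality, matching the footnote in the introduction that PTGs lack exactly optimal strategies.
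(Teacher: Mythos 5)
This theorem is not proved in the paper at all: it is imported verbatim from Bouyer \emph{et al.}~\cite{BLMR06}, and the entire development of Sections 3 and 4 treats it as a black box. That matters for your plan, because your second stage is circular relative to this paper's logical structure. You invoke Theorem~\ref{thm:opt} (optimal positional strategies and piecewise-linear values for SPTGs), but everything leading to that theorem --- Definition~\ref{def:Gxe}, Lemma~\ref{lemma:zero}, Remark~\ref{rem:common positional strategies} --- presupposes both that SPTG values exist and that positional strategies suffice to approximate them, and the paper justifies both by citing precisely the theorem you are trying to prove. Your first stage (Borel determinacy of the threshold games $\{\cost>c\}$ over history-dependent strategies) is plausible as far as it goes and would yield $\underline{v}_k(x)=\overline{v}_k(x)$ for the $T_i$-definitions, but it says nothing about positional strategies; the ``moreover'' clause is the substantive content of the theorem and is exactly what your determinacy argument does not deliver.

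The fixed-point treatment of resets has a concrete gap beyond the circularity. Knaster--Tarski gives you \emph{a} least fixed point of $T$, but no argument identifies it with the game value: for cost-to-reach objectives where non-termination costs $\infty$, the least fixed point of the Bellman-type operator is generically too small in the presence of zero-cost cycles through reset actions, and ``the determinacy argument then identifies $V^\ast$ with the unique values'' is an assertion, not a proof (fixed points need not be unique here). Likewise, gluing SPTG-optimal positional profiles across the reset partition can produce a profile whose induced play resets forever and thus realizes cost $\infty$ rather than $V^\ast$. The clean route --- which is what Lemma~\ref{lem:remove resets} does, although that lemma itself \emph{uses} Theorem~\ref{thm:optimalvalues} and so cannot serve as its proof --- is the pigeonhole observation that under positional strategies $r+1$ resets force a repeated configuration $(k,0)$ and hence infinite cost, so the game stratifies into $r+1$ acyclic layers and no fixed point is needed. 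Two further points: encoding existence intervals by ``auxiliary dummy states that force unavailability'' is asserted but not constructed, and the paper needs three separate reductions (Lemmas~\ref{lem:remove intervals}--\ref{lem:remove intervals 3}), including the $\epsilon$-perturbation argument of Lemma~\ref{lem:remove intervals 2} near open interval endpoints, to make this work; and your closing diagnosis is off --- the failure of exactly optimal strategies in PTGs (Figure~\ref{fig:nonexistence}) is caused by open existence intervals, not by unbounded reset iteration, so the ``truncation plus upward continuity'' patch is addressing the wrong obstruction.
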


For the purpose of solving PTGs it, thus, suffices to consider
positional strategies. In the remainder of this section we will
therefore restrict ourselves to positional strategies unless otherwise
specified.


\smallskip\noindent{\bf $\epsilon$-optimal strategies.} A strategy $\pi_i \in \Pi_i$ is \emph{$\epsilon$-optimal} for
Player $i$ for $\epsilon\geq 0$ if:
\[
\forall k \in S,~ x \in [0,M]: \quad
|v_k^{\pi_i}(x) - v_k(x)|\leq\epsilon.
\]
Since PTGs have value functions, $\epsilon$-optimal
strategies always exist for both players, for any $\epsilon > 0$.

\smallskip\noindent{\bf Non-existence of optimal strategies.} Optimal strategies do not always exist, as shown 
by Bouyer {\em et al.}~\cite{BLMR06}. Indeed,
consider the PTG shown in Figure~\ref{fig:nonexistence}. State 1 is
controlled by Player 1, the minimizer, and state 2 is controlled by
Player 2, the maximizer. The value functions are shown on the
right. Two actions leading to the terminal state are available from
state 2 at time 0 and 1, respectively. Since the rate of state 2 is
0, Player 2 picks the
more expensive action with cost $c_2 = 1$ at time 0, and at other
times Player 2 waits until time 1 and picks the cheaper action with
cost $c_3 = 0$.
From state 1 exactly one action is available at all times, and since
the rate is 1, Player 1 leaves the state \emph{as soon as possible},
only not at time 0. Since no strategy can implement leaving as soon as
possible there is no optimal strategy for Player 1. More precisely,
for every waiting time $\delta$ chosen by Player 1 at time 0, there
exists a smaller waiting time $\delta' < \delta$ that achieves a
better value.

\begin{figure}
\center
\includegraphics{priced_IEEE.2}
\caption{Example of a PTG with no optimal strategy profile.}
\label{fig:nonexistence}
\end{figure}

We reduce solving any PTG to solving a
number of SPTGs. The first step towards this goal is to remove reset
actions by extending the game.
\begin{lemma}
\label{lem:remove resets}
Let $G$ be a $(n,m,r,d)$-PTG. Solving $G$ can be reduced to solving $r+1$ $(n,m,0,d)$-PTGs.
\end{lemma}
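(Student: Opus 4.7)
The plan is to compute the values $V^*_k := v_k(0)$ at the $r$ reset destinations $R^* := \{d(j) : j \in R\}$ via a monotone fixed-point iteration, each step of which is one $(n,m,0,d)$-PTG solve, with the last of $r+1$ such solves also producing the full value functions of $G$. Concretely, for any $V \in (\R \cup \{\infty\})^{R^*}$ let $G[V]$ denote the PTG obtained from $G$ by replacing each reset action $j$ (with destination $k = d(j)$) by a non-reset terminal action going to $\perp$ at cost $c_j + V_k$. Then $G[V]$ is an $(n,m,0,d)$-PTG, and we let $F(V)_k$ be the value of $(k,0)$ in $G[V]$, obtained from one such solve. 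In $G[V^*]$ the charge $c_j + v_{d(j)}(0)$ of a former reset equals exactly the cost in $G$ of taking that reset and then continuing optimally from $(d(j),0)$; hence $G[V^*]$ has the same value functions as $G$, giving $F(V^*) = V^*$ and the promised recovery of the whole solution from a single solve at $V^*$.

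Starting from $V^{(0)} = (\infty,\ldots,\infty)$, iterate $V^{(t)} := F(V^{(t-1)})$. The operator $F$ is monotone: raising some $V_k$ only raises some action costs, so every fixed positional strategy profile in $G[V]$ has payoff nondecreasing in $V$, and this property survives the minimax. Hence $V^{(t)}$ is a nonincreasing sequence bounded below by $V^*$. Moreover, an induction on $t$ identifies $V^{(t)}_k$ with the value at $(k,0)$ of the variant $G^{[t-1]}$ of $G$ in which every play using more than $t-1$ reset actions is charged cost $\infty$: a former reset in $G[V^{(t-1)}]$ with charge $c_j + V^{(t-1)}_{d(j)}$ models taking one reset in $G$ and continuing under the $(t-2)$-reset cap guaranteed by the inductive hypothesis.

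We then argue $V^{(r)} = V^*$, i.e.\ that capping plays at $r-1$ resets does not change the value from $(k,0)$ for any $k \in R^*$. A play starting at $(k,0)$ with $k \in R^*$ visits some reset destination at time $0$ exactly $1+(\text{number of resets used})$ times (counting the start), so by pigeonhole a play using $\ge r$ resets revisits some $k' \in R^*$ at time $0$; between those two visits the accumulated cost is nonnegative while the continuation value is the same both times, so the looped segment can be excised to yield a play of no larger cost and strictly fewer resets. Applying this surgery to the $\epsilon$-optimal history-dependent strategies given by Theorem~\ref{thm:optimalvalues} and letting $\epsilon \to 0$ gives $v_k^{G^{[r-1]}}(0) = v_k(0)$, i.e.\ $V^{(r)} = V^*$; the $(r+1)$-th solve $F(V^{(r)}) = F(V^*)$ therefore solves $G[V^*]$, whose full value functions coincide with those of $G$ and are the required output. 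The main obstacle is the cut-and-paste in the absence of true optimal strategies (recall Figure~\ref{fig:nonexistence}): one must work throughout with $\epsilon$-optimal history-dependent strategies, replace every looped history by its unlooped counterpart, and verify simultaneously for both players---whose incentives to insert or delete loops may differ---that the resulting strategy is still $\epsilon$-optimal against every counter-strategy.
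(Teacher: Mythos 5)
Your decomposition is, in substance, the paper's: your iterate $V^{(t)}$ is exactly the value at the time-$0$ configurations of layer $r+1-t$ of the product game $S\times\{0,\dots,r\}$ that the paper builds (reset count as an extra coordinate, the $(r+1)$-th reset charged $\infty$), and your ``last solve at $V^*$'' is the paper's layer $0$; both yield $r+1$ reset-free solves keyed to a pigeonhole bound over the $r$ reset destinations. The packaging as a monotone fixed-point iteration from $V^{(0)}=\infty$ is fine but buys nothing extra, since you anyway identify $V^{(t)}$ with the value of the game capped at $t-1$ resets.

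The genuine gap is the step $V^{(r)}=V^*$, which you route through a cut-and-paste on $\epsilon$-optimal \emph{history-dependent} strategies and explicitly leave as ``the main obstacle.'' As stated this does not go through: excising a loop is an operation on a single play, not on a strategy. The loop that occurs depends on the opponent's (possibly history-dependent) play, so ``replace every looped history by its unlooped counterpart'' is not a well-defined strategy transformation; moreover, after excision the opponent faces a different history and may respond differently, so the claim that ``the continuation value is the same both times'' does not control the actual continuation cost of the surgered play against an arbitrary counter-strategy. You would also have to show the degradation is uniformly at most $\epsilon$ over all counter-strategies, for both players, which is precisely the part you do not carry out.

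The paper avoids all of this by invoking Theorem~\ref{thm:optimalvalues} first: it suffices to take $\inf$/$\sup$ over \emph{positional} strategies. Under a positional profile, a play from $(k,0)$ with $k$ a reset destination that uses $\ge r$ further resets must repeat some configuration $(k',0)$ and hence loops forever, so its cost is $\infty$; thus capping the reset count changes nothing for finite values, and the infinite case is trivial. The only remaining care is that the capped/unrolled game's positional strategies are history-dependent strategies of $G$, which the paper handles with the sandwich $\Pi_i(G)\subseteq\Pi_i(G')\subseteq T_i(G)$ and two one-sided inequalities. If you replace your loop-excision argument by this positionality argument (and prove $F(V^*)=V^*$ recovers the full value functions by the same sandwich, rather than by assertion), your proof closes and coincides with the paper's.
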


\begin{proof}
Let $\pi = (\pi_1,\pi_2)$ be any strategy profile, and suppose the play
$\rho^\pi_{k_0,x_0}$ is using two reset actions $j,j' \in R$ leading
to the same state $d(j) = d(j') = k$. Then the configuration $(k,0)$
appears twice in $\rho^\pi_{k_0,x_0}$, and since strategies are
history-independent it appears an infinite number of times. It follows
that $v_{k_0}^\pi(x_0) = \infty$.
By the pigeonhole principle we get that if a play $\rho^\pi_{k_0,x_0}$
uses $r+1$ reset actions, then some state is visited twice by some reset
actions, and therefore $v_{k_0}^\pi(x_0) = \infty$.

Thus, when playing $G$ we may augment configurations by the number of
times a reset action has been used, and once this number reaches $r+1$
we may assume without loss of generality that the value is
infinite. This defines a new PTG $G'$ with states $S' = S \times
\{0,\dots,r\}$ and actions $A' = A \times \{0,\dots,r\}$ in the
following natural way. For $j \in A$ and $\ell \in \{0,\dots,r\}$,
destinations and costs are defined as
\begin{align*}
d'(j,\ell) &= \begin{cases}
(d(j),\ell+1) & \text{if $j \in R$ and $\ell < r$}\\
\perp & \text{if $j \in R$ and $\ell = r$}\\
(d(j),\ell) & \text{otherwise}
\end{cases}\\
c'_{(j,\ell)} &= \begin{cases}
\infty & \text{if $j \in R$ and $\ell = r$}\\
c_j & \text{otherwise}
\end{cases}
\end{align*}
while rates, existence intervals and reset actions are the same as for
the corresponding states and actions of $G$. Plays and value
functions of $G'$ will be denoted by $\rho'$ and $v'$,
respectively. We will show that for all $(k,x) \in S \times [0,M]$,
$v'_{(k,0)}(x) = v_k(x)$.

Every strategy profile $\pi'$ for $G'$ can be interpreted as a
history-dependent strategy profile for $G$ in the following
way: For every play that can be achieved by moving according to $\pi'$
make the corresponding choice in $\pi'$, for other plays make arbitrary
choices. 
Also, every positional strategy profile $\pi$ for $G$ can
be interpreted as a strategy profile for $G'$ by using the same
choices regardless of the number of encountered reset actions.
With these interpretations we see that
$\Pi_i(G) \subseteq \Pi_i(G') \subseteq T_i(G)$. 

For all configurations $(k,x) \in S \times [0,M]$, if
${\rho'}_{(k,0),x}^{\pi'}$ uses at most $r$ reset actions, then
$\cost({\rho'}_{(k,0),x}^{\pi'}) = \cost({\rho}_{k,x}^{\pi'})$, since
the actions encountered in the two games have the same costs. If 
${\rho'}_{(k,0),x}^{\pi'}$ uses more than $r$ reset actions, then
$\cost({\rho'}_{(k,0),x}^{\pi'}) = \infty \ge
\cost({\rho}_{k,x}^{\pi'})$. Hence, we always have 
${v'}_{(k,0)}^{\pi'}(x) \ge v_{k}^{\pi'}(x)$.
Using Theorem~\ref{thm:optimalvalues} it
follows that:
\begin{align*}
v'_{(k,0)}(x) &= \inf_{\pi'_1 \in \Pi'_1} ~ \sup_{\pi'_2 \in \Pi'_2} ~
{v'}_{(k,0)}^{\pi'_1, \pi'_2}(x) \ge
\inf_{\pi'_1 \in \Pi'_1} ~ \sup_{\pi'_2 \in \Pi'_2} ~
v_{k}^{\pi'_1, \pi'_2}(x) \\ &=
\inf_{\pi_1 \in \Pi_1} ~ \sup_{\pi_2 \in \Pi_2} ~
v_{k}^{\pi_1, \pi_2}(x) = v_{k}(x)
\end{align*}
The first inequality follows from the costs being larger in $G'$, and
the next equality follows Theorem~\ref{thm:optimalvalues}; the same
values can be obtained using only positional strategies in $G$.

Next we show that $v'_{(k,0)}(x) \le v_{k}(x)$, implying that
$v'_{(k,0)}(x) = v_{k}(x)$. This is clearly true if $v_{k}(x) =
\infty$, thus, we may assume that $v_{k}(x) < \infty$. In particular,
$\epsilon$-optimal strategies do not generate plays with more than $r$
reset actions in neither $G$ nor $G'$. We see that:
\begin{align*}
v_{k}(x) &= \inf_{\pi_1 \in \Pi_1} ~ \sup_{\pi_2 \in \Pi_2} ~
v_{k}^{\pi_1, \pi_2}(x) 
= 
\inf_{\pi'_1 \in \Pi'_1} ~ \sup_{\pi'_2 \in \Pi'_2} ~
v_{k}^{\pi'_1, \pi'_2}(x) \\
&=
\inf_{\pi'_1 \in \Pi'_1} ~ \sup_{\pi'_2 \in \Pi'_2} ~
{v'}_{(k,0)}^{\pi'_1, \pi'_2}(x) 
=v'_{(k,0)}(x)
\end{align*}
For the second equality we use Theorem~\ref{thm:optimalvalues}; the
values do not change even if certain history-dependent strategies are
available. For the third equality we use the assumption that the
values are finite. This implies that for relevant strategy profiles
the values of the two games are the same.

We now know that in order to find the value $v_k(x)$ in $G$ it
suffices to find $v'_{(k,0)}(x)$ in $G'$. To do this we exploit the
special structure of $G'$. We observe that states $(k,\ell) \in S
\times \{0,\dots,r\}$ do not depend on states $(k,\ell')$ with $\ell'
< \ell$. Thus, the game can be solved using backward induction on
$\ell$. In particular, when $v'_{(k,\ell+1)}(x)$ is known for all $k$
and $x$, then the subgame consisting of states $(k,\ell)$, for $k \in
S$, can be viewed as an independent PTG with no reset actions. I.e.,
reset actions lead to states with known values, and can, thus, be
thought of as going directly to the terminal state with an appropriate
cost. Each subgame has $n$ states and $m$ actions, and there are $r+1$
such subgames.

\end{proof}

\smallskip\noindent{\bf Overview over remaining lemmas.} 
Now we just need to show how to solve PTGs without resets using SPTGs. 
We will show the statement using 3 reductions. First we will reduce
PTGs without resets to the subclass of such games, where, for each
action $j\in A$, we have $I_j\in \{(0,1),[1,1]\}$. Afterwards we will
reduce further to the subclass of PTGs where for each action $j\in A$,
we have $I_j\in \{[0,1],[1,1]\}$. At the end we will reduce those to SPTGs.

Let $X$ be the set which consists of $0$ and the endpoints of existence intervals of $G$. 
Let the $i$'th largest element in $X$ be $M_i$. Note that $M_1=M$.

We will now define some functions on PTGs. For a PTG $G=(S_1,S_2,$ $(A_k)_{k \in S},$ $c,$ $ d,$ $(r_k)_{k\in
  S},$ $(I_j)_{j\in A},R),$ where $R=\emptyset$, a $x\in \R$ and a vector $v\in (\R_{\ge 0\cup\{\infty\}})^n$, let the priced game $G^{v,x}=(S_1,S_2,(A_k')_{k\in S},c',d')$ be defined by:

\begin{alignat*}{2}
\forall k\in S&:& A_k'&=\{j\in A_k \mid x\in I_j\}\cup \{\perp_k\}\\
\forall j\in A_k'&:& c_j' &=\begin{cases}
v_k & \text{if $j=\perp_k$}\\
c_j & \text{otherwise}
\end{cases}\\
\forall j\in A_k'&:& d_j' &=\begin{cases}
\perp & \text{if $j=\perp_k$}\\
d_j & \text{otherwise}
\end{cases}
\end{alignat*}

The game $G^{v,x}$ is similar to the priced game defined in Definition \ref{def:Gxe}. The intuition is that $G^{v,x}$ can model a specific moment in time. 

\begin{definition}\label{def:sptgvxd}

For a given PTG $G=(S_1,S_2,(A_k)_{k \in S}, $ $c, d,(r_k)_{k\in
  S},(I_j)_{j\in A},R),$ a $x\in \R$ and a vector $v\in (\R_{\ge
  0\cup\{\infty\}})^n$, let the SPTG
$G^{v,x,d}=(S_1,S_2',$ $(A_k')_{k \in (S_1\cup S_2')}, $ $c', d',(r_k')_{k\in (S_1\cup S_2')})$ be defined by:

\begin{alignat*}{2}
&&S_2'&=S_2\cup \{\max\}\\
\forall k\in S&:& A_k' &=\{j\in A_k \mid x\in I_j\}\cup \{\perp_k\}\\
&&A_{\max}&=\{\perp_{\max} \}\\
\forall j\in A_k'&:& c_j' &=\begin{cases}
0 & \text{if $k=\max$}\\
v_k & \text{if $k\neq \max$ and $j=\perp_k$}\\
c_j & \text{otherwise}\\
\end{cases}\\
\forall j\in A_k'&:& d_j'&=\begin{cases}
\max & \text{if $k\in S_1$ and either $j=\perp_k$} \\ & \text{or $d_j=\perp$}\\
\perp & \text{if $k\in S_2$ and either $j=\perp_k$} \\ & \text{or $d_j=\perp$}\\
\perp & \text{if $k=\max$}\\
d_j & \text{otherwise}
\end{cases}\\
\forall k\in S&:& r_k'&=r_k\cdot d\\
&&r_{\max} ' &=\max_{k\in S}\{r_k\}\\
\end{alignat*}
\end{definition}

The game $G^{v,x,d}$ is constructed from the proof of Lemma \ref{lem:remove intervals},  Lemma \ref{lem:remove intervals 2} and Lemma \ref{lem:remove intervals 3}. The intuition is that the game can model an arbitrary length interval, in the original game, where no action changes status between available and unavailable.

\begin{figure}
\begin{center}
\parbox{4.7in}{
\begin{function}[H]
$v(M_1) \gets \ExtendedDijkstra((S_1,S_2,(\{j\in A_k \mid M_1\in I_j\})_{k\in S},c,d))$\;
$i\gets 1$\;
\While{$M_i> 0$}{
$i\gets i+1$\;
$x\gets \frac{M_{i-1}+M_i}{2}$\;
$v' \gets \ExtendedDijkstra(G^{v(M_{i-1}),x})$\;
$v^*(x)\gets \SolveSPTG(G^{v',x,M_{i-1}-M_i})$\;
\ForAll{$x \in (M_i,M_{i-1})$}{
$v(x)\gets v^*(\frac{x-M_{i}}{M_{i-1}-M_{i}})$\;
}
$v(M_i)\gets \ExtendedDijkstra(G^{v^*(0),M_i})$\;
}
\Return{$v$}\;
\caption{\SolvePTG(\mbox{$G$})}
\end{function}
}
\end{center}
\caption{Algorithm for solving PTGs without reset actions.}
\label{fig:solvePTG}
\end{figure}

\begin{theorem}
The algorithm in Figure \ref{fig:solvePTG} correctly solves Priced Timed Games without reset actions.
\end{theorem}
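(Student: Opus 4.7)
I would prove the theorem by backward induction on the iterations of the while loop, with the invariant that after processing $M_i$ the array $v$ agrees with the true value function $v_k$ of the PTG on $[M_i, M]$. For the base case $i=1$: at time $M = M_1$ no further time can elapse and only actions available at exactly $M_1$ are usable, so the PTG at $M_1$ degenerates to a priced game, which is correctly solved by the initial $\ExtendedDijkstra$ call.

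For the inductive step, I assume correctness on $[M_{i-1},M]$ and handle the open interval $(M_i, M_{i-1})$ together with the point $M_i$. The crucial observation is that on $(M_i, M_{i-1})$ no action changes availability, since $M_i$ and $M_{i-1}$ are consecutive endpoints; hence the PTG restricted to this interval is a rescaled SPTG in which SPTG time $t \in [0,1]$ corresponds to PTG time $M_i + t \cdot d$, where $d = M_{i-1} - M_i$. The inner $\ExtendedDijkstra$ call on $G^{v(M_{i-1}),x}$ at the interval midpoint $x$ produces $v'$, the correct ``left-limit'' boundary value at $M_{i-1}$ using only actions available throughout the interval; this step is necessary because $v(M_{i-1})$ may employ actions closed only at $M_{i-1}$ that are unavailable just below. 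Definition~\ref{def:sptgvxd} then assembles the SPTG $G^{v', x, d}$, whose value function is computed correctly by $\SolveSPTG$ by Theorem~\ref{thm:correct}; the rescaling $v(x) = v^*((x-M_i)/d)$ transports the SPTG values back to the PTG on the interval. Finally, the point $M_i$ is handled by a last $\ExtendedDijkstra$ call on $G^{v^*(0), M_i}$, using exactly the actions available at $M_i$ together with the right-limit boundary $v^*(0)$.

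The main obstacle is verifying that $G^{v', x, d}$ is a faithful encoding of the PTG on $(M_i, M_{i-1})$ with boundary $v'$. One must show that the auxiliary max state in Definition~\ref{def:sptgvxd} correctly enforces the boundary condition at SPTG time 1, so that $\perp_k$ behaves like ``commit to value $v'_k$ at $M_{i-1}$'' without giving Player~1 an unintended cheap termination during the interior, and that the rate scaling by $d$ properly aligns the two time axes. The paper isolates this faithfulness in Lemmas~\ref{lem:remove intervals}, \ref{lem:remove intervals 2}, and \ref{lem:remove intervals 3}, which reduce a general PTG without resets in stages, first to one whose intervals lie in $\{(0,1),[1,1]\}$, then $\{[0,1],[1,1]\}$, and finally to an SPTG. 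Assuming these three reductions, the theorem follows by combining them with the backward induction above and the already-established correctness of $\ExtendedDijkstra$ and $\SolveSPTG$; termination is immediate, since $M_i$ strictly decreases through a finite set of endpoints.
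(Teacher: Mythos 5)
Your proposal is correct and follows essentially the same route as the paper: the paper's own proof is simply the observation that the algorithm formalizes the backward sweep over interval endpoints given by Lemmas~\ref{lem:remove intervals}, \ref{lem:remove intervals 2}, and \ref{lem:remove intervals 3}, combined with the correctness of $\ExtendedDijkstra$ and $\SolveSPTG$, which is exactly the decomposition you describe (your backward induction on the $M_i$ is the content of Lemma~\ref{lem:remove intervals}). Your spelled-out invariant and the discussion of the left-limit boundary value $v'$ at $M_{i-1}$ are in fact more explicit than what the paper records.
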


The proof of correctness is that the algorithm is a formalization of the reductions in Lemma \ref{lem:remove intervals},  Lemma \ref{lem:remove intervals 2} and Lemma \ref{lem:remove intervals 3}.
Note that instead of $\frac{M_i+M_{i-1}}{2}$, any arbitrary point
inside $(M_i,M_{i-1})$ would work. 

Let $PTG_{I}^{n,m}$ be the subclass of PTGs, consisting of $n$ states, $m$ actions, none of which are reset actions, and where the existence interval for each action, $j$, is either $I_j=I$ or $I_j=[1,1]$. {\em In the latter case $d_j=\perp$.} Note that for such games we can WLOG assume that $m\leq 2n^2$, because for all actions with the same existence interval, only the one with the best cost will be used. 

The following lemma is using techniques similar to those of the region abstraction algorithm of Laroussinie, Markey, and Schnoebelen \cite{LMS04}. The lemma is presented here mainly to be more explicit about the class of games reduced to.

\begin{lemma}
\label{lem:remove intervals}
Any game $G$ in $(n,m,0,d)$-PTG can be solved in time $O((n\log n +\min (m,n^2))d)$ using at most $d$ calls to an oracle, $R$, that solves $PTG_{(0,1)}^{n,m+n}$.
\end{lemma}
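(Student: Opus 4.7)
Plan: The reduction proceeds by backwards induction on the time axis, exploiting the fact that between consecutive elements of $X$ the set of available actions in $G$ is constant. Enumerate $X$ in decreasing order as $M_1>M_2>\cdots>M_\ell$ with $\ell\leq d+1$. I would first compute $v_k(M_1)$ for all states $k$ by calling $\ExtendedDijkstra$ on a priced game whose actions are exactly those $j\in A$ with $M_1\in I_j$: at time $M_1 = M$ no further waiting is possible, so play collapses to a priced game. Then, for $i=2,\ldots,\ell$, I would inductively compute the value function on $[M_i,M_{i-1})$ from $v(M_{i-1})$ using one oracle call plus one $\ExtendedDijkstra$ call.

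For the open interval $(M_i,M_{i-1})$, I would construct an instance $G^{(i)}\in PTG_{(0,1)}^{n,m+n}$ as follows. Keep the same states and player partition. For each action $j\in A$ with $(M_i,M_{i-1})\subseteq I_j$, include an action in $G^{(i)}$ with the same cost $c_j$, the same destination $d(j)$, and existence interval $(0,1)$. For each state $k$, introduce a new \emph{exit action} $e_k$ with existence interval $[1,1]$, destination $\perp$, and cost $v_k(M_{i-1})$. Finally, rescale rates by $r'_k=r_k\cdot(M_{i-1}-M_i)$, so that waiting $\delta$ in $G$ costs the same as waiting the corresponding $\delta/(M_{i-1}-M_i)$ in $G^{(i)}$. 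An oracle call to $R$ on $G^{(i)}$ returns value functions $v^{(i)}_k(t)$, and I would then set
\[
v_k(x) = v^{(i)}_k\Bigl(\tfrac{x-M_i}{M_{i-1}-M_i}\Bigr), \qquad x\in(M_i,M_{i-1}).
\]
The remaining point $M_i$ is handled by a second $\ExtendedDijkstra$ call on a priced game whose actions are those available in $G$ at time $M_i$, augmented, for each state $k$, with an exit action to $\perp$ of cost $v^{(i)}_k(0)$; this extra action models the option of waiting infinitesimally into the open interval just computed.

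The main point to verify is that $G^{(i)}$ faithfully captures $G$'s dynamics on $(M_i,M_{i-1})$. Any play starting at $(k,x)$ with $x\in(M_i,M_{i-1})$ either (a) uses actions that are available throughout the interval, directly mirrored by the actions of $G^{(i)}$ with interval $(0,1)$; or (b) reaches time $M_{i-1}$ and continues optimally thereafter, mirrored by the exit action $e_k$ whose cost $v_k(M_{i-1})$ already encodes an (approximately) optimal continuation. I would make the correspondence rigorous by translating positional strategies between the two games and showing the resulting values agree. Since optimal strategies need not exist in PTGs, I would use Theorem~\ref{thm:optimalvalues} to restrict attention to positional strategies and to work with $\epsilon$-optimal ones, letting $\epsilon\to 0$.

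Regarding running time, each of the $\leq d$ iterations performs a constant number of $\ExtendedDijkstra$ calls, each costing $O(n\log n+\min(m,n^2))$; the $n^2$ bound follows from deduplicating parallel actions sharing source, destination, and availability, since only the cheapest matters. The oracle is invoked exactly $\ell-1\leq d$ times, one per open interval. The main obstacle, I expect, is a careful treatment of the boundary at $M_{i-1}$: value functions of PTGs may be discontinuous at endpoints of existence intervals, so one must cleanly separate ``what happens strictly inside the open interval'' (the job of the oracle, using the exit cost $v_k(M_{i-1})$) from ``what happens at the point $M_i$ itself'' (the job of the Dijkstra call, using the right-limit $v^{(i)}_k(0)$), and show that the exit action at $t=1$ subsumes every feasible continuation past $M_{i-1}$ because $v_k(M_{i-1})$ already accounts for any further waiting that an $\epsilon$-optimal strategy might do there.
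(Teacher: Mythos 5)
Your proposal is correct and follows essentially the same route as the paper: backwards induction over the sorted endpoints of the existence intervals, with each open interval $(M_i,M_{i-1})$ rescaled to $(0,1)$ (multiplying rates by the interval length), an exit action to $\perp$ at time $1$ of cost $v_k(M_{i-1})$ handed to the oracle, and the boundary value at $M_i$ recovered by a separate priced-game computation using the right-limit $v^{(i)}_k(0)$. Your explicit attention to possible discontinuities at interval endpoints and the deduplication of parallel actions to get the $\min(m,n^2)$ term match the paper's argument.
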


We sketch the proof. It is easy to find the value of $k \in S$ at time
$M_1$ in a priced timed game without reset actions, because no player
can wait and hence the game is equivalent to a priced game. Between
time $M_2$ and $M_1$ the game is nearly an SPTG,
since we can simply translate by decreasing all times with $M_2$ and
divide the times by $M_1-M_2$ to get a game between 0 and 1
instead. After finding the value between $M_2$ and $M_1$ we can then
find the value at $M_2$, since we know the cost if we wait (it becomes
$\lim_{x\rightarrow M_2^{+}} v(k,x)$), by viewing the game as a priced
game at that point.  We can then find the value between $M_3$ and
$M_2$, then at time $M_3$ and so on, until we have solved the
game.

\begin{proof}
We can find $v(k,M_1)$ as the value of state $k$ in the priced game which consists of the same states as $G$ and the actions available at time $M_1$.  We can do so, because the game contains no reset actions and we can therefore neither increase nor decrease time. Note that if multiple actions, $j$, in $A_k$ and $d_j=\ell$ exists for $k,\ell \in S$, we can ignore all but the one with the best cost for the controller of $k$. Hence we can solve such a priced game in time $O(n\log n+\min (m,n^2))$.

We now want to find $\forall k\in S, x\in (M_2,M_1): v(k,x)$. We see that if we wait until $M_1$ in some state, $k$, the rest of the path to $\perp$ costs $v(k,M_1)$, if we play optimally from $M_1$. We see that if we start at a time $x$, we can not reach a time before $x$, because there are no reset actions. Hence, look at a modified game $G'$, with value function $v'$: $G'$ consists of the same set of states as $G$, but it only has the actions available in the interval $(M_2,M_1)$, which, in $G'$, only exists in that interval, and for each state, $k$, an action to $\perp$ of cost $v(k,M_1)$ which is only available at time $M_1$. We will also modify $G'$ such that we subtract $M_2$ from all points in time. Clearly that will not matter for plays starting after time $M_2$. Note that all intervals for actions are either $(0,M_1-M_2)$ or $[M_1-M_2,M_1-M_2]$. We can also divide all points in time with $M_1-M_2$, by also multiplying the rate of each state with $M_1-M_2$. Hence all existence intervals either have the form $(0,1)$ or $[1,1]$ and we clearly have that 
\[\forall x\in (M_2,M_1),k\in S: v(k,x)=v'\big( k,\frac{x-M_2}{M_1-M_2}\big) .\]

We can solve $G'$ using a call to $R$. 

We will now find $v(k,M_2)$. If it is optimal to wait at time $M_2$ in state $k$, we have that $v(k,M_2)=\lim_{x\rightarrow 0^+}v'(k,x)=v'(k,0)$, because we might as well wait as little as possible and then play optimally from there. Hence, $v(k,M_2)$ is the value of state $k$ in the priced game $G''$, consisting of the same states as $G$ and the same actions as those available at time $M_2$ in $G$ and for each state $k$, a action from $k$ to $\perp$ of cost $v'(k,0)$. Like we did for $M_1$ we can ignore all but one action from a state to another. Hence we can solve such a priced game in time $O(n\log n+\min (m,n^2))$.

We now want to find $\forall k\in S, x\in (M_3,M_2): v(k,x)$. We can therefore do like we did for $\forall k\in S, x\in (M_2,M_1): v(k,x)$. Also to find $v(k,M_3)$ we can do like we did for $v(k,d_2)$. We keep on doing this until we are done.

Hence, we use $d$ calls to $R$ and solve $d+1$ priced games. 
\end{proof}

We will now to reduce a game in $PTG_{(0,1)}^{n,m}$, with value function, $v$, to a game in $PTG_{[0,1]}^{n,m}$ using $O(n\log n+\min (m,n^2))$ time.
First note that it is easy to find $v(k,1)$ using a priced game, because time can not change at time 1. 
It is clear that $v(k,0)=\lim_{x\rightarrow 0^+} v(k,x)$, because the only option at time 0 is to wait.
Hence we only need to look at finding $v(k,x)$ for $x\in (0,1)$.  To that we will use the following lemma. Note that the game $G'$ mentioned in the lemma is in $PTG_{[0,1]}^{n,m}$.

\begin{lemma}
\label{lem:remove intervals 2}
Let $G$ be a $PTG_{(0,1)}^{n,m}$, with value function $v$.
Let $G'$ be the modified version of $G$, where all existence intervals
of the form $(0,1)$ in $G$ instead have the form $[0,1]$. Let $v'$ be
the value function for $G'$. We then have: $\forall k\in S,x\in (0,1):v(k,x)=v'(k,x).$
\end{lemma}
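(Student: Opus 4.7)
The plan is to show $v(k_0,x_0) = v'(k_0,x_0)$ for every $k_0 \in S$ and $x_0 \in (0,1)$ by proving the two inequalities $\geq$ and $\leq$ separately, in each case transferring an $\epsilon$-optimal positional strategy from one game to the other with error that tends to $0$. By Theorem~\ref{thm:optimalvalues} it suffices to consider positional strategies in both $G$ and $G'$.

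The guiding observation is that $G$ and $G'$ differ only at times $0$ and $1$: inside $(0,1)$ the action-availability is identical. Since $G$ has no reset actions, any play starting at $(k_0,x_0)$ with $x_0>0$ has non-decreasing time and never reaches time $0$, so the boundary at $0$ is irrelevant. The only meaningful difference is at $x=1$, where $G'$ makes the actions with original interval $(0,1)$ available (now $[0,1]$), whereas in $G$ only actions with $I_j=[1,1]$ (which lead directly to $\perp$) are usable.

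To establish $v(k_0,x_0) \geq v'(k_0,x_0)$, I plan to fix small $\epsilon,\delta>0$, take an $\epsilon$-optimal positional maximizer strategy $\pi_2'$ in $G'$, and define $\pi_2 \in \Pi_2(G)$ as follows: for every state $k\in S_2$ such that $\pi_2'(k,1)=j$ with $I_j^G=(0,1)$, set $\pi_2(k,1-\delta)=j$, $\pi_2(k,y)=\lambda$ for $y\in(1-\delta,1)$, and let $\pi_2(k,1)$ be an arbitrary valid $[1,1]$-action (which exists by the always-available-action assumption); at all other state-time pairs $\pi_2$ agrees with $\pi_2'$. For any minimizer response $\pi_1\in\Pi_1(G)$, with natural lift $\pi_1^{\text{lift}}$ to $G'$, the two plays $\rho^{\pi_1,\pi_2}_{k_0,x_0}$ in $G$ and $\rho^{\pi_1^{\text{lift}},\pi_2'}_{k_0,x_0}$ in $G'$ will coincide up to time $1-\delta$. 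After that, each of the at most $n$ ``time-$1$ actions'' of the $G'$ play is executed $\delta$ time units earlier in $G$, and each such shift perturbs the total cost by at most $\delta\cdot r_{\max}$. Hence $v_{k_0}^{\pi_1,\pi_2}(x_0)\geq v'(k_0,x_0)-\epsilon-O(\delta\cdot n\cdot r_{\max})$; letting $\delta\to 0$ and then $\epsilon\to 0$ yields the inequality. The reverse direction $v(k_0,x_0)\le v'(k_0,x_0)$ follows from a symmetric construction that instead shifts an $\epsilon$-optimal minimizer strategy from $G'$ into $G$.

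The main obstacle is making the cost-discrepancy bound precise. The delicate point is that the shifted maximizer strategy may cause the minimizer's best response in $G$ to diverge from its $G'$-counterpart once time reaches $1-\delta$, so the two plays can differ throughout the final $\delta$-window. A careful case analysis over this final window, together with the observation that the ``time-$1$ phase'' of the play (once $G'$ reaches time $1$, or $G$ reaches time $1-\delta$) contains at most $n$ consecutive transitions before terminating or entering an infinite loop (in which case both games yield cost $\infty$), bounds the accumulated cost difference by $O(\delta\cdot n\cdot r_{\max})$, as required.
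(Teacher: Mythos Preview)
Your high-level plan is the same as the paper's: move near-optimal strategies from $G'$ to $G$ and control the error in a slim window near time~$1$. But the concrete construction you give, and the coupling argument you sketch, both break.

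\medskip
\textbf{The construction is exploitable.} Take $S_1=\{k_1\}$, $S_2=\{k_2\}$, all rates~$0$, and actions
\[
k_1:\ a\ (I_a=(0,1),\,c_a=0,\,d(a)=k_2),\quad b\ (I_b=[1,1],\,c_b=100,\,d(b)=\perp),
\]
\[
k_2:\ j_1\ (I_{j_1}=(0,1),\,c_{j_1}=50,\,d(j_1)=\perp),\quad j_2\ (I_{j_2}=[1,1],\,c_{j_2}=0,\,d(j_2)=\perp).
\]
Then $v'(k_1,x)=v'(k_2,x)=50$ for $x\in(0,1)$. Any optimal $\pi_2'$ in $G'$ has $\pi_2'(k_2,1)=j_1$, which has $I_{j_1}^G=(0,1)$, so your recipe sets $\pi_2(k_2,y)=\lambda$ for $y\in(1-\delta,1)$ and $\pi_2(k_2,1)=j_2$. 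Now Player~1 plays $\pi_1(k_1,y)=\lambda$ for $y<1-\delta/2$ and $\pi_1(k_1,1-\delta/2)=a$. The play enters $k_2$ at time $1-\delta/2$, your $\pi_2$ waits to $1$ and uses $j_2$, and the total cost is~$0$. Since $r_{\max}=0$ here, your asserted bound $v_{k_1}^{\pi_1,\pi_2}(x_0)\ge v'(k_1,x_0)-\epsilon-O(\delta n r_{\max})$ would read $0\ge 50-\epsilon$, which is false. A symmetric two-state example breaks the $\le$ direction as well. The flaw is that prescribing $\lambda$ throughout $(1-\delta,1)$ hands the opponent a free window in which to steer the play into your states and then cash out at the bad $[1,1]$-action.

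\medskip
\textbf{The coupling argument is also wrong on its own terms.} Even where your $\pi_2$ happens to be good, the step ``the two plays coincide up to $1-\delta$ and then differ only by $\le n$ shifts of size $\delta r_{\max}$'' does not hold. After the first maximizer action is shifted from time~$1$ to $1-\delta$, the minimizer strategy $\pi_1$ is evaluated at a different time, and $\pi_1(k,1-\delta)$ need not equal $\pi_1(k,1)$; the state sequences in $G$ and $G'$ can then diverge completely, with action-cost discrepancies that are not $O(\delta)$. So the error cannot be bounded by a play-by-play comparison.

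\medskip
\textbf{What the paper does instead.} The paper does not try to couple plays. In the window it replaces the transferred strategy by the \emph{priced-game optimal} strategy for the game with all actions (call it $\sigma_i^{G'}$), waiting only when that action is a $[1,1]$-action; this guarantees that the sum of action costs in the window is bounded by $v'(k,1)$ from the appropriate side. The waiting cost in the window is trivially $\le (1-x)r_{\max}\le \epsilon/2$. The two pieces are glued via the monotonicity $v'(k,x)\ge v'(k,1)$, which is proved directly and does the work your coupling was supposed to do. If you replace your ``$\lambda$ on $(1-\delta,1)$'' by ``play $\sigma_i^{G'}$ on $[1-\delta,1)$'' and drop the play-comparison in favor of these two inequalities, your outline becomes correct and essentially matches the paper.
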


Intuititively, the proof is as follows: First notice that the extension downward of the existence interval does not change anything, since we can not get to a earlier point in time, than we already are at. The upward extension is somewhat more complicated, but for any $\epsilon>0$ a $\frac{\epsilon}{2}$-optimal strategy in $G$ can be modified in a slim interval close to 1 to yield a strategy which is $\epsilon$-optimal strategy in $G'$. We can then from that make an argument that the values are within $\epsilon$ of each other.

\begin{proof}
Let $\epsilon>0$. We will show that $\forall k\in S,x\in (0,1):v(k,x)=v'(k,x)$ by constructing a strategy, $\sigma_1$, for player 1 that guarantees at most $v'(k,x)+\epsilon$, in $G$, for any $k\in S$ and for any $x\in (0,1)$. Similarly we will construct a strategy, $\sigma_2$, for player 2 that guarantees at least $v'(k,x)-\epsilon$, in $G$, for any $k\in S$ and for any $x\in [0,1)$.

Let $\sigma_1'$ be a $\epsilon/2$-optimal strategy for player 1 in $G'$. Let $r_{\max}=\max_{k\in S} r(s)$. Let $\sigma_1^G$ be the optimal strategy in the priced game which consists of the same states as $G$, but only those actions available at time 1. Let $\sigma_1^{G'}$ be the optimal strategy in the priced game which consists of the same states as $G'$. It is clear that if the existence interval of $\sigma_1^{G'}(k)$ in $G$ is $[1,1]$ then $\sigma_1^{G'}(k)=\sigma_1^{G}(k)$. 

We will now construct $\sigma_1$.
\begin{alignat*}{2}
\sigma_1(k,x)&=&\begin{cases}
\lambda & \text{if $x=0$}\\
\sigma_1'(k,x) & \text{if $0<x< 1-\frac{\epsilon}{2 r_{\max}}$}\\
\sigma_1^{G''}(k) & \text{if $1-\frac{\epsilon}{2 r_{\max}}\leq x<1$ and the existence}\\
& \text{interval for $\sigma_1^{G'}(k)$ in $G$ is $(0,1)$}\\
\lambda & \text{if $1-\frac{\epsilon}{2 r_{\max}}\leq x<1$  and the existence}\\
& \text{interval for $\sigma_1^{G'}(k)$ in $G$ is $[1,1]$}\\
\sigma_1^G(k) & \text{if $x=1$}\\
\end{cases}
\end{alignat*}

Let $k\in S, x\in (0,1)$. We will first show that $v'(k,x)\geq v'(k,1)$. If $k\in S_2$ player 2 could simply wait until time 1, and since $r(k)\geq 0$ the statement follows. If $k\in S_1$ player 1 must keep the play away from $S_2$ (because that would reduce it to the first case) and have no advantages in waiting, since no new actions become available. But since all actions are available at time 1, player 1 could follow the same strategy, as he uses at time $x$, and get the same cost.

We will now show that $\sigma_1$ guarantees at most $v'(k,x)+\epsilon$, for $k\in S, x\in(0,1)$ in $G$. We will do so by contradiction. Assume not. Hence there is a strategy $\sigma_2$, a $x\in (0,1)$ and a $k\in S$, such that $v_k^{(\sigma_1,\sigma_2)}(x)>v'(k,x)+\epsilon$. Let $\rho$ be the play defined by $(\sigma_1,\sigma_2)$. 

There are now two cases. Either $x\geq 1-\frac{\epsilon}{2 r_{\max}}$ or not. 

If $x\geq 1-\frac{\epsilon}{2 r_{\max}}$, we know that \[v_k^{(\sigma_1,\sigma_2)}(x)=\cost(\rho)
=\sum^{t-1}_{i=0}(\delta_i r_{k_l} + c_{j_l})
\]

We have that $\sum^{t-1}_{i=0}\delta_i$ is at most $1-x$, because there are no reset actions, $r_{k_l}\leq r_{\max}$, by definition, and $\sum^{t-1}_{i=0}c_{j_l} \leq v'(k,1)$, by construction of $\sigma_1$. 

Hence 
\begin{align*}v_k^{(\sigma_1,\sigma_2)}(x) &\leq (1-(1-\frac{\epsilon}{2 r_{\max}})) r_{\max}+ v'(k,1)\\
&=\epsilon/2+v'(k,1)\leq \epsilon/2+v'(k,x)
\end{align*}

That is a contradiction.

Otherwise, if $x< 1-\frac{\epsilon}{2 r_{\max}}$, there are two cases. Either the play defined by $(\sigma_1,\sigma_2)$ at some point waits until time $x'\geq 1-\frac{\epsilon}{2 r_{\max}}$ or not. If not, then the play cost at least $v'(k,x)+\epsilon/2$ because player 1 has at all times followed a strategy that guarantees at least that.

Otherwise, we can divide $\rho$ up in two. $\rho^1$ is the first part. The second part, $\rho^2$ begins in some state $k'$ and at time $x'$ such that $x'=1-\frac{\epsilon}{2 r_{\max}}$. Note that this might be in the middle of a wait period. Clearly $\cost(\rho)=\cost(\rho^1)+\cost(\rho^2)$. We must have that $\cost(\rho^1)+v'(k',x')\leq v'(k,x)+\epsilon/2$, because we followed a $\epsilon$ optimal strategy for player 1 in $G'$ in $\rho^1$. By the first part we also know that $\cost(\rho^2)\leq v'(k',x')+\epsilon/2$. 

Hence it is easy to see that $\cost(\rho)\leq v'(k,x)+\epsilon$. That is a contraction.

The construction of $\sigma_2$ can be done symmetrically.

\end{proof}

We are now ready for our reduction to SPTGs. 

\begin{lemma}\label{lem:remove intervals 3}
Solving any game $G$ in $G'\in PTG_{[0,1]}^{n,m}$ can be polynomially
reduced to solving an SPTG with  $n+1$ states and $m+1$ actions.
\end{lemma}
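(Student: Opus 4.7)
The plan is to construct the SPTG $G^*$ by a direct translation of $G = (S_1,S_2,(A_k)_{k\in S},c,d,r,(I_j)_{j\in A},\emptyset)$ following the blueprint of Definition~\ref{def:sptgvxd}, since that definition was advertised as arising from Lemmas~\ref{lem:remove intervals}--\ref{lem:remove intervals 3}. Concretely, add a single new state $\max$ controlled by Player~2 with rate $r_{\max} := \max_{k \in S} r_k$ and one new action $\perp_{\max}$ of cost $0$ leading to $\perp$. Keep every action of $G$, with the only changes being to destinations: for every action $j$ with $I_j = [1,1]$, which by hypothesis satisfies $d_j = \perp$, redirect $d^*_j \gets \max$ if $j$ emanates from a Player~1 state, and leave $d^*_j \gets \perp$ if $j$ emanates from a Player~2 state. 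All costs, rates, and the other destinations are inherited unchanged. This gives $n+1$ states and $m+1$ actions, and since no existence intervals are present, $G^*$ is a valid SPTG.

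The key step is to prove $v^{G^*}_k(x) = v^G_k(x)$ for every $k \in S$ and every $x \in [0,1]$. I would prove the two inequalities by translating strategies in each direction and invoking Theorem~\ref{thm:optimalvalues} to reduce to positional strategies. For $v^{G^*}_k(x) \le v^G_k(x)$, take any (near-)optimal positional strategy $\pi_1$ for Player~1 in $G$; interpret it in $G^*$ by only ever taking a redirected $[1,1]$-action at time $1$, at which moment $\max$ is entered and Player~2's only available move $\perp_{\max}$ terminates immediately with cost $0$, so the outcome of any play matches that in $G$ exactly. Similarly Player~2's positional strategy in $G$ can be executed in $G^*$ verbatim, since Player~2's $[1,1]$-actions now go straight to $\perp$.

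The reverse inequality $v^{G^*}_k(x) \ge v^G_k(x)$ is where the asymmetric role of $\max$ does the work. Given a strategy for Player~1 in $G^*$, I would argue that entering $\max$ at any time $y<1$ is at least as costly as waiting in the original state until time~$1$ and then entering $\max$: once inside $\max$, Player~2 (the controller there) will wait until $1$ before using $\perp_{\max}$, since $r_{\max} \ge 0$; the contribution of the $\max$-visit to the cost is therefore $(1-y)r_{\max}$, which dominates the $(1-y)r_k$ that would have been paid by waiting in $k$ up to time $1$ before taking the $[1,1]$-action in $G$. Hence any Player~1 strategy in $G^*$ can be mimicked in $G$ with no worse cost. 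For Player~2's $[1,1]$-actions, which are now always available in $G^*$ and lead directly to $\perp$, the maximizer never benefits from taking one early whenever the rate in the originating state is positive, so his behavior can be mimicked in $G$ by simply delaying the action to time~$1$; when the rate is $0$, taking it early and taking it at time $1$ give identical cost. Combining both directions via Theorem~\ref{thm:optimalvalues} yields equality of the value functions on $S$.

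The main obstacle I expect is the handling of the boundary case $x = 1$, where in $G$ several $[1,1]$-actions become simultaneously available and players may want to compose them with other moves, whereas in $G^*$ those actions have been available all along. I would address this by noting that at $x = 1$ no further time can pass in either game, so both reduce to the same underlying priced subgame restricted to the actions usable at time~$1$ (augmented, in $G^*$, by $\max$ from which Player~2 must exit immediately at $0$ cost), and the values therefore coincide there as well. Finally, since the construction is syntactically linear in the description of $G$, the reduction is clearly polynomial.
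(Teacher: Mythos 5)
Your construction (one new Player~2 state $\max$ with rate $\max_k r_k$ and a single cost-$0$ exit to $\perp$, Player~1's $[1,1]$-actions redirected to $\max$, Player~2's $[1,1]$-actions made available throughout) and your two-sided mimicking argument are exactly the paper's proof, down to the same key observations that Player~2 never benefits from terminating before time $1$ and that Player~1 never benefits from entering $\max$ early because $r_{\max}\ge r_k$. The proposal is correct and takes essentially the same approach as the paper, if anything spelling out the strategy translations in somewhat more detail.
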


\begin{proof}
Player 2 will never use a action, $j$, to $\perp$ except at time 1 in a simple priced timed game, because player 2 might as well wait until time 1 before using $j$, which will not decrease the cost because rates are non-negative. Hence we can change all actions, $j$, where the existence interval is of the form $[1,1]$ to $[0,1]$ if $j\in A_k,k\in S_2$ (remember that if $j$ has an existence interval of the form $[1:1]$, then $d_j=\perp$, by definition of $PTG_{[0,1]}^{n,m}$), without changing the value functions. We will create a new state, $\max$, with the maximum rate in the game, belonging to player 2, which has a action to $\perp$ of cost 0 and existence interval $[0,1]$. We will now redirect all remaining actions (thus going from a state in $S_1$) which have existence interval $[1,1]$ to $\max$ (from $\perp$) and change the existence interval to $[0,1]$. We can see that player 1 will only use the actions to $\max$ at time 1, since it is cheaper to wait to time 1 and then move to $\max$.

Now all existence intervals have the form $[0,1]$. It is easy to see that we only need one action, $j$, for $j\in A_k$ and $d_j=\ell$ for any pair $k,\ell \in S$, because the controller of $k$ will, when playing optimally, only use the action with the best, for that player, cost. Hence we get an equvivalent simple priced timed game.

\end{proof}

\begin{lemma}\label{lem:ptg time}
Any game $G$ in $(n,m,r,d)$-PTG, can be solved in time $O((r+1)d(n\log n +\min (m,n^2)))$ using at most $(r+1)d$ calls to an oracle $R$ that solves SPTGs with $n+1$ states and at most $m+n+1$ actions.
\end{lemma}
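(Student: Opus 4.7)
The plan is to chain together the three reductions already established. First, I would apply Lemma~\ref{lem:remove resets} to reduce the given $(n,m,r,d)$-PTG to solving $r+1$ games in $(n,m,0,d)$-PTG. This reduction is explicitly stated to work via backward induction on the ``number of resets used so far'' coordinate, so the overhead is only bookkeeping and we pay the cost of solving each of the $r+1$ subgames.

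Next, for each of the $r+1$ subgames, I would apply Lemma~\ref{lem:remove intervals}. This gives us, per subgame, at most $d$ calls to an oracle for $PTG_{(0,1)}^{n,m+n}$ together with $O((n\log n + \min(m,n^2))d)$ extra work (the internal priced-game solves between and at the endpoints $M_i$ of the existence intervals). Summing over the $r+1$ subgames, we thus get at most $(r+1)d$ oracle calls plus $O((r+1)d(n\log n + \min(m,n^2)))$ additional work, which matches the claimed running time.

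It then remains to convert each $PTG_{(0,1)}^{n,m+n}$ oracle call into an SPTG solve with the stated size. For this I would invoke Lemma~\ref{lem:remove intervals 2} first: it tells us that extending all existence intervals of the form $(0,1)$ to $[0,1]$ does not alter the value function on the open interior, so we may replace the $PTG_{(0,1)}^{n,m+n}$ game by a $PTG_{[0,1]}^{n,m+n}$ game for free. Finally, Lemma~\ref{lem:remove intervals 3} turns a $PTG_{[0,1]}^{n,m+n}$ instance into an SPTG with one extra state (the added $\max$ state) and one extra action (the $\perp_{\max}$ edge), yielding an SPTG with $n+1$ states and $(m+n)+1 = m+n+1$ actions, exactly as required by the oracle signature in the statement.

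Putting the three reductions together in sequence accounts for all $(r+1)d$ oracle calls and the $O((r+1)d(n\log n + \min(m,n^2)))$ overhead, completing the proof. I do not anticipate a substantive obstacle here since every reduction is already proved; the only care needed is to verify that the action count after combining Lemmas~\ref{lem:remove intervals} and~\ref{lem:remove intervals 3} is indeed $m+n+1$ and not larger, and that the polynomial overhead accumulated at each stage does not exceed the quoted bound.
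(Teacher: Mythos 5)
Your proposal is correct and follows exactly the paper's route: the paper's own proof of this lemma is simply the observation that it is a direct consequence of Lemmas~\ref{lem:remove resets}, \ref{lem:remove intervals}, \ref{lem:remove intervals 2} and \ref{lem:remove intervals 3}, chained in the same order and with the same accounting of oracle calls, overhead, and the $(m+n)+1$ action count that you give. No gaps; your write-up is just a more explicit version of the paper's one-line argument.
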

\begin{proof}
The proof is a simple consequence of  Lemma \ref{lem:remove resets}, Lemma \ref{lem:remove intervals}, Lemma \ref{lem:remove intervals 2} and Lemma \ref{lem:remove intervals 3}.  
\end{proof}

Note that $d$ is bounded by $2m+1$ and $r$ is bounded by $n$.

\begin{theorem}\label{thm:final ptg time}
Any game $G$ in $(n,m,r,d)$-PTG, can be solved in time \[O((r+1)d(\min (m,n^2)+n\cdot\min\{12^n,\prod_{k\in S}(|A_k|+1)\})).\]
\end{theorem}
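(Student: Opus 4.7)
The plan is to stitch together Lemma \ref{lem:ptg time} with Theorem \ref{thm:time}. Lemma \ref{lem:ptg time} already does the heavy lifting: it reduces solving an $(n,m,r,d)$-PTG to $(r+1)d$ oracle calls on SPTGs with at most $n+1$ states and at most $m+n+1$ actions, incurring an overhead of $O((r+1)d(n\log n+\min(m,n^2)))$ for the boundary priced-game computations at the endpoints $M_i$. What remains is to instantiate the SPTG oracle with $\SolveSPTG$ and simplify.

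For each oracle call I would invoke Theorem \ref{thm:time}, which gives a running time of $O(m' \cdot \min\{12^{n'},\prod_{k}(|A_k^{\text{SPTG}}|+1)\})$ on an SPTG with $n'$ states and $m'$ actions. Substituting $n'=n+1$ and $m'\le m+n+1$, and using that $12^{n+1}=O(12^n)$ together with the fact that the action-count product for the derived SPTG differs from $\prod_{k\in S}(|A_k|+1)$ only by a $2^{O(n)}$ factor (one extra stop action per state from Lemma \ref{lem:remove intervals} and one max state of constant degree from Lemma \ref{lem:remove intervals 3}), the per-call cost is $O((m+n)\cdot \min\{12^n,\prod_{k\in S}(|A_k|+1)\})$; the $2^{O(n)}$ blow-up is absorbed into the $12^n$ side of the min. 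A standard domination argument (in an SPTG keep only the cheapest action between each ordered pair of states, from the viewpoint of the controller of the source) reduces $m'$ to $O(n^2)$ without affecting values, so the $(m+n)$ factor above can be replaced by $\min(m,n^2)+n$.

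Summing across all $(r+1)d$ oracle calls and folding in the $O((r+1)d(n\log n+\min(m,n^2)))$ preprocessing overhead yields
\[
O\!\left((r+1)d\bigl(\min(m,n^2)+n\cdot\min\{12^n,\textstyle\prod_{k\in S}(|A_k|+1)\}\bigr)\right),
\]
after absorbing the additive $n\log n$ into the $n\cdot\min\{12^n,\dots\}$ term (since $\min\{12^n,\prod(|A_k|+1)\}\ge 1$) and the additive $\min(m,n^2)$ per-call overhead into the first summand. The most delicate step I expect is the bookkeeping that carries $\prod_{k\in S}(|A_k|+1)$ cleanly through the three reductions of Lemma \ref{lem:remove resets}, Lemma \ref{lem:remove intervals}, Lemma \ref{lem:remove intervals 2}, and Lemma \ref{lem:remove intervals 3}; everything else is a mechanical combination of the already-proven lemmas.
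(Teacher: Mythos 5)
Your overall route is exactly the paper's: Theorem \ref{thm:final ptg time} is obtained by instantiating the SPTG oracle of Lemma \ref{lem:ptg time} with $\SolveSPTG$ via Theorem \ref{thm:time}, and your handling of the additive $n\log n$ and $\min(m,n^2)$ overheads and of the $m'\le\min(m,2n^2)+n+O(1)$ reduction is consistent with what the paper does. However, the step you yourself flag as the most delicate one --- carrying $\prod_{k\in S}(|A_k|+1)$ through the reductions --- is handled incorrectly. You argue that the derived SPTG's product is at most a $2^{O(n)}$ multiple of $\prod_{k\in S}(|A_k|+1)$ and that this blow-up ``is absorbed into the $12^n$ side of the min.'' That inference is not valid: a multiplicative factor on one argument of a $\min$ cannot be transferred to the other argument. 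Concretely, if every $|A_k|=1$ the derived product is $\Theta((3/2)^n)\cdot\prod_{k\in S}(|A_k|+1)$, and taking e.g.\ $\prod_{k\in S}(|A_k|+1)=8^n$ gives $\min\{12^n,(3/2)^n\cdot 8^n\}=12^n$, which exceeds $\min\{12^n,\prod_{k\in S}(|A_k|+1)\}=8^n$ by an unbounded factor $(3/2)^n$. So your derivation only yields the bound with $\prod_{k\in S}(|A_k|+2)$ (or, equivalently, loses an exponential factor against the product term), not the bound claimed in the theorem.

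The paper closes this gap with a different, structural observation rather than an absorption argument: the extra action $\perp_k$ introduced per state by the PTG-to-SPTG reduction (Definition \ref{def:sptgvxd}) and the waiting action $\lambda_k$ introduced per state when forming the priced games $G^{x,y}$ (Definition \ref{def:Gxe}) both lead directly to the terminal state $\perp$, so in the counting underlying Theorems \ref{thm:exp} and \ref{thm:time} only one additional option per state needs to be accounted for; this is what turns the naive $\prod_{k\in S}(|A_k|+2)$ into $\prod_{k\in S}(|A_k|+1)$. To repair your proof you would need to make this (or an equivalent) argument about the strategy-profile/potential count of the derived games, rather than appealing to the $\min$.
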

\begin{proof}
The proof is a consequence of Theorem \ref{thm:time} and  Lemma \ref{lem:ptg time}. Note that we only get $\prod_{k\in S}(|A_k|+1)$ and not $\prod_{k\in S}(|A_k|+2)$, because the additional actions we add to each state (using Definition \ref{def:sptgvxd} and \ref{def:Gxe}) both goes to $\perp$ and hence we only need one of them.
\end{proof}

\begin{theorem}\label{thm:bcfl-abm time}
The number of iterations used by the BCFL-ABM algorithm to solve any PTG $G$ is at most \[m\cdot n^{O(1)} \min\{12^n,\prod_{k\in S}(|A_k|+1)\}~.
\]
\end{theorem}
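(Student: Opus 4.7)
The plan is to combine the PTG-to-SPTG reduction of Lemma~\ref{lem:ptg time} with the bound on event points given by Theorem~\ref{thm:exp}. By Lemma~\ref{lem:ptg time}, solving a PTG in $(n,m,r,d)$-PTG reduces to solving $(r+1)d$ SPTGs, each with at most $n+1$ states and at most $m+n+1$ actions; and since $r\le n$ and $d\le 2m+1$ we have $(r+1)d \le (n+1)(2m+1) = m \cdot n^{O(1)}$. So it suffices to bound the number of iterations of BCFL-ABM on a single SPTG (or equivalently, on the relevant sub-block of the reduced PTG) by $L(G) \cdot n^{O(1)}$, after which Theorem~\ref{thm:exp} yields $L(G) \le \min\{12^n, \prod_{k \in S}(|A_k|+1)\}$ and the bounds multiply to give the claim.

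First I would recall that an iteration of BCFL-ABM is one backward-induction step on the number of transitions: iteration $t$ computes, for each state, the optimal piecewise linear value function restricted to plays using at most $t$ transitions. The algorithm terminates when the value function stabilizes. The key observation to relate BCFL-ABM to event points is that the set of breakpoints of the stabilized value function $v_k$ of an SPTG is exactly the set of event points of $G$, by the definition of $L(G)$. Thus once BCFL-ABM has produced a value function agreeing with the true value function on each of the $L(G)$ intervals delimited by event points, it is done.

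Next I would argue that each iteration of BCFL-ABM can fail to stabilize only because at least one state has a sub-optimal value on at least one of these intervals, and that the sequence of value functions produced by BCFL-ABM is monotone in the lexicographic order induced by the potential of Lemma~\ref{lemma:potential}. Concretely, within a fixed interval bounded by consecutive event points, an iteration that changes the value either strictly improves toward the stable value or changes the associated positional action; the potential-matrix argument of Lemma~\ref{lemma:potential} applied interval-by-interval implies that only $n^{O(1)}$ iterations can affect the value on a given interval before it is fixed at the correct value. Summed over the $L(G)$ intervals, this gives at most $L(G)\cdot n^{O(1)}$ iterations on any single SPTG.

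The main obstacle is the second step above: justifying that BCFL-ABM's value iteration updates can be charged interval-by-interval to the event-point structure in a way compatible with the potential argument originally stated for the sweep-line/strategy-iteration algorithm. This requires translating a single BCFL-ABM iteration into a (possibly parallel) sequence of switches that do not increase the potential on each interval and strictly decrease it somewhere whenever the value function is not yet stable. Once this is in place, multiplying the per-SPTG bound $L(G)\cdot n^{O(1)}$ by the $(r+1)d = m\cdot n^{O(1)}$ SPTGs from the reduction yields $m \cdot n^{O(1)} \cdot \min\{12^n, \prod_{k\in S}(|A_k|+1)\}$, as desired.
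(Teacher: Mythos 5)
There is a genuine gap, and you have in fact flagged it yourself: the entire content of the theorem rests on the step you call ``the main obstacle,'' namely bounding the number of BCFL-ABM value-iteration steps in terms of the segment/event-point structure, and your proposal does not supply that argument. The potential matrix of Lemma~\ref{lemma:potential} is defined for strategy profiles of the priced games $G^x$ and decreases under \emph{single improving switches} in the strategy-iteration process; it is not set up to measure progress of a value-iteration step, which simultaneously recomputes entire piecewise linear functions and does not correspond in any obvious way to a sequence of potential-decreasing switches ``interval-by-interval.'' Establishing such a correspondence would amount to a new (and nontrivial) analysis relating value iteration to strategy iteration, and your claim that ``only $n^{O(1)}$ iterations can affect the value on a given interval'' is asserted rather than derived from Lemma~\ref{lemma:potential}, whose conclusion is a global bound on the total number of switches, not a per-interval bound.

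The paper avoids this entirely by a much more indirect route. It first observes that Theorem~\ref{thm:final ptg time} (the running-time bound for the new algorithm combined with the reduction) is itself an upper bound on the total number of line segments of the value functions of $G$, simply because the number of segments is a lower bound on the size of the output that the algorithm writes down. It then cites Bouyer \emph{et al.}~\cite{BLMR06} (page 11) for the fact that the number of BCFL-ABM iterations is at most $n$ times the number of line segments of the value functions. Multiplying these two bounds gives the theorem. Note also that BCFL-ABM runs on the original PTG $G$, not separately on the $(r+1)d$ SPTGs produced by the reduction, so your restructuring of the iteration count ``per SPTG'' would need additional justification even if the per-interval charging were repaired. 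If you want to salvage your proof, the cleanest fix is to replace your second and third paragraphs with the citation to the Bouyer \emph{et al.} iteration bound and to obtain the segment count from the output-size argument rather than from the potential matrix.
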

\begin{proof}
Note that Lemma \ref{thm:final ptg time} gives us a upper bound on the number of line segments of the value functions of $G$, because the number of line segments is a lower bound of the size of the output. 
By Bouyer {\em et al.}~\cite{BLMR06}, page 11, we know that the number of iterations needed for the BCFL-ABM algorithm is at most the number of line segments times $n$. 
\end{proof}

\begin{theorem}
\label{thm:reach}
Any priced timed game, $G$, in $(n,m,r,d)$-PTG, where all states have rate 1 and all actions have cost 0, can be solved in time $O((r+1)d(n\log n +\min (m,n^2)))$.
\end{theorem}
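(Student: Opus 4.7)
The plan is to invoke Lemma~\ref{lem:ptg time} as a black box and then argue that, in the reachability setting (rate $1$ everywhere, cost $0$ on every action), each of the $(r+1)d$ SPTG oracle calls can be answered in $O(n\log n+\min(m,n^2))$ time. Combined with the $O((r+1)d(n\log n+\min(m,n^2)))$ overhead already paid by the reduction itself, this gives the claimed total bound.

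First I apply Lemma~\ref{lem:ptg time} to $G$. This yields at most $(r+1)d$ SPTG instances, each with $O(n)$ states and $O(m+n)$ actions, constructed as in Definition~\ref{def:sptgvxd} from the values of $G$ at the successive break times $M_i$ produced by the sweep in $\SolvePTG$. Thus the only thing I still need is a per-SPTG bound of $O(n\log n+\min(m,n^2))$, i.e.\ essentially the cost of one call to $\ExtendedDijkstra$.

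The key structural observation is that, because all original rates equal $1$ and all original transition costs are $0$, the value functions $v_k$ of the reachability PTG $G$ are piecewise linear with slopes in $\{-1,0\}$, and all of their break points sit at endpoints of existence intervals, i.e.\ at the very times $M_i$ that the reduction peels off explicitly. Hence on every open sub-interval $(M_i,M_{i-1})$ the value functions of $G$ are affine in $x$. Consequently the SPTG built from $(v(M_{i-1}),x)$ in Definition~\ref{def:sptgvxd} has affine value functions on the whole rescaled interval $[0,1]$: in the notation of Section~\ref{sec-simple}, the lines $f_{j,1}(x'')$ and $f_{\sigma^1(k),1}(x'')$ are either identical or non-intersecting inside $(0,1]$. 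Therefore $\NextEventPoint(G^1)=0$ and $L(G)\le 1$, so the $\ExtendedDijkstra$-variant of $\SolveSPTG$ from Theorem~\ref{thm:time} halts after a single pass in time $O(m+n\log n)=O(n\log n+\min(m,n^2))$.

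Multiplying the per-SPTG cost by the $(r+1)d$ oracle calls and adding the reduction overhead yields the stated running time $O((r+1)d(n\log n+\min(m,n^2)))$. The main obstacle in turning this sketch into a full proof is the structural step, namely rigorously verifying that the reachability structure really propagates through the definition of $G^{v,x,d}$ despite the fact that this SPTG has two distinct rates ($d$ on the original states and $1$ on the auxiliary $\max$ state); one must check that this discrepancy only rescales the affine pieces and does not create an interior intersection between two $f_{j,1}$ lines, which is exactly what ensures $\NextEventPoint(G^1)=0$ and thus the single-iteration termination of $\SolveSPTG$.
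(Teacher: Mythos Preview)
Your overall strategy matches the paper's exactly: reduce via Lemma~\ref{lem:ptg time} to $(r+1)d$ SPTGs, argue that each such SPTG has only one event point, and then invoke the $\ExtendedDijkstra$-variant of Theorem~\ref{thm:time}. The gap is in how you justify that single event point.

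Your ``key structural observation'' --- that the value functions of the original PTG $G$ have all their break points at the times $M_i$ --- is precisely what the theorem is trying to establish (this is equivalent to $\EventPoints=1$ for each of the derived SPTGs). Asserting it and then deducing that the $f_{j,1}$ lines don't cross is circular; you are assuming the conclusion in disguise. You yourself flag this as ``the main obstacle,'' and indeed it is the whole proof.

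The paper avoids this circularity by reasoning \emph{inside} the derived SPTG, without ever appealing to global properties of $v_k$. After applying Lemmas~\ref{lem:remove resets}--\ref{lem:remove intervals 3}, all original states in the SPTG carry the same rate $c>0$ (the rescaling factor), and every action that does not end in $\perp$ or $\max$ has cost $0$. One then looks at an optimal profile $\sigma$ for the priced game $G^1$ and argues: if $r(k,\sigma)=0$ (i.e.\ the path from $k$ reaches $\perp$ without waiting), it cannot have passed through any $S_2$ state, because Player~2 would optimally wait; since the only positive-cost actions touch $S_2$ (via $\perp_k$ or $\max$), this forces $f_{\sigma(k),1}\equiv 0$. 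Hence for any $j\in A_k$ either both lines $f_{j,1}$ and $f_{\sigma(k),1}$ have slope $-c$ (so their difference is constant and they never meet if distinct at $x=1$), or the one with slope $0$ is identically $0$ (so if they differ at $x=1$ the other stays strictly positive on $[0,1)$). Either way $\NextEventPoint(G^1)=0$. This is the concrete argument you need in place of the asserted structural claim; it also handles the $\max$-state rate issue you worried about, since what matters is only the dichotomy between ``rate $c$'' and ``rate $0$ with value $0$'' for the $b^1$ term.
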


\begin{proof}
If we use Lemma \ref{lem:remove resets}, Lemma \ref{lem:remove intervals}, Lemma \ref{lem:remove intervals 2} and Lemma \ref{lem:remove intervals 3} on such a game, we get $(r+1)d$ SPTGs. If we look carefully at the lemmas we see that all states, in the SPTGs have rate $c$, for some $c>0$. $c$ depends on the interval length. Also, all actions that do not go to $\perp$ or $\max$ have cost 0. 

We now need to bound the number of event points. We will show that $\EventPoints(G')=1$ for $G'$ being any of the SPTGs generated. 

Look at the priced game $G^1$, as defined in section \ref{sec:solve sptgs}. Let $\sigma$ be some optimal strategy profile for $G^1$. We see that, if $r(k,\sigma)=0$ for some $k\in S$, we can not have passed through any states in $S_2$, since it is optimal to wait until time 1 for player 2. Since all actions of positive cost either goes to a state in $S_2$ or from a state in $S_2$, we must have that $v^\sigma_k=0$. 

For convenience we repeat the definition of the next event point and the function $f$ here.
The definition of the next event point, $\NextEventPoint(G^x)$, was:
\[
\max~\{0\}  \cup \{x' \in [0,x) \mid \exists k \in S, j \in
  A_k: f_{j,x}(x) \ne f_{\pi(k),x}(x) \wedge f_{j,x}(x') = f_{\pi(k),x}(x') \}.
\]

The definition of $f$ was
\[
f_{j,x}(x'') = c_j + a^x(\dest(j)) + b^x(\dest(j))(x - x'').
\]

Note that $b^x(\dest(j))$ corresponds to the rate of the next state we wait in if both players follow $\sigma$ and $f_{j,x}(x'')$ is the cost to reach $\perp$ if both players follow $\sigma$. Hence $f_{j,x}\geq 0$ and if $b^1=0$ then $f_{j,x}(x'')=0$ by the preceding, because $\sigma$ was optimal. Note that if $f_{j,1}(1) \ne f_{\pi(k),1}(1)$, then at least the larger expression of the two must have $b^1(\dest(j))=c$ and therefore we have that for all $x\in [0,1):f_{j,1}(x) \ne f_{\pi(k),1}(x)$, because either the $b^1(\dest(j))$'s are equal in the two expressions, in which case the difference between the two expressions do not change with $x$, or one is positive and the other is 0 for all $x\in [0,1]$.

Therefore we can apply Theorem \ref{thm:time} and get that
$\SolveSPTG$ solves any of $(r+1)d$ SPTGs in time $O(m+n\log n)$. We
therefore solve all in time $O((r+1)d(\min (m,n^2)+n\log n)$. The
reductions also required time $O((r+1)(n\log n +\min
(m,n^2))d)$. Hence, our time bound becomes $O((r+1)d(n\log n +\min (m,n^2)))$ .

\end{proof}

\begin{theorem}\label{thm:aut}
Any priced timed automata (i.e., all states are controlled by Player 1), $G$ in $(n,m,r,d)$-PTG, can be solved in time $O((r+1)d n^2(\min (m,n^2)+n\log n))$.
\end{theorem}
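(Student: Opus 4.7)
The plan is to invoke Lemma~\ref{lem:ptg time} directly, but with a tighter bound on each SPTG because the underlying game is essentially one-player. First I would observe that since every state of $G$ is controlled by Player~1, when we apply the reduction chain of Lemma~\ref{lem:remove resets}, Lemma~\ref{lem:remove intervals}, Lemma~\ref{lem:remove intervals 2}, and Lemma~\ref{lem:remove intervals 3}, each of the $(r+1)d$ resulting SPTGs $G'$ has all of its original states controlled by Player~1, plus at most the single auxiliary state $\max$ introduced in Definition~\ref{def:sptgvxd}, which belongs to Player~2 but has only the single action $\perp_{\max}$. Thus Player~2 has no strategic choice in $G'$, and $G'$ is effectively a one-player game on $n+1$ states.

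Next I would apply the second part of Theorem~\ref{thm:exp}, which gives $\EventPoints(G') = O(n^2)$ whenever there is only one player with a meaningful choice. The monotonicity argument in the proof of Theorem~\ref{thm:exp} only relies on improving switches, and no improving switch is ever available at the $\max$ state (its action set has size one), so the bound transfers unchanged. Then I would apply the second bound of Theorem~\ref{thm:time} to solve each $G'$ in time $O(\EventPoints(G')(m' + n\log n))$ where $m'$ is the number of actions in $G'$. Before feeding $G'$ to $\SolveSPTG$ I would collapse parallel actions between any ordered pair of states to keep only the cheapest one -- valid because every state with a real choice is controlled by the minimizer -- so that $m'$ can be replaced by $\min(m, n^2) + O(n)$. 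This yields a per-SPTG bound of $O(n^2(\min(m, n^2) + n\log n))$.

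Finally, I would sum over the $(r+1)d$ SPTGs produced by the reduction, and add the cost of the reduction itself which Lemma~\ref{lem:ptg time} bounds by $O((r+1)d(n\log n + \min(m,n^2)))$, giving the claimed total time $O((r+1)d \, n^2(\min(m,n^2) + n\log n))$.

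The main obstacle I expect is justifying that the one-player $O(n^2)$ bound of Theorem~\ref{thm:exp} really does survive the introduction of the $\max$ state. This requires a careful reading of the monotonicity argument: one needs to note that every improving switch happens at a Player~1 state (since $\max$ has a single action), and that the rate encountered after such a switch can only decrease as the sweep-line moves backward, so the total number of switches, and hence event points, remains $O(n^2)$. Once this is established, the rest of the proof is just bookkeeping of the reduction costs.
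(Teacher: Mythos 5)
Your proof is essentially correct, but it diverges from the paper's at the one step that matters. The paper deliberately does \emph{not} apply Lemma~\ref{lem:remove intervals 3} here: it stops after Lemmas~\ref{lem:remove resets}, \ref{lem:remove intervals} and \ref{lem:remove intervals 2}, leaving $(r+1)d$ reset-free games with intervals $[0,1]$ or $[1,1]$ in which \emph{every} state still belongs to Player~1, and then observes that $\SolveSPTG$ runs on these directly (the $[1,1]$ actions are consumed by the initial $\ExtendedDijkstra$ call at time $1$ and never reappear during the backward sweep). This keeps the games genuinely one-player, so the $\EventPoints(G)=O(n^2)$ clause of Theorem~\ref{thm:exp} applies verbatim. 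You instead push through Lemma~\ref{lem:remove intervals 3}, accept the auxiliary Player-2 state $\max$, and then reopen the proof of Theorem~\ref{thm:exp} to argue the $O(n^2)$ bound survives. That route does work, and you correctly identify it as the crux, but be aware of one imprecision: in the derived priced games $G^x$ the state $\max$ does \emph{not} have a single action --- it acquires the waiting action $\lambda_{\max}$, so Player~2 has a binary choice there and in fact performs one improving switch (from $\perp_{\max}$ to $\lambda_{\max}$) right after the initial Dijkstra call; after that it is frozen. Your monotone-rate argument for the Player-1 states is the part that actually carries the $O(n^2)$ bound, and it is sound: a Player-1 improving switch can only (weakly) decrease $r(k,\sigma)$, each event point strictly changes some state's rate, and with at most $n+2$ distinct rates this caps the event points at $O(n^2)$. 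In short, the paper's route buys a clean citation of Theorem~\ref{thm:exp} at the cost of arguing that $\SolveSPTG$ tolerates $[1,1]$-actions; yours buys a clean citation of the reduction chain at the cost of re-examining the one-player event-point bound. The remaining bookkeeping (collapsing parallel actions to get $\min(m,n^2)+O(n)$, summing over $(r+1)d$ subgames, adding the reduction cost) matches the paper.
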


\begin{proof}
If we use Lemma \ref{lem:remove resets}, Lemma \ref{lem:remove intervals} and Lemma \ref{lem:remove intervals 2} on a priced timed automata, we get $(r+1)d$ priced timed games, without resets where all existence intervals are either $[0,1]$ or $[1,1]$ and all states belong to Player 1. The algorithm described in Figure \ref{fig:solve}, solves SPTGs by first solving them for time 1, as a priced game, and then solve them by induction backwards through time. We can still solve the game at time 1, and there is no differences in the induction, since no actions become available at time $x$ for $x<1$.
Hence, from Theorem \ref{thm:exp} and Theorem \ref{thm:time} we get
that we can solve such games in time $n^2(\min (m,n^2)+n\log n)$. The
reductions also required time $O((r+1)d(\min (m,n^2)+n\log n))$ and
therefore the total time complexity is  $O((r+1)d n^2(\min (m,n^2)+n\log
n))$.
\end{proof}

\section{Concluding remarks}

We have presented
an algorithm for solving one clock priced timed games with a complexity which is
close to linear in $L$, with $L = \EventPoints(G)$ being a lower bound
on the size
of the object to be produced as output. We think it is an attractive candidate for
implementation.

We have also given a new upper bound on $L$. While it is better than previous
bounds, we do not expect this bound to be optimal. It seems to be a
``folklore theorem'' that $L$ does not become very big for games
arising in practice. We would like to suggest the following
conjecture.
\begin{conjecture}
For all SPTGs $G$, $\EventPoints(G) \le p(n)$ for some polynomial $p$.
\end{conjecture}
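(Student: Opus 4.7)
The plan is to refine the potential-matrix analysis of Lemma~\ref{lemma:potential} by identifying a smaller, polynomially-sized invariant that strictly changes at each event point. The bound of $12^n$ counts \emph{all} possible matrices $P(\sigma) \in \N^{n \times N}$, but most such matrices should never appear along a sweep from $x = 1$ down to $x = 0$ in any particular SPTG. So the starting goal is to show that only polynomially many distinct potential matrices actually arise as $\sigma^x$ during the run of $\SolveSPTG$.

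First I would try to bound, for each state $k$, the number of event points at which $\sigma^x(k)$ changes. In the one-player case the waiting rate $r(k,\sigma^x)$ is monotone in $x$ (second half of Theorem~\ref{thm:exp}), giving each state at most $N \le n$ strategy changes and the total $O(n^2)$ bound. The natural attempt in the two-player setting is to show that the sequence of rates $(r(k,\sigma^x))_x$, while not globally monotone, visits only a polynomial number of distinct values, or equivalently that the set of ``active'' states (those whose strategy has just changed) can only reconfigure polynomially often. A concrete starting lemma would be to look at the subgame induced on states that Player~2 currently routes to a given rate class, and to try to show that such subgames can have only polynomially many distinct equilibria along the sweep.

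The second step would be to lift any per-state bound into a global bound on $\EventPoints(G)$. Since every event point causes at least one strategy change by Lemma~\ref{lemma:potential}, a per-state bound $q(n)$ gives $\EventPoints(G) \le n \cdot q(n)$ immediately. Care is needed when multiple states change simultaneously at a single event point, but the lexicographic order $\prec$ on potential matrices lets us charge each event point to its most significant changing coordinate, avoiding double counting.

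The main obstacle will be the lack of global monotonicity --- this is precisely the reason the current bound is exponential. As $x$ decreases, the two players' best responses can oscillate, and at each oscillation the equilibrium can revisit combinations of strategies that look ``new'' to the potential matrix but whose global geometry is constrained. I expect that a successful proof must uncover a genuinely new combinatorial invariant, perhaps a linear-programming duality certificate coming from the characterisation of priced games, or a signed charging scheme that associates each event point with a unique triple $(k, j, k')$ of a state, a deviating action, and a ``witness'' state of the change, rather than any direct refinement of the matrix $P(\sigma)$. A reasonable first milestone, and a good way to test candidate invariants, would be to verify the conjecture on structurally restricted classes (for instance acyclic SPTGs, or SPTGs in which Player~2 controls only $O(1)$ states) and then attempt to glue these results together.
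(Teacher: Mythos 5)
This statement is stated in the paper as an open conjecture; the paper offers no proof of it, and explicitly presents the bound $\min\{12^n,\prod_{k\in S}(|A_k|+1)\}$ of Theorem~\ref{thm:exp} as the best it can establish. Your submission is likewise not a proof: it is a research plan whose central step is missing. The lifting argument in your second step is fine as far as it goes --- a per-state bound $q(n)$ on the number of times $\sigma^x(k)$ changes would indeed yield $\EventPoints(G)\le n\cdot q(n)$, since Lemma~\ref{lemma:potential} guarantees at least one strategy change per event point. But everything rests on the first step, and there you only name the obstacle without overcoming it. The ``concrete starting lemma'' (that the subgame of states routed to a given rate class admits only polynomially many equilibria along the sweep) is asserted as a target, not proved, and no mechanism is offered that would force it to hold; the paper's own $O(n^2)$ bound for the one-player case (second half of Theorem~\ref{thm:exp}) depends entirely on the monotonicity of $r(k,\sigma^x)$ in $x$, which is exactly the property that fails with two players. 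Counting distinct potential matrices that ``actually arise'' does not help by itself either: the lexicographic order $\prec$ only guarantees that no matrix repeats, and without an additional structural restriction the number of reachable matrices along a $\prec$-decreasing chain can still be exponential.

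You are candid that a ``genuinely new combinatorial invariant'' would be needed, and that is the correct diagnosis --- but it means the proposal, as written, establishes nothing beyond what Theorem~\ref{thm:exp} already gives. If you want to make progress, the restricted classes you mention at the end (acyclic SPTGs, or SPTGs where Player~2 controls $O(1)$ states) are a sensible place to start, since in both cases one can hope to recover enough monotonicity to run the one-player argument; but be aware that even a polynomial bound for those classes would not by itself resolve the conjecture, because the oscillation between the two players' best responses is precisely the phenomenon such restrictions suppress.
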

Note that if this conjecture is established, it implies that our
algorithm as well as the BCFL-ABM algorithm runs in time polynomial in
the size of its input. 

\section{Acknowledgements}
We would like to thank Kim Guldstrand Larsen for many helpful discussions and comments.

\bibliographystyle{abbrv}
\bibliography{../priced}

\end{document}